\documentclass[11pt,a4paper]{article}
\usepackage{authblk}
\usepackage[T1]{fontenc}
\usepackage{amsmath,amssymb,amsthm}
\usepackage{geometry}
\usepackage{enumitem}
\usepackage{xcolor}
\usepackage{hyperref}
\usepackage{authblk}

\numberwithin{equation}{section}

\theoremstyle{plain}
\newtheorem{theorem}{Theorem}[section]
\newtheorem{proposition}[theorem]{Proposition}
\newtheorem{lemma}[theorem]{Lemma}
\newtheorem{corollary}[theorem]{Corollary}
\theoremstyle{definition}
\newtheorem{definition}[theorem]{Definition}
\theoremstyle{remark}
\newtheorem{remark}[theorem]{Remark}

\newcommand{\Spine}{\mathrm{Spin}^{e}_{1,3}}

\newcommand{\SpineH}{\mathrm{Spin}^{e}_{H}}
\newcommand{\SOeH}{\mathrm{SO}^{e}_{H}}
\newcommand{\Cl}{\mathcal{C}\ell}
\newcommand{\End}{\operatorname{End}}
\newcommand{\HV}{H_V}
\newcommand{\HM}{H(M)}
\newcommand{\Dring}{\mathring D}
\newcommand{\gring}{{\mathring g}}
\newcommand{\wtens}{\,\widehat{\otimes}\,}
\newcommand{\RH}{\mathbb R^{H}}
\newcommand{\CRH}{\mathbb C\otimes\mathbb R^{H}}
\newcommand{\fH}{f_{H}}
\newcommand{\fcH}{\check f_{H}}
\newcommand{\IH}{I_H(M)}
\newcommand{\IcH}{I_{c,H}(M)}
\newcommand{\Pcan}{P^{\,\mathrm{can}}_{\SpineH}(M)}
\newcommand{\Ext}{{\textstyle\bigwedge}}

\begin{document}

\title{\textbf{Spin-Clifford Bundles and Spin Connections \\ on the Hyperbolic Clifford Algebra} \\ Part I}

\author{E. Notte-Cuello} 
\affil{Departamento de Matem\'aticas, Universidad de La Serena, Av. J. Cisternas 1200, Chile}

\author{J. M. Hoff da Silva} 
\affil{Departamento de F\'isica, Universidade Estadual Paulista, UNESP, Av. Dr. Ariberto Pereira da Cunha, 333, Guaratinguet\'a, SP, Brazil}

\date{}

\maketitle

\begin{abstract}
\noindent We develop the theory of spinor fields associated with the hyperbolic Clifford algebra $\RH=\Cl(\HV)$, which is associated with hyperbolic space  $\HV=V\oplus V^{*}$. Algebraically, we exhibit an explicit real primitive idempotent which, unlike the Lorentzian case, remains primitive after complexification and generates the ideal of Witten superfields. We also describe the chirality decomposition and the hyperbolic spin group. Geometrically, we construct a canonical hyperbolic spin structure on every oriented manifold, showing that no topological obstructions arise beyond orientability and we study the associated spinor bundles, spin-Clifford bundles and ideal spinor subbundles. Finally, we introduce the isotropic extension of an arbitrary linear connection to $TM\oplus T^*M$, which is always compatible with the hyperbolic metric, and the associated hyperbolic spin connection, whose well-definedness, compatibilities, curvature and explicit form are established, providing the geometric framework for Dirac operators on the hyperbolic Clifford algebra, which we will present in a forthcoming article.
\end{abstract}

\medskip
\noindent\textbf{Keywords:} hyperbolic Clifford algebra; mother algebra; spin structure; spin-Clifford bundle; algebraic spinor fields; spin connection.

\medskip
\noindent\textbf{MSC 2020:} 15A66, 53C27, 55R10.

\section{Introduction}

Clifford algebras and Clifford bundles provide a natural language for the geometry of spinor fields. In the Lorentzian setting, the systematic construction of the spin-Clifford bundles, of the bundles of algebraic spinor fields and of Dirac--Hestenes spinor fields as sections of appropriate bundles associated with the spin frame bundle was carried out by Mosna and Rodrigues \cite{MosnaRodrigues2004} and by Rodrigues \cite{Rodrigues2004}, and is presented in book form in \cite{RodriguesCapelas2016}. In that approach the Clifford bundle of differential forms $\Cl(M,g)$, the left and right spin-Clifford bundles $\Cl^{l}_{\Spine}(M,g)$, $\Cl^{r}_{\Spine}(M,g)$ and their ideal subbundles are carefully distinguished and the relations between them are made explicit. All these distinctions are the basis on which the different Dirac operators of the theory, and the corresponding field equations, are formulated.

The hyperbolic Clifford algebra $\Cl(\HV)$ of the hyperbolic space $\HV=V\oplus V^*$, endowed with the neutral bilinear form induced by the duality pairing, was studied by Rodrigues and de Souza \cite{RodriguesSouza2007} as the algebra of \emph{multivecfors}. It is often called the \emph{mother algebra}, since for every non-degenerate symmetric bilinear form $b$ on $V$ it is isomorphic to the graded tensor product $\Cl(V,b)\wtens\Cl(V,-b)$ and hence contains both the multivector and the multiform algebras of $(V,b)$. Moreover, it is also isomorphic to $\End(\Ext V)$ and its left minimal ideal $\RH\theta^*\simeq\Ext V^*$ consists of Witten superfields \cite{Witten,RodriguesSouza2007}. The corresponding bundle theory (the hyperbolic frame bundle, the hyperbolic Clifford bundle $\Cl(\HM)$, the hyperbolic Clifford--Lipschitz, Pin and Spin groups and the hyperbolic spin structures) was developed in \cite{NotteCuello2026}, where the question of possible topological obstructions to the existence of hyperbolic spin structures was left open. The answer, namely that there are none, follows from Karoubi's characteristic-class criterion for $\mathrm{Spin}(p,q)$-structures \cite[\S1.1, pp.~173--175]{Karoubi1968}, as observed for $TM\oplus T^*M$ in generalized geometry \cite[\S\S2.3 and 2.8, pp.~10--17]{Gualtieri2004}. This result was also shown in Ref. \cite{nov} by more elementary means. 

The purpose of the present paper is to provide in a self-contained way the whole algebraic and geometric scaffolding needed to formulate Dirac operators on the hyperbolic Clifford algebra. The paper is organized as follows. Section~\ref{sec:algebra} recalls the hyperbolic space $\HV$, the isotropic extension $A\mapsto A\oplus(A^{-1})^*$ of $\mathrm{GL}(V)$, the hyperbolic Clifford algebra $\RH$ and its structure as the mother algebra $\RH\simeq M_{2^n}(\mathbb R)$. Section~\ref{sec:alg-spinors} studies algebraic hyperbolic spinors. We give an explicit primitive idempotent $\fH=\prod_k\tfrac12 e_k\theta^k=\prod_k\tfrac12(1+\sigma_k\sigma_{n+k})$ of $\RH$. In contrast with the Lorentzian idempotent $\tfrac12(1+\mathbf E^0)\tfrac12(1+i\mathbf E^2\mathbf E^1)$, $\fH$ is real and remains primitive in $\CRH$. It generates the ideal $\RH\fH=\RH\theta^*\simeq\Ext V^*$; note that the volume element $\theta^*$ of $V^*$, which is sometimes regarded as the generator of this ideal, is nilpotent and not idempotent. We also determine all primitive idempotents, the chirality decomposition, the commutant of the spinor module. The invariant spinor pairing is introduced in Section~\ref{sec:groups}, which also recalls the hyperbolic Clifford--Lipschitz, Pin and Spin groups of \cite{NotteCuello2026} and adds the Lie-theoretic results needed later. We identify the Lie algebra of $\SpineH$ with $\Ext^2\HV$ and construct a canonical lift $\tilde\jmath:\mathrm{GL}^+(V)\to\SpineH$ of the isotropic extension $A\mapsto A\oplus(A^{-1})^*$. The invariance of the spinor pairing under $\SpineH$ is also established here\footnote{All the finite-dimensional algebraic identities of Sections~\ref{sec:alg-spinors} and~\ref{sec:groups} (Witt relations, idempotency and primitivity of $\fH$, the ideal identifications, the chirality formula, the lift of $\mathrm{GL}^+(V)$, and the spinor pairing and its signature) have been checked by direct computer algebra in $\Cl(\HV)\simeq M_{16}(\mathbb R)$ for $n=4$.}. Section~\ref{sec:spinstructure} deals with the hyperbolic Clifford bundle $\Cl(\HM)$, its canonical global sections and hyperbolic spin structures. We show that every oriented manifold carries a canonical hyperbolic spin structure $\Pcan$, giving an explicit construction, in the present formalism, of the spin structures whose existence follows from Karoubi's criterion \cite[\S1.1, pp.~173--175]{Karoubi1968} (see also \cite[\S\S2.3 and 2.8, pp.~10--17]{Gualtieri2004}), and any other hyperbolic spin structure differs from it by a class in $H^1(M;\mathbb Z_2)$.

In Section~\ref{sec:bundles} we study the hyperbolic spinor and spin-Clifford bundles and their ideal subbundles. We define the left and right hyperbolic spinor bundles (real and complex), the left and right hyperbolic spin-Clifford bundles, prove that they are module bundles over $\Cl(\HM)$ and describe their ideal spinor subbundles. For the canonical spin structure these are realized as $\Ext T^*M\otimes(\Ext^nTM)^{1/2}$, in agreement with the description of spinors of $TM\oplus T^*M$ in generalized geometry \cite{Gualtieri2004}. In Section~\ref{sec:connections} we show that the isotropic extension $\nabla^H$ of an \emph{arbitrary} linear connection $\nabla$ on $M$ is always compatible with the hyperbolic metric and with the Clifford product; the canonical idempotent $\fH$ and the volume element of $\HM$ are parallel for it. We define the hyperbolic spin connection induced by $\nabla^H$ and establish its well-definedness, its compatibility with all the structures above, its curvature and its explicit form. Finally, Section~\ref{sec:conclusions} contains the conclusions.

\section{The hyperbolic Clifford algebra}\label{sec:algebra}

We start by recalling the basic mathematical tools related to the hyperbolic Clifford algebra. Let $V$ be a real vector space of dimension\footnote{The assumption $n\geq3$ is used only to have $\pi_1(\mathrm{GL}^+(V))\simeq\mathbb Z_2$ in the proof of Proposition~\ref{prop:lift}. For $n=2$ one has $\pi_1(\mathrm{GL}^+(V))\simeq\mathbb Z$, generated by the same rotation, and the computation $\exp(2\pi X_a)=1$ in that proof shows that the generator still lifts to a closed loop. Hence Proposition~\ref{prop:lift} and Theorem~\ref{thm:canonical} hold for $n=2$ as well (for $n=1$, $\mathrm{GL}^+(V)$ is contractible and the lift exists trivially).} $n\geq 3$ (the case of physical interest being $V=\mathbb R^{1,3}$, $n=4$), possibly equipped with a non-degenerate symmetric bilinear form $b$. Following \cite[\S 2.1]{NotteCuello2026}, we denote by
$$\HV=(V\oplus V^{*},\langle\,,\,\rangle)$$
the \emph{hyperbolic space} associated with $V$, with the non-degenerate symmetric bilinear form
\begin{equation}\label{eq:hypform}
\langle \mathbf x,\mathbf y\rangle=x^{*}(y_{*})+y^{*}(x_{*}),\qquad \mathbf x=x_*\oplus x^{*},\ \mathbf y=y_*\oplus y^{*}\in \HV,
\end{equation}
which has neutral signature $(n,n)$ and does not depend on $b$. The elements of $\HV$ are called \emph{vecfors}. The subspaces $V$ and $V^{*}$, identified in $\HV$ by $x_*\mapsto x_*\oplus 0$ and $x^{*}\mapsto 0\oplus x^{*}$, are maximal totally isotropic subspaces of $\HV$ \cite[\S 2.1]{NotteCuello2026}. Given a basis $\{e_1,\dots,e_n\}$ of $V$ and its dual basis $\{\theta^1,\dots,\theta^n\}$ of $V^{*}$, the basis $\{e_1,\dots,e_n,\theta^1,\dots,\theta^n\}$ of $\HV$ is a \emph{Witt basis}, and the matrix of $\langle\,,\,\rangle$ in it is
$$\begin{pmatrix}0 & I_n\\ I_n & 0\end{pmatrix},$$
which explicitly exhibits the neutrality of the form \cite[Rem.\ 1]{NotteCuello2026}. An orthonormal basis $\{\sigma_1,\dots,\sigma_{2n}\}$ of $\HV$ is obtained from the Witt basis by
\begin{equation}\label{eq:orthonormal}
\sigma_k=\tfrac{1}{\sqrt2}(e_k\oplus\theta^k),\qquad \sigma_{n+k}=\tfrac{1}{\sqrt2}(-e_k\oplus\theta^k),\qquad k=1,\dots,n,
\end{equation}
satisfying $\langle\sigma_k,\sigma_l\rangle=\delta_{kl}$, $\langle\sigma_{n+k},\sigma_{n+l}\rangle=-\delta_{kl}$, $\langle\sigma_k,\sigma_{n+l}\rangle=0$ \cite{RodriguesSouza2007}. Conversely,
\begin{equation}\label{eq:witt-from-orth}
e_k=\tfrac1{\sqrt2}(\sigma_k-\sigma_{n+k}),\qquad \theta^k=\tfrac1{\sqrt2}(\sigma_k+\sigma_{n+k}).
\end{equation}

For $A\in\mathrm{GL}(V)$ we put
\begin{equation}\label{eq:j}
j(A):=A\oplus(A^{-1})^*\in\mathrm{GL}(\HV),
\end{equation}
the isotropic extension\footnote{Explicitly, for $x_*\in V$ and $x^*\in V^*$,
\[
  j(A)(x_*\oplus x^*)=Ax_*\,\oplus\,x^*\circ A^{-1}:
\]
$A$ acts on vectors, and the \emph{inverse transpose} $(A^{-1})^*:x^*\mapsto x^*\circ A^{-1}$ acts on covectors. In the Witt basis $\{e_1,\dots,e_n,\theta^1,\dots,\theta^n\}$ the matrix of $j(A)$ is block diagonal,
\[
  [j(A)]=\begin{pmatrix}[A]&0\\[2pt]0&([A]^{-1})^{T}\end{pmatrix},
\]
where $[A]$ is the matrix of $A$ in the basis $\{e_k\}$.} of $A$ (cf.\ the isotropic extension $I(\phi)=\phi\oplus\phi^*$ of an endomorphism in \cite[\S 3.1]{RodriguesSouza2007}). Since $\langle\,,\,\rangle$ is built from the duality pairing only, $j(A)$ is an isometry of $\HV$; $j$ is a group homomorphism, $\det j(A)=1$, and $j$ maps the Witt basis associated with $\{e_k\}$ to the Witt basis associated with $\{Ae_k\}$. Its differential is $j_*(a)=a\oplus(-a^*)$, $a\in\mathfrak{gl}(V)$. Since $\mathrm{GL}^+(V)$ is connected, $j(\mathrm{GL}^+(V))$ is contained in the identity component $\SOeH$ of the orthogonal group $O_H$ of $\HV$.

The hyperbolic Clifford algebra is
$$\RH:=\Cl(\HV)=\Cl(\HV,\langle\,,\,\rangle),$$
of dimension $2^{2n}$ (for $n=4$, $\dim\RH=2^{8}=256$, versus $\dim \mathbb R_{1,3}=2^4=16$ in the classical case). The notation $\mathbb R^{H}_{1,3}$ is also used in the case $V=\mathbb R^{1,3}$. Since $\langle\,,\,\rangle$ does not depend on $b$, all statements below hold for arbitrary $n$. The fundamental anticommutation relation is
\begin{equation}\label{eq:anticomm}
\mathbf x\mathbf y+\mathbf y\mathbf x=2\langle\mathbf x,\mathbf y\rangle,\qquad \mathbf x,\mathbf y\in \HV,
\end{equation}
analogous to $\theta^a\theta^b+\theta^b\theta^a=2\eta^{ab}$ in the Lorentzian case. The Clifford product of $\mathbf x\in\HV$ by $u\in\Ext\HV$ is $\mathbf xu=\mathbf x\lrcorner u+\mathbf x\wedge u$, with $\lrcorner$ the left contraction induced by $\langle\,,\,\rangle$ \cite{RodriguesSouza2007}. We denote by $\hat u$ the grade involution, by $\tilde u$ the reversion and by $\bar u=\hat{\tilde u}$ the conjugation of $u\in\RH$; $(\RH)^{(0)}$ is the even subalgebra, and $\mathbb R^{H\star}$ the group of invertible elements.

In a Witt basis, \eqref{eq:anticomm} reads
\begin{equation}\label{eq:witt}
e_ke_l=-e_le_k,\qquad \theta^k\theta^l=-\theta^l\theta^k,\qquad e_k\theta^l+\theta^le_k=2\delta_k^l .
\end{equation}
In particular the Clifford product of elements of $V$ (resp.\ $V^*$) coincides with their exterior product, and $\Ext V$, $\Ext V^*$ are subalgebras of $\RH$. We write
$$e_*:=e_1e_2\cdots e_n=e_1\wedge\cdots\wedge e_n,\qquad \theta^*:=\theta^1\theta^2\cdots\theta^n=\theta^1\wedge\cdots\wedge\theta^n .$$
Note that $(e_*)^2=0=(\theta^*)^2$ by \eqref{eq:witt}. Every $x\in\RH$ is a linear combination of the monomials $\theta^Je_I$, $I,J\subset\{1,\dots,n\}$ ordered. For $x_*\in V$ we denote by $i_{x_*}$ the interior product on $\Ext V^*$. For $\gamma\in\Ext V^*$ we write $\gamma=\sum_{r=0}^n\gamma_{[r]}$, with $\gamma_{[r]}\in\bigwedge^rV^*$ its homogeneous component of degree $r$; in particular $\gamma_{[n]}\in\bigwedge^nV^*=\mathbb R\theta^*$.

Finally, the \emph{volume element} of $\HV$ is
\begin{equation}\label{eq:volume}
\varpi:=\sigma_1\sigma_2\cdots\sigma_{2n}\in\Ext^{2n}\HV .
\end{equation}
It does not depend on the choice of the basis $\{e_k\}$ of $V$ used in \eqref{eq:orthonormal}: a change of basis $e\mapsto eA$ changes the orthonormal basis by $j(A)$, and $\det j(A)=1$. Its basic properties are given in Proposition~\ref{prop:chir}.

With these definitions, we review the following result of Rodrigues and de Souza, which is the structural fact behind the name ``mother algebra''.

\begin{proposition}[\cite{RodriguesSouza2007}]\label{prop:tensor-iso}
There is a natural isomorphism $\Cl(\HV)\simeq\End(\Ext V)$ and, for every non-degenerate symmetric bilinear form $b$ on $V$,
\begin{equation}\label{eq:tensor-iso}
\Cl(\HV)\;\simeq\;\Cl(H_{bV})\;\simeq\;\Cl(V,b)\wtens\Cl(V,-b),
\end{equation}
induced by the Clifford map $x_*\oplus x^*\mapsto x_+\wtens 1+1\wtens x_-$, with $x_\pm=\tfrac1{\sqrt2}(b^*x^*\pm x_*)$, where $\wtens$ denotes the graded tensor product and $b^*:V^*\to V$ is the isomorphism induced by $b$.
\end{proposition}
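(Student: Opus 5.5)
Both isomorphisms will come from the universal property of Clifford algebras, followed by a dimension count and simplicity.

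\emph{First isomorphism.} Define $\rho:\HV\to\End(\Ext V)$ by $\rho(x_*\oplus x^*)(u)=x_*\wedge u+2\,i_{x^*}u$. Here $i_{x^*}$ is the interior product of $x^*$ on $\Ext V$, normalised so that $i_{x^*}y_*=x^*(y_*)$. Since $x_*\wedge$ and $i_{x^*}$ are each nilpotent of order two, and since $x_*\wedge i_{y^*}+i_{y^*}x_*\wedge=y^*(x_*)$, we get
\[
\rho(\mathbf x)^2=2\,x^*(x_*)\,\mathrm{id}=\langle\mathbf x,\mathbf x\rangle\,\mathrm{id}.
\]
Hence $\rho$ extends to an algebra morphism $\Cl(\HV)\to\End(\Ext V)$. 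This is consistent with the Witt relations \eqref{eq:witt}: $e_k$ acts by wedging and $\theta^l$ by $2\,i_{\theta^l}$.

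Since $\HV$ is non-degenerate of even dimension $2n$, the algebra $\Cl(\HV)$ is central simple, so the morphism is injective. Both sides have dimension $2^{2n}$, so it is bijective. Every ingredient here (wedge and contraction by the duality pairing) is basis-free, which gives naturality: the isomorphism is equivariant under $\mathrm{GL}(V)$ acting through $j$ on $\HV$ and by $\Ext A$ on $\Ext V$.

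\emph{Second isomorphism.} Fix $b$ and put $\phi(x_*\oplus x^*)=x_+\wtens1+1\wtens x_-$, with $x_\pm=\tfrac1{\sqrt2}(b^*x^*\pm x_*)$. In the graded tensor product the cross terms anticommute, because both factors are odd. Therefore
\[
\phi(\mathbf x)^2=b(x_+,x_+)-b(x_-,x_-).
\]
Expanding, and using $b(b^*x^*,x_*)=x^*(x_*)$, this equals $\tfrac12\cdot4\,x^*(x_*)=\langle\mathbf x,\mathbf x\rangle$. So $\phi$ extends to an algebra map on $\Cl(\HV)$.

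\begin{itemize}[leftmargin=*]
\item \emph{Injectivity:} again by simplicity of $\Cl(\HV)$.
\item \emph{Surjectivity:} both sides have dimension $2^n\cdot2^n$.
\end{itemize}

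The intermediate algebra $\Cl(H_{bV})$ is $V\oplus V$ with the form $b\oplus(-b)$. The map $x_*\oplus x^*\mapsto(x_+,x_-)$ is an isometry $\HV\to H_{bV}$ by the same computation. It therefore induces $\Cl(\HV)\simeq\Cl(H_{bV})$. The standard fact $\Cl(W_1\perp W_2)\simeq\Cl(W_1)\wtens\Cl(W_2)$, itself proved by the same universal-property argument, then finishes the chain.

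\emph{Main obstacle.} The only delicate point is bookkeeping: the factor $2$ in the contraction, and the sign conventions of the graded tensor product, so that the quadratic forms match exactly. Everything else is formal once simplicity of even-dimensional non-degenerate Clifford algebras is invoked.
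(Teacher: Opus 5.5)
Your proof is correct. The paper itself gives no proof of this proposition. It quotes the result from \cite{RodriguesSouza2007} and then uses the first isomorphism in Proposition~\ref{prop:realization} to obtain $\RH\simeq M_{2^n}(\mathbb R)$. Your argument is therefore a genuinely different, self-contained route, and the bookkeeping you flag is done right:
\begin{itemize}
\item the factor $2$ in $\rho(\mathbf x)=x_*\wedge{}+2\,i_{x^*}$ is exactly what reproduces \eqref{eq:witt};
\item the cross terms cancel in $\wtens$ because both factors are odd;
\item $b(x_+,x_+)-b(x_-,x_-)=2\,b(b^*x^*,x_*)=2x^*(x_*)=\langle\mathbf x,\mathbf x\rangle$.
\end{itemize}
The paper never defines $H_{bV}$, and your reading of it as $(V,b)\perp(V,-b)$ is the natural one. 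Your module $\Ext V$, with vectors wedging and covectors contracting, is the mirror image of the paper's Fock module $\RH\fH\simeq\Ext V^*$ of Theorem~\ref{thm:fH}(v). It is the one the statement asks for.

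The only external ingredient you use is the theorem that Clifford algebras of even-dimensional non-degenerate spaces are simple. This is not circular. It is, however, the kind of structural fact the paper wants to \emph{derive} from this proposition (Proposition~\ref{prop:realization}), and you can avoid it.

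For the second isomorphism, surjectivity is immediate. The map $\mathbf x\mapsto(x_+,x_-)$ is a bijection $\HV\to V\oplus V$, so the image of $\phi$ contains every $v\wtens1$ and every $1\wtens w$. These generate $\Cl(V,b)\wtens\Cl(V,-b)$, and equal dimensions then give bijectivity.

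For the first isomorphism, use the idempotents of Theorem~\ref{thm:fH}. The operator $\tfrac12\rho(\theta^ke_k)$ is the projection onto the monomials not containing $e_k$, so $\rho(\fcH)$ is the projection of $\Ext V$ onto $\Ext^0V$. Consequently $\rho(e_I)\,\rho(\fcH)\,\rho(\theta^J)$ sends $e_J\mapsto\pm2^{|J|}e_I$ and kills all other basis monomials. These are, up to nonzero factors, all the matrix units, so $\rho$ is surjective and hence bijective. Simplicity of $\Cl(\HV)$ and Proposition~\ref{prop:realization} then come out as corollaries rather than inputs.
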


Thus $\Cl(\HV)$ simultaneously contains $\Cl(V,b)$ and $\Cl(V,-b)$, and allows multivectors and multiforms to be represented simultaneously. The first isomorphism in Proposition~\ref{prop:tensor-iso} gives an explicit matrix realization of $\RH$ and of its complexification. This is the hyperbolic counterpart of the identification $\mathbb C\otimes\mathbb R_{1,3}\simeq\mathbb R_{4,1}\simeq\mathbb C(4)$ used for complex spinor bundles in the Lorentzian case \cite[Def.\ 13]{MosnaRodrigues2004}.

\begin{proposition}[Matrix realization and complexification]\label{prop:realization}
$\RH$ is isomorphic, as a real algebra, to the full matrix algebra $M_{2^n}(\mathbb R)$ (for $n=4$, $M_{16}(\mathbb R)$) and its complexification is
$$\CRH\;\simeq\;M_{2^n}(\mathbb C)\qquad(n=4:\ M_{16}(\mathbb C)).$$
\end{proposition}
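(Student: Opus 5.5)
The plan is to use the first isomorphism of Proposition~\ref{prop:tensor-iso}, $\Cl(\HV)\simeq\End(\Ext V)$, and to make it explicit through the fermionic (Fock) representation. Define $\rho:\HV\to\End(\Ext V)$ by
$$\rho(x_*\oplus x^*)(\omega)=x_*\wedge\omega+i_{x^*}\omega,$$
where $i_{x^*}$ denotes the interior product of $x^*\in V^*$ on $\Ext V$. Since $x_*\wedge$ and $i_{x^*}$ are both odd derivations/antiderivations, the standard identities are $(x_*\wedge)^2=0$, $i_{x^*}^2=0$ and $x_*\wedge i_{x^*}+i_{x^*}x_*\wedge=x^*(x_*)$. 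These give $\rho(\mathbf x)^2=x^*(x_*)\,\mathrm{id}=\langle\mathbf x,\mathbf x\rangle\,\mathrm{id}$, in agreement with \eqref{eq:hypform} and \eqref{eq:anticomm}. (If a factor of $2$ is needed to match $e_k\theta^l+\theta^le_k=2\delta^l_k$ in \eqref{eq:witt}, use $2\,i_{x^*}$ instead.) By the universal property, $\rho$ extends to an algebra homomorphism $\rho:\RH\to\End(\Ext V)$.

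Next I show $\rho$ is bijective. The two sides have equal dimension: $\dim\RH=2^{2n}$ and $\dim\End(\Ext V)=(2^n)^2$. So it suffices to prove injectivity or surjectivity. The cleanest argument is simplicity: $\RH$ is the Clifford algebra of a nondegenerate even-dimensional space, so it is central simple and every nonzero homomorphism from it is injective. Alternatively, surjectivity can be checked directly. Using the Witt basis, the monomials $e_I\,\fH\,\theta^J$-type products give all elementary matrices. Concretely, $\rho(\theta^J e_I)$ maps the basis vector $e_K$ to $\pm e_{K'}$ in a triangular pattern, and combining with $\rho(\fH)=$ projection onto $\Ext^0V$ yields every rank-one map $e_K\mapsto e_I$. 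The main obstacle is this step: carrying out the surjectivity/injectivity bookkeeping with signs, or justifying simplicity without circularity. I would first try the simplicity argument, since it avoids the sign bookkeeping entirely.

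Choosing the basis $\{e_I\}$ of $\Ext V$, which has $2^n$ elements, identifies $\End(\Ext V)$ with $M_{2^n}(\mathbb R)$. This gives $\RH\simeq M_{2^n}(\mathbb R)$. Finally, complexification commutes with matrix algebras:
$$\CRH=\mathbb C\otimes_{\mathbb R}\RH\simeq\mathbb C\otimes_{\mathbb R}M_{2^n}(\mathbb R)\simeq M_{2^n}(\mathbb C).$$
Here the last isomorphism is $z\otimes A\mapsto zA$, which is an algebra map because $\mathbb C$ is central, and a bijection because it takes the basis $1\otimes E_{ij}$ to $E_{ij}$. Setting $n=4$ gives $M_{16}(\mathbb R)$ and $M_{16}(\mathbb C)$.
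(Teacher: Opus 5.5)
Your argument is correct and has the same skeleton as the paper's: $\RH\simeq\End(\Ext V)\simeq M_{2^n}(\mathbb R)$, followed by complexification. Your map $z\otimes A\mapsto zA$ is the same observation as the paper's $\mathbb C\otimes\End(\Ext V)\simeq\End_{\mathbb C}(\mathbb C\otimes\Ext V)$.

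The difference is in the first isomorphism. The paper takes $\Cl(\HV)\simeq\End(\Ext V)$ directly from Proposition~\ref{prop:tensor-iso} (i.e.\ from Rodrigues--de Souza) and only checks that the dimensions agree. You construct it as the Fock representation on $\Ext V$ and get bijectivity from two facts: the Clifford algebra of a non-degenerate even-dimensional space is simple, and $\dim\RH=2^{2n}=\dim\End(\Ext V)$. Your proof is therefore self-contained up to a standard structure theorem, which does not presuppose the matrix realization, so there is no circularity. The paper's proof is shorter but rests entirely on the citation.

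Two details need correcting.

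First, by \eqref{eq:hypform} one has $\langle\mathbf x,\mathbf x\rangle=2x^*(x_*)$, so $x_*\wedge(\cdot)+i_{x^*}$ squares to $\tfrac12\langle\mathbf x,\mathbf x\rangle$, not to $\langle\mathbf x,\mathbf x\rangle$. The factor in your parenthetical is not optional: you must take $\rho(x_*\oplus x^*)=x_*\wedge(\cdot)+2\,i_{x^*}$, in line with the factor $2$ in \eqref{eq:fock}.

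Second, your alternative surjectivity sketch misidentifies $\rho(\fH)$. With this $\rho$ one gets $\rho(\tfrac12 e_k\theta^k)=e_k\wedge i_{\theta^k}$, the projection onto the span of the $e_K$ with $k\in K$. Hence $\rho(\fH)$ is the projection onto $\Ext^nV=\mathbb R e_*$, as it must be since $e_k\fH=0$ for all $k$. The projection onto $\Ext^0V$ is $\rho(\fcH)$ instead. With $\fcH$ in place of $\fH$ the sketch works: $\rho(e_I\fcH\theta^K)$ sends $e_K$ to $\pm2^{|K|}e_I$ and annihilates every other basis element $e_{K'}$. So every elementary matrix lies in the image.
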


\begin{proof}
By Proposition~\ref{prop:tensor-iso}, $\Cl(\HV)\simeq\End(\Ext V)$. Since $\dim_{\mathbb R} V=n$, $\Ext V$ has real dimension $2^n$, so $\End(\Ext V)\simeq M_{2^n}(\mathbb R)$, of real dimension $2^{2n}=\dim_{\mathbb R}\RH$. This reflects the fact that $\HV$ carries a hyperbolic (maximally split) bilinear form: the Clifford algebra of a neutral form on a $2n$-dimensional space is the full real matrix algebra of rank $2^n$, the maximal isotropic subspace $V$ providing the carrier space $\Ext V$ of a simple module. Complexifying, and using that extension of scalars commutes with taking endomorphism algebras of finite-dimensional vector spaces,
$$\CRH\;\simeq\;\mathbb C\otimes\End(\Ext V)\;\simeq\;\End_{\mathbb C}(\mathbb C\otimes\Ext V)\;\simeq\;M_{2^n}(\mathbb C).$$ 
\end{proof}

\section{Algebraic hyperbolic spinors}\label{sec:alg-spinors}

In the Lorentzian algebra $\mathbb R_{1,3}$ the spinor spaces are left minimal ideals $\mathbb R_{1,3}\tfrac12(1+\mathbf E^0)$ generated by primitive idempotents \cite[Eqs.\ 17--19]{MosnaRodrigues2004}. In this section we describe the corresponding objects in $\RH$. It is known that the dual exterior algebra is isomorphic, as a left $\RH$-module, to a left minimal ideal of $\RH$, $\RH\,\theta^{*}=\Ext V^*\fH\simeq\Ext V^*$ \cite[\S 6]{RodriguesSouza2007}. Its elements are Witten superfields \cite{Witten}. Since $(\theta^*)^2=0$, however, $\theta^*$ is nilpotent and cannot be the idempotent generating this ideal. The next theorem exhibits the correct generator.

\begin{theorem}[Explicit primitive idempotent]\label{thm:fH}
Let $\{e_k,\theta^k\}$ be a Witt basis, $f_k:=\tfrac12e_k\theta^k$ $(k=1,\dots,n)$ and
$$\fH:=f_1f_2\cdots f_n .$$
Then:
\begin{enumerate}[label=(\roman*)]
\item the $f_k$ are pairwise commuting idempotents, hence $\fH^2=\fH$;
\item $\displaystyle \fH=2^{-n}e_*\tilde\theta^*=2^{-n}(-1)^{n(n-1)/2}e_*\theta^*=\prod_{k=1}^n\tfrac12\big(1+\sigma_k\sigma_{n+k}\big)$, with $(\sigma_k\sigma_{n+k})^2=1$ and $[\sigma_k\sigma_{n+k},\sigma_l\sigma_{n+l}]=0$; moreover $\fH$ does not depend on the choice of the basis of $V$: $j(A)_\#\fH=\fH$ for all $A\in\mathrm{GL}(V)$, where $j(A)_\#$ is the automorphism of $\RH$ extending $j(A)$;
\item $e_k\fH=0=\fH\theta^k$ for all $k$, $\theta^*\fH=\theta^*$ and $\fH e_*=e_*$;
\item $\fH\,\RH\,\fH=\mathbb R\fH$ and $\fH(\CRH)\fH=\mathbb C\fH$; hence $\fH$ is a primitive idempotent of $\RH$ \emph{and} of $\CRH$;
\item $\RH\fH=\RH\theta^*=\{\alpha\fH:\alpha\in\Ext V^*\}$, the map $\Ext V^*\to\RH\fH$, $\alpha\mapsto\alpha\fH$, is a linear isomorphism, and
\begin{equation}\label{eq:fock}
\theta^k(\alpha\fH)=(\theta^k\wedge\alpha)\fH,\qquad e_k(\alpha\fH)=2\,(i_{e_k}\alpha)\fH ;
\end{equation}
\item $\fH\RH=e_*\RH=\{\fH\lambda:\lambda\in\Ext V\}$, and $\lambda\mapsto\fH\lambda$ is a linear isomorphism $\Ext V\to\fH\RH$;
\item $\tilde f_H=\fcH:=\prod_k\tfrac12\theta^ke_k$, $f_k+\tfrac12\theta^ke_k=1$, and for every $\gamma\in\Ext V^*$, $\fcH\gamma\fH=\gamma_{[n]}$, the component of $\gamma$ in $\Ext^nV^*=\mathbb R\theta^*$. Consequently $\fcH x\fH\in\mathbb R\theta^*$ for all $x\in\RH$.
\end{enumerate}
\end{theorem}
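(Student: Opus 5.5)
The whole argument rests on the Witt relations \eqref{eq:witt}. The key observation is that, for $k\neq l$, $e_k$ and $\theta^k$ each anticommute with $e_l$ and $\theta^l$, so the bivector $e_k\theta^k$ commutes with every generator of index $l\neq k$.

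For (i) I compute $f_k^2=\tfrac14 e_k\theta^ke_k\theta^k$. Using $\theta^ke_k=2-e_k\theta^k$ and $e_k^2=0$ gives $e_k\theta^ke_k=2e_k$, hence $f_k^2=\tfrac14\cdot 2e_k\theta^k=f_k$. Commutativity of the $f_k$ follows from the observation above, and then $\fH^2=\fH$.

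For (ii) I commute all the $e$'s to the left. Moving $e_2$ past $\theta^1$, then $e_3$ past $\theta^1,\theta^2$, and so on, costs the sign $(-1)^{n(n-1)/2}$. This gives $\prod_k e_k\theta^k=(-1)^{n(n-1)/2}e_*\theta^*=e_*\tilde\theta^*$. Substituting \eqref{eq:witt-from-orth} yields $e_k\theta^k=\tfrac12(\sigma_k-\sigma_{n+k})(\sigma_k+\sigma_{n+k})=1+\sigma_k\sigma_{n+k}$. The relations $(\sigma_k\sigma_{n+k})^2=-\sigma_k^2\sigma_{n+k}^2=1$ and the commutation claims are direct. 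For basis independence, $j(A)_\#$ maps $e_*\mapsto(\det A)e_*$ and $\theta^*\mapsto(\det A)^{-1}\theta^*$, so $e_*\theta^*$ is invariant.

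For (iii) I use $e_kf_k=\tfrac12e_k^2\theta^k=0$ and $f_k\theta^k=0$, together with commutativity of the $f_l$ to bring $f_k$ to the front or back. The identities $\theta^*\fH=\theta^*$ and $\fH e_*=e_*$ follow by peeling off one factor at a time: $\theta^kf_k=\tfrac12\theta^ke_k\theta^k=\theta^k$, since $\theta^ke_k\theta^k=2\theta^k$ by the dual computation. The factor $\theta^k$ is moved adjacent to $f_k$ because it commutes with $f_l$ for $l\neq k$.

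For (v) I expand $x\in\RH$ in the monomials $\theta^Je_I$. By (iii), $e_I\fH=0$ unless $I=\emptyset$, so $\RH\fH=\Ext V^*\fH$. For the Fock formulas \eqref{eq:fock}, the $\theta^k$ rule is trivial. For the $e_k$ rule, $e_k$ acts on $\alpha$ as $\theta\mapsto 2\,i_{e_k}$ plus a term with $e_k$ on the right; this is the graded derivation given by \eqref{eq:witt}, and that term is killed by $\fH$. Injectivity of $\alpha\mapsto\alpha\fH$ follows by applying \eqref{eq:fock} iteratively: if $\alpha\fH=0$ with lowest-degree component $\alpha_{[r]}\neq0$, contract with a suitable $e_I$ to reach a nonzero multiple of $\fH$. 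Here $\fH\neq0$ because it is a rank-one projector, or alternatively by dimension count via Proposition~\ref{prop:realization}. The equality $\RH\fH=\RH\theta^*$ then uses $\theta^*=\theta^*\fH$ and $\fH=2^{-n}(\pm)e_*\theta^*$. Part (vi) is the mirror statement, obtained by reversion, noting that $\tilde f_k=\tfrac12\theta^ke_k$.

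For (vii) I note that $\tfrac12\theta^ke_k=1-f_k$ by \eqref{eq:witt}. For $\gamma=\theta^J$, I apply $\fcH$ on the left of $\theta^J\fH$. If some $k\notin J$, then $\fcH$ contains the factor $\tfrac12\theta^ke_k$; commuting it through gives $\theta^ke_k\theta^J\fH=\pm\theta^J\theta^ke_k\fH=0$. If $J$ is everything, then $\fcH\theta^*\fH=\fcH\theta^*$, and $\tfrac12\theta^ke_k\theta^k=\theta^k$ gives $\fcH\theta^*=\theta^*$. The consequence for general $x$ follows from (v).

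The main obstacle is (iv). Write $x\fH=\alpha\fH$ by (v); it remains to show $\fH\alpha\fH\in\mathbb R\fH$. From (vii), $\fH=\tilde\fcH$-type identities give $\fH\theta^J\fH=0$ unless $J=\emptyset$, using $f_k\theta^k=0$ after commutation. Hence $\fH\alpha\fH=\alpha_{[0]}\fH$. Complexification is identical over $\mathbb C$, and primitivity follows.
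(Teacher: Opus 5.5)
Your treatment of (i)--(v) and (vii) follows the paper's route: the Witt relations, the commuting factors $f_k$, the expansion in the monomials $\theta^Je_I$, and the Fock formulas. However, two steps do not work as written.

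\textbf{Injectivity in (v).} Take $r$ to be the \emph{lowest} degree with $\alpha_{[r]}\neq0$, and let $\theta^I$ ($|I|=r$) occur in $\alpha_{[r]}$ with coefficient $c_I\neq0$. Iterating \eqref{eq:fock} then gives $e_I(\alpha\fH)=2^r(\pm c_I+\beta)\fH$, where $\beta\in\bigoplus_{s>0}\Ext^sV^*$ comes from the higher-degree components of $\alpha$. This is not a multiple of $\fH$, so the argument stops there. There are two fixes:
\begin{itemize}
\item Use the \emph{highest} nonzero degree $s$ instead. Then $i_{e_I}$ annihilates every lower-degree component and you get exactly $\pm2^sc_I\fH$.
\item Do what the paper does: keep a term $\theta^{J_0}$ of minimal degree and multiply on the left by $\theta^{K}$, $K=J_0^c$. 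This kills every other term and gives $\theta^K\alpha\fH=\pm c_{J_0}\theta^*\fH=\pm c_{J_0}\theta^*\neq0$ by (iii). This route does not even need $\fH\neq0$ separately.
\end{itemize}
Relatedly, justifying $\fH\neq0$ by ``rank-one projector'' is circular: rank one is what (iv) establishes, and primitivity already presupposes $\fH\neq0$. The direct reason is that the scalar part of $\prod_k\tfrac12(1+\sigma_k\sigma_{n+k})$ is $2^{-n}$.

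\textbf{Part (vi).} This does not follow from (v) by reversion alone. Reversion is an anti-automorphism that maps $\Ext V^*$ to itself and, as you note, sends $f_k$ to $\tfrac12\theta^ke_k$, hence $\fH$ to $\fcH$. Applied to (v), it yields $\fcH\RH=\{\fcH\gamma:\gamma\in\Ext V^*\}$. That is a statement about $\fcH$ and $\Ext V^*$, not about $\fH$ and $\Ext V$.

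The missing ingredient is the isometry $\tau$ of $\HV$ exchanging $e_k\leftrightarrow\theta^k$. Its extension $\tau_\#$ maps $\fcH\mapsto\fH$ and $\Ext V^*\to\Ext V$. Hence the anti-automorphism $x\mapsto\tau_\#(\tilde x)$ fixes $\fH$ and sends $\alpha\fH\mapsto\fH\,\tau_\#(\tilde\alpha)$. It therefore carries $\RH\fH$ bijectively onto $\fH\RH$, and $\alpha\mapsto\tau_\#(\tilde\alpha)$ is a linear bijection $\Ext V^*\to\Ext V$. This is exactly the paper's argument. Alternatively, you can redo (iii) and (v) with left and right exchanged, using $\fH\theta^k=0$.

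The equality $e_*\RH=\fH\RH$, which you do not address, then follows from $\fH e_*=e_*$ and $\fH\propto e_*\theta^*$. This mirrors your argument for $\RH\fH=\RH\theta^*$.
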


\begin{proof}
(i) By \eqref{eq:witt}, $(e_k\theta^k)^2=e_k(2-e_k\theta^k)\theta^k=2e_k\theta^k$ since $e_k^2=0$, so $f_k^2=f_k$. For $l\neq k$, $e_l$ and $\theta^l$ anticommute with both $e_k$ and $\theta^k$, hence commute with the even element $e_k\theta^k$; thus $f_kf_l=f_lf_k$.

(ii) By induction, $\prod_{k=1}^n e_k\theta^k=e_1\cdots e_n\theta^n\cdots\theta^1$. Passing from $n-1$ to $n$, $e_n$ is moved to the left across $n-1$ factors $\theta^k$ and $\theta^n$ to the left of $\theta^{n-1}\cdots\theta^1$, each move producing $(-1)^{n-1}$. Since $\theta^n\cdots\theta^1=\tilde\theta^*=(-1)^{n(n-1)/2}\theta^*$, the first two expressions follow. From \eqref{eq:witt-from-orth}, $e_k\theta^k=\tfrac12(\sigma_k-\sigma_{n+k})(\sigma_k+\sigma_{n+k})=\tfrac12(\sigma_k^2-\sigma_{n+k}^2)+\sigma_k\sigma_{n+k}=1+\sigma_k\sigma_{n+k}$, and $(\sigma_k\sigma_{n+k})^2=-\sigma_k^2\sigma_{n+k}^2=1$; bivectors with disjoint support commute. Finally $j(A)_\#$ maps $e_*\mapsto(\det A)e_*$ and $\theta^*\mapsto(\det A)^{-1}\theta^*$, so $j(A)_\#\fH=\fH$.

(iii) $e_kf_k=\tfrac12e_k^2\theta^k=0$ and $f_k\theta^k=0$; as $f_l$ ($l\ne k$) commutes with $e_k,\theta^k$, $e_k\fH=0=\fH\theta^k$. By \eqref{eq:witt}, $\theta^k e_k\theta^k=(2-e_k\theta^k)\theta^k=2\theta^k$, so $\theta^kf_k=\theta^k$, and similarly $f_ke_k=e_k$. Since $f_k$ commutes with $\theta^l,e_l$ for $l\neq k$, $\theta^*\fH=\theta^*$ and $\fH e_*=e_*$.

(iv) Expand $x\in\RH$ in the monomials $\theta^Je_I$. By (iii), $e_I\fH=0$ if $I\neq\emptyset$ and $\fH\theta^J=0$ if $J\neq\emptyset$, so $\fH x\fH=c_{\emptyset\emptyset}\fH$. An idempotent $f$ of an algebra $A$ is primitive iff $fAf$ is a division algebra, here it is $\mathbb R$ (resp.\ $\mathbb C$, since $\fH(\CRH)\fH=\mathbb C\otimes\fH\RH\fH$).

(v) By the same monomial expansion, $x\fH=\sum_J c_{J\emptyset}\theta^J\fH$, so $\RH\fH=\Ext V^*\fH$. For Injectivity, if $\alpha=\sum_Jc_J\theta^J\neq0$, let $J_0$ be of minimal cardinality with $c_{J_0}\neq0$ and $K$ its complement. Then $\theta^K\alpha=\pm c_{J_0}\theta^*$ (all other terms vanish by minimality), whence $\theta^K\alpha\fH=\pm c_{J_0}\theta^*\neq0$ by (iii). Next, $\theta^*=\theta^*\fH\in\RH\fH$ and, by (ii), $\fH\propto e_*\theta^*\in\RH\theta^*$, so $\RH\fH=\RH\theta^*$. The first formula \eqref{eq:fock} is the identity $\theta^k\alpha=\theta^k\wedge\alpha$ in $\Ext V^*$. For the second, by \eqref{eq:witt}, $e_k\theta^{j_1}\cdots\theta^{j_r}\fH=\sum_{s}(-1)^{s-1}2\delta^{j_s}_k\,\theta^{j_1}\cdots\widehat{\theta^{j_s}}\cdots\theta^{j_r}\fH+(-1)^r\theta^{j_1}\cdots\theta^{j_r}e_k\fH=2(i_{e_k}\theta^J)\fH$.

(vi) The isometry $\tau$ of $\HV$ exchanging $e_k\leftrightarrow\theta^k$ extends to an automorphism $\tau_\#$ of $\RH$ with $\tau_\#(\fH)=\fcH$. The anti-automorphism $x\mapsto\tau_\#(\tilde x)$ fixes $\fH$ (by (vii) below), exchanges $\Ext V^*$ and $\Ext V$, and maps $\RH\fH$ onto $\fH\RH$; (vi) follows from (v). Also $e_*\RH=\fH\RH$ since $e_*=\fH e_*$ and both are minimal right ideals.

(vii) Reversion reverses the order of the commuting factors and $\widetilde{e_k\theta^k}=\theta^ke_k$, thus $f_k+\tfrac12\theta^ke_k=\tfrac12(e_k\theta^k+\theta^ke_k)=1$. Put $\check f_k=\tfrac12\theta^ke_k$, then $\check f_kf_k=\tfrac14\theta^ke_ke_k\theta^k=0$, $\check f_k\theta^k=\theta^k=\theta^kf_k$ (as in (iii)), and $\check f_k,f_k$ commute with $\theta^l$, $l\neq k$. Hence for $\gamma=\theta^J$: if some $k\notin J$, the factors $\check f_k$ and $f_k$ can be brought together and $\fcH\theta^J\fH=0$; if $J=\{1,\dots,n\}$, $\fcH\theta^*\fH=\theta^*$. The last statement follows from (v).
\end{proof}

By Theorem~\ref{thm:fH}(ii), $\fH$ is a canonical element of $\RH$ determined by the decomposition $\HV=V\oplus V^*$ alone. We call it the \emph{canonical primitive idempotent} of $\RH$ and $S:=\RH\fH\simeq\Ext V^*$ the \emph{canonical spinor module}. Formulas \eqref{eq:fock} show that, on $S\simeq\Ext V^*$, covectors act by exterior multiplication and vectors by (twice) interior multiplication: $S$ is the Fock representation of the canonical anticommutation relations \eqref{eq:witt}, with vacuum $\fH$.

\begin{remark}[Comparison with the Lorentzian idempotent]\label{rem:comparison}
In the Lorentzian case $\mathbb R_{1,3}\simeq M_2(\mathbb H)$ is not split, the real primitive idempotent $\tfrac12(1+\mathbf E^0)$ ceases to be primitive after complexification ($\mathbb C\otimes\mathbb R_{1,3}\simeq\mathbb C(4)$) and a second factor $\tfrac12(1+i\mathbf E^2\mathbf E^1)$ containing $i$ is required because $(\mathbf E^2\mathbf E^1)^2=-1$. This gives the idempotent $\mathbf f=\tfrac12(1+\mathbf E^0)\tfrac12(1+i\mathbf E^2\mathbf E^1)$ of \cite[Eq.~(18)]{MosnaRodrigues2004}. In the hyperbolic case every factor $\tfrac12(1+\sigma_k\sigma_{n+k})$ is built from a bivector of square $+1$ (one positive and one negative vector of the neutral space), $\RH\simeq M_{2^n}(\mathbb R)$ is already split, and the \emph{real} idempotent $\fH$ remains primitive in $\CRH$ (Theorem~\ref{thm:fH}(iv)): no imaginary unit is needed. Thus $\fH$ is the hyperbolic analogue of $\mathbf f$, and complex hyperbolic algebraic spinors are simply the complexification of the real ones (see Theorem~\ref{thm:ideal}(v) below).
\end{remark}

Now we establish results concerning all primitive idempotents, chirality and the commutant of the spinor module in sequence.  

\begin{proposition}[Primitive idempotents]\label{prop:idempotents}
\begin{enumerate}[label=(\alph*)]
\item Every primitive idempotent of $\RH$ (resp.\ $\CRH$) is of the form $g\fH g^{-1}$ with $g$ invertible in $\RH$ (resp.\ $\CRH$).
\item For $J\subset\{1,\dots,n\}$ put $p_J:=\prod_{k\notin J}f_k\prod_{k\in J}\check f_k$. The $2^n$ elements $p_J$ are primitive idempotents, $p_Jp_{J'}=\delta_{JJ'}p_J$ and $\sum_Jp_J=1$, hence $\RH=\bigoplus_J\RH p_J$, with $p_\emptyset=\fH$ and $p_{\{1,\dots,n\}}=\fcH$.
\item $f_c:=\tfrac12(1+i\sigma_1\sigma_2)\,\tfrac12(1+i\sigma_{n+1}\sigma_{n+2})\prod_{k=3}^n\tfrac12(1+\sigma_k\sigma_{n+k})$ is a non-real primitive idempotent of $\CRH$, of the same shape as $\mathbf f$; by (a) it is conjugate to $\fH$.
\end{enumerate}
\end{proposition}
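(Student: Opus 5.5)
For (a) I will use Proposition~\ref{prop:realization}: $\RH\simeq M_{2^n}(\mathbb R)$ and $\CRH\simeq M_{2^n}(\mathbb C)$ are simple algebras, so every left module is a direct sum of copies of the unique simple module. An idempotent $p$ is primitive exactly when $\RH p$ is a minimal left ideal. Concretely, in matrix form a primitive idempotent is a rank-one projection. By Theorem~\ref{thm:fH}(iv), $\fH$ is primitive, so $\RH p\simeq\RH\fH$ as left modules for any primitive $p$.

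I will then compare the complementary ideals, using $\RH=\RH p\oplus\RH(1-p)=\RH\fH\oplus\RH(1-\fH)$. Cancellation of semisimple modules (Krull--Schmidt) gives $\RH(1-p)\simeq\RH(1-\fH)$. Combining the two isomorphisms yields an automorphism of the left regular module $\RH$. Any such automorphism is right multiplication by an invertible $g^{-1}$, and it carries $\RH\fH$ onto $\RH p$ and $\RH(1-\fH)$ onto $\RH(1-p)$. Decomposing $1$ in both splittings then gives $p=g\fH g^{-1}$. Alternatively, one can simply quote the similarity of rank-one projections. The same argument works verbatim over $\mathbb C$.

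For (b), the key observation is that $f_k$ and $\check f_k$ are complementary commuting idempotents: $f_k+\check f_k=1$ and $f_k\check f_k=\check f_kf_k=0$, by Theorem~\ref{thm:fH}(vii) and its proof. All the $f_k,\check f_k$ for distinct $k$ commute, as in Theorem~\ref{thm:fH}(i).
\begin{itemize}
\item Expanding $1=\prod_k(f_k+\check f_k)$ gives $\sum_Jp_J=1$.
\item Each $p_J$ is an idempotent, since it is a product of commuting idempotents.
\item For $J\neq J'$, some index $k$ lies in exactly one of the two sets, so the product $p_Jp_{J'}$ contains $f_k\check f_k=0$.
\end{itemize}
For primitivity I will avoid repeating the computation of (iv). Instead I take the isometry $\tau_J$ of $\HV$ that exchanges $e_k\leftrightarrow\theta^k$ for $k\in J$ and fixes the other Witt vectors. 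It extends to an automorphism $(\tau_J)_\#$ of $\RH$ (and of $\CRH$). This automorphism sends $f_k\mapsto\tfrac12\theta^ke_k=\check f_k$ for $k\in J$ and fixes the remaining $f_k$, so $(\tau_J)_\#\fH=p_J$. Automorphisms preserve primitivity, so each $p_J$ is primitive. The direct sum decomposition $\RH=\bigoplus_J\RH p_J$ then follows from orthogonality together with $\sum_Jp_J=1$.

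For (c), I set $u=\sigma_1\sigma_2$ and $v=\sigma_{n+1}\sigma_{n+2}$ and record the sign facts $u^2=-1$ and $v^2=-\sigma_{n+1}^2\sigma_{n+2}^2=-1$. Hence $\tfrac12(1+iu)$ and $\tfrac12(1+iv)$ are idempotents. All the bivectors appearing in $f_c$ have pairwise disjoint supports, so they commute, and $f_c$ is a product of $n$ commuting idempotents $\tfrac12(1+\beta_k)$ with $\beta_k^2=1$. To show primitivity I will produce an isometry of $\HV\otimes\mathbb C$ mapping the ordered pairs $(\sigma_1,\sigma_{n+1})$, $(\sigma_2,\sigma_{n+2})$ to vectors $(\sigma'_1,\sigma'_{n+1})$, $(\sigma'_2,\sigma'_{n+2})$ such that $iu$ and $iv$ become $\sigma'_1\sigma'_{n+1}$ and $\sigma'_2\sigma'_{n+2}$. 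The natural choice is $\sigma'_1=\sigma_1$, $\sigma'_{n+1}=i\sigma_2$, $\sigma'_2=\sigma_{n+1}$, $\sigma'_{n+2}=i\sigma_{n+2}$, with the remaining vectors fixed.

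The main obstacle is the sign and orthonormality bookkeeping in this choice. Its squares are $1,-1,-1,-1$ instead of $1,-1,1,-1$, so it must be adjusted, for instance by replacing $\sigma_{n+1}$ with $i\sigma_{n+1}$ and correspondingly rescaling to keep $iv$. Once a complex orthonormal frame with $\tfrac12(1+\sigma'_k\sigma'_{n+k})$ reproducing each factor is found, the induced automorphism of $\CRH$ sends $\fH$ to $f_c$, which proves primitivity. As a fallback I will verify primitivity directly: $f_c$ is a product of $n$ commuting idempotents that each halve the rank in $M_{2^n}(\mathbb C)$, so $\operatorname{tr}f_c=2^n\cdot 2^{-n}=1$, using that the trace of every nonscalar Clifford monomial vanishes. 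A trace-one idempotent has rank one and is therefore primitive. Non-reality is clear because the coefficient of $i\sigma_1\sigma_2$ is nonzero. Conjugacy to $\fH$ then follows from (a).
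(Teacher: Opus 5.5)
Your proof is correct. For (a), (b) and your fallback in (c) it is essentially the paper's argument:
\begin{itemize}
\item For (a) the paper just notes that primitive idempotents of $M_{2^n}(\mathbb R)$ or $M_{2^n}(\mathbb C)$ are the rank-one idempotents, which are conjugate by choosing bases adapted to image and kernel. Your Krull--Schmidt and regular-module argument is the module-theoretic form of the same fact.
\item For (b) the paper argues exactly as you do: primitivity via $(\tau_J)_\#\fH=p_J$, orthogonality from $f_k\check f_k=\check f_kf_k=0$, and the sum from $\prod_k(f_k+\check f_k)=1$.
\item For (c) the paper uses the trace computation $\langle f_c\rangle_0=2^{-n}$, which is your fallback. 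What makes that work is expanding $f_c$ as $2^{-n}$ times a sum of products of disjoint-support bivectors, all of positive grade and hence traceless. The fact that each factor halves the rank is not by itself enough.
\end{itemize}

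Your primary route for (c) is genuinely different, and it works once the bookkeeping is fixed. With $\sigma'_{n+2}=i\sigma_{n+2}$ one has $(\sigma'_{n+2})^2=i^2\sigma_{n+2}^2=+1$, so your first choice has squares $1,-1,-1,1$, not $1,-1,-1,-1$; it fails nonetheless.

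The adjustment you gesture at is the right one. Take $\sigma'_1=\sigma_1$, $\sigma'_{n+1}=i\sigma_2$, $\sigma'_2=i\sigma_{n+1}$, $\sigma'_{n+2}=\sigma_{n+2}$, and fix all other $\sigma_A$.
\begin{itemize}
\item The new vectors have squares $1,-1,1,-1$ and are pairwise orthogonal for the complexified form, so this is a complex isometry.
\item On the span of $\sigma_2,\sigma_{n+1}$ it has determinant $1$.
\item It satisfies $\sigma'_1\sigma'_{n+1}=i\sigma_1\sigma_2$ and $\sigma'_2\sigma'_{n+2}=i\sigma_{n+1}\sigma_{n+2}$, so its extension to $\CRH$ sends $\fH=\prod_k\tfrac12(1+\sigma_k\sigma_{n+k})$ to $f_c$.
\end{itemize}

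This route gives slightly more than the trace argument. Every automorphism of $M_{2^n}(\mathbb C)$ is inner (Skolem--Noether), so it yields $f_c=g\fH g^{-1}$ directly, without appealing to (a). It also mirrors the $\tau_J$ argument you already use in (b).
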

\begin{proof}
(a) Under $\RH\simeq M_N(\mathbb R)$, $N=2^n$ (Proposition~\ref{prop:realization}), and $\CRH\simeq M_N(\mathbb C)$, primitive idempotents are exactly the idempotents of rank one and any two of them are conjugate (choose bases adapted to image and kernel).
(b) With $\tau_J$ the isometry of $\HV$ exchanging $e_k\leftrightarrow\theta^k$ for $k\in J$, $p_J=(\tau_J)_\#\fH$ is primitive. Orthogonality follows from $f_k\check f_k=\check f_kf_k=0$ and the commutativity of all factors. Also  $\sum_Jp_J=\prod_k(f_k+\check f_k)=1$ by Theorem~\ref{thm:fH}(vii).
(c) The factors commute (disjoint supports) and are idempotents since $(i\sigma_1\sigma_2)^2=(i\sigma_{n+1}\sigma_{n+2})^2=1$. The scalar part $\langle\ \rangle_0$ is the normalized trace of $M_N(\mathbb C)$ (it satisfies $\langle xy\rangle_0=\langle yx\rangle_0$ and $\langle1\rangle_0=1$). Expanding the product, all non-trivial products of the $n$ bivectors are blades of positive grade, so $\langle f_c\rangle_0=2^{-n}$ and $\operatorname{rank}f_c=N\langle f_c\rangle_0=1$.
\end{proof}

\begin{proposition}[Chirality]\label{prop:chir}
The volume element \eqref{eq:volume} satisfies $\varpi^2=1$, $\varpi x=\hat x\varpi$ for $x\in\RH$, and
$$\varpi=(-1)^{n(n-1)/2}\prod_{k=1}^n\sigma_k\sigma_{n+k}=(-1)^{n(n-1)/2}\prod_{k=1}^n(2f_k-1),\qquad \varpi(\alpha\fH)=(-1)^{n(n-1)/2}\,(\hat\alpha)\fH .$$
Hence $S=\RH\fH=S^+\oplus S^-$, with $S^+=\Ext^{\mathrm{ev}}V^*\fH$ and $S^-=\Ext^{\mathrm{odd}}V^*\fH$ the $\pm(-1)^{n(n-1)/2}$-eigenspaces of $\varpi$ and both are invariant under the even subalgebra $(\RH)^{(0)}$.
\end{proposition}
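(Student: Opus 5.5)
I will work entirely in the orthonormal basis $\{\sigma_1,\dots,\sigma_{2n}\}$ of \eqref{eq:orthonormal} and reduce everything to the product formula, then to the Fock formulas \eqref{eq:fock}.

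\emph{Step 1: the product formula.} Reorder $\varpi=\sigma_1\cdots\sigma_n\sigma_{n+1}\cdots\sigma_{2n}$ into $\prod_k\sigma_k\sigma_{n+k}$. I move $\sigma_{n+1}$ leftward past $\sigma_n,\dots,\sigma_2$, which costs $n-1$ transpositions. Next I move $\sigma_{n+2}$ past $\sigma_n,\dots,\sigma_3$, which costs $n-2$, and so on. The total is $n(n-1)/2$ anticommutations, giving the sign $(-1)^{n(n-1)/2}$. By Theorem~\ref{thm:fH}(ii), $e_k\theta^k=1+\sigma_k\sigma_{n+k}$, so $\sigma_k\sigma_{n+k}=2f_k-1$, which is the second expression. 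The bivectors $B_k:=\sigma_k\sigma_{n+k}$ commute pairwise and satisfy $B_k^2=1$, again by Theorem~\ref{thm:fH}(ii). Hence $\varpi^2=\prod_k B_k^2=1$, because the sign squares to one.

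\emph{Step 2: $\varpi x=\hat x\varpi$.} Since $\varpi$ has even degree $2n$, any generator $\sigma_a$ anticommutes with the $2n-1$ factors $\sigma_b$, $b\neq a$, and commutes with itself. Therefore $\sigma_a\varpi=(-1)^{2n-1}\varpi\sigma_a=-\varpi\sigma_a$, so $\varpi\sigma_a=-\sigma_a\varpi=\hat\sigma_a\varpi$. This extends multiplicatively to all of $\RH$, because $\hat{}$ is an automorphism.

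\emph{Step 3: action on $S$.} I claim $B_k(\alpha\fH)=(2f_k-1)\alpha\fH$ equals $\widehat{\alpha}$ on the $k$-th slot. It is simplest to compute with $f_k=\tfrac12 e_k\theta^k$ through \eqref{eq:fock}: $f_k(\alpha\fH)=\tfrac12 e_k(\theta^k\wedge\alpha)\fH=(i_{e_k}(\theta^k\wedge\alpha))\fH$.

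For a monomial $\alpha=\theta^J$, this gives $\theta^J\fH$ if $k\notin J$ and $0$ if $k\in J$. So $2f_k-1$ acts on $\theta^J\fH$ by $+1$ when $k\notin J$ and $-1$ when $k\in J$.

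Taking the product over $k$ gives the eigenvalue $(-1)^{|J|}$. Hence $\prod_k(2f_k-1)\,\alpha\fH=\hat\alpha\fH$, and the displayed formula for $\varpi(\alpha\fH)$ follows.

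The one point needing care is that the factors $2f_k-1$ may be applied successively. This holds because each maps $S$ into itself, so \eqref{eq:fock} remains applicable at every stage. This sign bookkeeping is the main, though mild, obstacle. I will cross-check it with the alternative route through Step 2 together with $\varpi\fH=(-1)^{n(n-1)/2}\fH$. The latter identity holds since $(2f_k-1)\fH=\fH$, using $f_k\fH=\fH$.

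Indeed, $\varpi\alpha\fH=\hat\alpha\varpi\fH=(-1)^{n(n-1)/2}\hat\alpha\fH$, which is actually the cleaner argument and the one I would present first.

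\emph{Step 4: decomposition.} By Theorem~\ref{thm:fH}(v), $\alpha\mapsto\alpha\fH$ is an isomorphism $\Ext V^*\to S$. Under it, $\Ext V^*=\Ext^{\mathrm{ev}}V^*\oplus\Ext^{\mathrm{odd}}V^*$ gives $S=S^+\oplus S^-$, and by Step 3 the two summands are the eigenspaces of $\varpi$ with eigenvalues $\pm(-1)^{n(n-1)/2}$.

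For invariance, take $x\in(\RH)^{(0)}$. Then $\hat x=x$, so $x$ commutes with $\varpi$ by Step 2. Consequently $x$ preserves each eigenspace of $\varpi$ on the invariant subspace $S$.
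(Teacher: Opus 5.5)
Your proof is correct and follows essentially the same route as the paper: the same $n(n-1)/2$ reordering count, the identity $\sigma_k\sigma_{n+k}=2f_k-1$, anticommutation of $\varpi$ with each $\sigma_A$, and the eigenvalue $(-1)^{[k\in J]}$ of $2f_k-1$ on $\theta^J\fH$. The paper gets that eigenvalue by commuting $f_k$ past $\theta^J$ instead of using \eqref{eq:fock}, and your alternative $\varpi\alpha\fH=\hat\alpha\,\varpi\fH=(-1)^{n(n-1)/2}\hat\alpha\fH$ is a valid and slightly slicker shortcut to the same formula.
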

\begin{proof}
Reordering $\sigma_1\cdots\sigma_{2n}$ into $\prod_k\sigma_k\sigma_{n+k}$ requires $(n-1)+(n-2)+\dots+0=n(n-1)/2$ transpositions. By the proof of Theorem~\ref{thm:fH}(ii), $\sigma_k\sigma_{n+k}=e_k\theta^k-1=2f_k-1$. These commute and square to $1$, so $\varpi^2=1$. Since $2n$ is even, $\varpi$ anticommutes with each $\sigma_A$. If $k\in J$, $f_k\theta^J\fH=0$ (move $f_k$ to $\theta^k$, $f_k\theta^k=0$). If $k\notin J$, $f_k\theta^J\fH=\theta^Jf_k\fH=\theta^J\fH$. So $(2f_k-1)$ acts on $\theta^J\fH$ as $(-1)^{[k\in J]}$. Even elements commute with $\varpi$.
\end{proof}

\begin{proposition}[Commutant]\label{prop:commutant}
$\End_{\RH}(\RH\fH)\simeq(\fH\RH\fH)^{\mathrm{op}}=\mathbb R$ and $\End_{\CRH}((\CRH)\fH)\simeq(\fH(\CRH)\fH)^{\mathrm{op}}=\mathbb C$, acting by right multiplication by $\fH\RH\fH$ (resp.\ $\fH(\CRH)\fH$). In particular the real ideal $\RH\fH$ carries \emph{no} $\RH$-linear complex structure. By contrast, in the Lorentzian case $\tfrac12(1+\mathbf E^0)\mathbb R_{1,3}\tfrac12(1+\mathbf E^0)\simeq\mathbb H$, which is what allows $\mathbf E^{21}$ to act as a complex structure on the real ideal in \cite{MosnaRodrigues2004}.
\end{proposition}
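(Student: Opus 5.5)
The plan is to reduce the statement to the standard fact that, for an idempotent $f$ in a unital algebra $A$, the map
$$\Phi:(fAf)^{\mathrm{op}}\to\End_A(Af),\qquad \Phi(c)(x)=xc,$$
is an algebra isomorphism. Then we feed in Theorem~\ref{thm:fH}(iv).

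\emph{Well-definedness.} For $c\in fAf$ and $x\in Af$ we have $xc\in Af$, since $c=cf$. Right multiplication commutes with left multiplication, so $\Phi(c)$ is $A$-linear. The order reversal $\Phi(cc')=\Phi(c')\Phi(c)$ forces the opposite algebra.

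\emph{Injectivity.} Because $f\in Af$ and $fc=c$, we get $\Phi(c)(f)=c$. Hence $\Phi(c)=0$ implies $c=0$.

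\emph{Surjectivity.} Given $\phi\in\End_A(Af)$, set $c:=\phi(f)$. From $\phi(f)=\phi(f\cdot f)=f\phi(f)$ and $\phi(f)\in Af$ we get $c=fc=cf$, so $c\in fAf$. For $x=af\in Af$, linearity gives $\phi(x)=a\phi(f)=ac=afc=xc$. Thus $\phi=\Phi(c)$.

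Applying this with $A=\RH$ and $f=\fH$ gives $\End_{\RH}(\RH\fH)\simeq(\fH\RH\fH)^{\mathrm{op}}$. By Theorem~\ref{thm:fH}(iv), $\fH\RH\fH=\mathbb R\fH\simeq\mathbb R$, which is commutative, so its opposite is again $\mathbb R$. The complex case is the same computation with $A=\CRH$, using $\fH(\CRH)\fH=\mathbb C\fH$ from the same item. Here one should note that $\CRH$-linear endomorphisms are automatically $\mathbb C$-linear, because $i\in\CRH$ is central. This identifies the commutant with right multiplication by $\mathbb C\fH$, i.e.\ scalar multiplication by $\mathbb C$.

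For the "in particular" clause, suppose $J$ is an $\RH$-linear complex structure on $\RH\fH$. Then $J\in\End_{\RH}(\RH\fH)\simeq\mathbb R$, so $J=\lambda\,\mathrm{id}$ for some real $\lambda$. This gives $J^2=\lambda^2\geq0\neq-1$, a contradiction, so no such $J$ exists.

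The Lorentzian comparison is cited from \cite{MosnaRodrigues2004} rather than proved. I would only remark that the same lemma applies, with $\mathbb R_{1,3}\simeq M_2(\mathbb H)$ and $\tfrac12(1+\mathbf E^0)$ of rank one over $\mathbb H$, giving $\mathbb H$. Inside $\mathbb H$, right multiplication by an element of square $-1$, realized by $\mathbf E^{21}$, provides the complex structure.

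There is no real obstacle. The only points needing care are the opposite-algebra bookkeeping and checking that $\phi(f)$ lies in $fAf$, which are handled above.
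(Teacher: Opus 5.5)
Your proof is correct and follows the paper's argument essentially step for step. You identify each $\phi\in\End_A(Af)$ with right multiplication by $c=\phi(f)\in fAf$, track the order reversal that yields the opposite algebra, and get injectivity from $R_c(f)=c$. You then invoke Theorem~\ref{thm:fH}(iv) and note that $\mathbb R$ contains no square root of $-1$.
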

\begin{proof}
Put $A=\RH$ (resp.\ $\CRH$) and $f=\fH$. If $\phi\in\End_A(Af)$, then $\phi(f)=\phi(f\cdot f)=f\,\phi(f)\in fAf$, and for every $af\in Af$, $\phi(af)=a\,\phi(f)=af\,c$ with $c:=\phi(f)\in fAf$. Thus $\phi$ is right multiplication $R_c$ by $c$, and conversely every $R_c$, $c\in fAf$, is $A$-linear. Since $R_cR_{c'}(x)=xc'c$, i.e.\ $R_cR_{c'}=R_{c'c}$, the map $c\mapsto R_c$ reverses the order of products. This is the reason for the superscript ``op'': for an algebra $B$, the \emph{opposite algebra} $B^{\mathrm{op}}$ is the same vector space as $B$, with the same unit, endowed with the reversed product $c\ast c':=c'c$. With respect to $\ast$ one has $R_{c\ast c'}=R_{c'c}=R_cR_{c'}$ and $c\mapsto R_c$ is injective since $R_c(f)=fc=c$, hence $c\mapsto R_c$ is an algebra isomorphism $(fAf)^{\mathrm{op}}\to\End_A(Af)$, i.e.\ an anti-isomorphism $fAf\to\End_A(Af)$. By Theorem~\ref{thm:fH}(iv), $fAf=\mathbb R f\simeq\mathbb R$ (resp.\ $\mathbb C f\simeq\mathbb C$), which is commutative, so that $(fAf)^{\mathrm{op}}=fAf$ and the reversal of the order plays no role here. An $\RH$-linear complex structure on $\RH\fH$ would be some $\mathcal J\in\mathbb R$ with $\mathcal J^2=-1$, which is impossible. Finally, in the Lorentzian case, instead, $fAf\simeq\mathbb H$ contains square roots of $-1$.
\end{proof}

\section{Hyperbolic Clifford--Lipschitz, Pin and Spin groups}\label{sec:groups}

This section starts recalling, with notation adapted to that of this work, the construction of the groups associated with $\Cl(H_{p,q})$ given in \cite[\S 3]{NotteCuello2026}. Here $H_{p,q}$ denotes the hyperbolic space of $V=\mathbb R^{p,q}$, $p+q=n$. Let $\mathbb R^{H\star}_{p,q}$ be the group of invertible elements of $\mathbb R^{H}_{p,q}=\Cl(H_{p,q})$. The twisted adjoint action $\rho_{\mathbf x}(v)=\hat{\mathbf x}\,v\,\mathbf x^{-1}$ allows one to define the hyperbolic \emph{Clifford--Lipschitz group}
\begin{equation}\label{eq:CL}
\Gamma_{H_{p,q}}=\{\mathbf x\in \mathbb R^{H\star}_{p,q}\;:\;\hat{\mathbf x}\,\mathbf v\,\mathbf x^{-1}\in H_{p,q},\ \forall \mathbf v\in H_{p,q}\},
\end{equation}
with action $\rho:\Gamma_{H_{p,q}}\to \mathrm{Aut}(H_{p,q})$, $\rho_{\mathbf x}\mathbf v=\hat{\mathbf x}\mathbf v\mathbf x^{-1}$, which preserves the quadratic form of $H_{p,q}$ \cite[Eqs.\ (18)--(19)]{NotteCuello2026}.

\begin{theorem}[\cite{NotteCuello2026}]\label{thm:rho}
The map $\rho:\Gamma_{H_{p,q}}\to O_{H_{p,q}}$ is a surjective homomorphism whose kernel is the multiplicative group $\mathbb R^{*}\mathbf 1$ of nonzero scalar multiples of the unit of $\Cl(H_{p,q})$.
\end{theorem}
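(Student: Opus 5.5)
We would follow the classical Atiyah--Bott--Shapiro argument for the twisted adjoint representation, specialised to the non-degenerate neutral space $H_{p,q}$ of even dimension $2n$.

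\textbf{Image.} First we check that $\rho$ takes values in $O_{H_{p,q}}$ and is a homomorphism. The homomorphism property is immediate from $\widehat{\mathbf x\mathbf y}=\hat{\mathbf x}\hat{\mathbf y}$. For the isometry property, we show that the norm map $N(\mathbf x)=\bar{\mathbf x}\mathbf x$ is a nonzero scalar on $\Gamma_{H_{p,q}}$; this will follow from the kernel computation below. Applying conjugation to $\hat{\mathbf x}\mathbf v\mathbf x^{-1}\in H_{p,q}$ and using $\bar{\mathbf w}=-\mathbf w$ for vecfors gives that $\bar{\mathbf x}\mathbf x$ lies in $\ker\rho$, hence is a scalar. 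Then
\[
\rho_{\mathbf x}(\mathbf v)^2=\hat{\mathbf x}\mathbf v\mathbf x^{-1}\hat{\mathbf x}\mathbf v\mathbf x^{-1}=-\rho_{\mathbf x}(\mathbf v)\overline{\rho_{\mathbf x}(\mathbf v)},
\]
and this expression reduces to $\mathbf v^2$ after inserting $N(\mathbf x)$. So $\rho_{\mathbf x}$ preserves $\langle\,,\,\rangle$.

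\textbf{Kernel.} Suppose $\hat{\mathbf x}\mathbf v=\mathbf v\mathbf x$ for all $\mathbf v$. Split $\mathbf x=\mathbf x_0+\mathbf x_1$ into even and odd parts, so that $\mathbf x_0$ commutes with every $\sigma_A$ and $\mathbf x_1$ anticommutes with every $\sigma_A$. Fix an index $A$ and write $\mathbf x_i=a_i+\sigma_A b_i$ with $a_i,b_i$ not involving $\sigma_A$. Commutation of $\mathbf x_0$ with $\sigma_A$ forces $b_0=0$, and anticommutation of $\mathbf x_1$ with $\sigma_A$ forces $a_1=0$. Here we use $\sigma_A^2=\pm1\neq0$, which is exactly where non-degeneracy, and hence the orthonormal basis \eqref{eq:orthonormal}, enters. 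Iterating over all $A$ yields $\mathbf x_1=0$ and $\mathbf x_0\in\mathbb R$. Invertibility then gives $\ker\rho=\mathbb R^*\mathbf 1$.

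\textbf{Surjectivity.} This is the main step. For a non-isotropic vecfor $\mathbf u$ we have $\mathbf u\in\Gamma_{H_{p,q}}$, and $\rho_{\mathbf u}$ is the reflection in $\mathbf u^\perp$. Indeed,
\[
-\mathbf u\mathbf v\mathbf u^{-1}=\mathbf v-2\frac{\langle\mathbf u,\mathbf v\rangle}{\langle\mathbf u,\mathbf u\rangle}\mathbf u .
\]
The Cartan--Dieudonné theorem, valid for any non-degenerate real quadratic space and in particular for the neutral form on $H_{p,q}$, says every element of $O_{H_{p,q}}$ is a product of at most $2n$ such reflections. Hence $\rho$ is onto.

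The only delicate point is that the neutral signature produces many isotropic vectors, such as the $e_k$ and $\theta^k$. This does not matter: Cartan--Dieudonné needs only non-degeneracy, and we use only the reflections in the non-isotropic vectors it supplies, for instance combinations of the $\sigma_A$. Finally, the norm argument in the first step relies on the kernel result, so the write-up will present the kernel computation first.
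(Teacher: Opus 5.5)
The paper gives no proof of this theorem. It quotes it from \cite{NotteCuello2026} and remarks that the constructions are formally identical to the classical ones for $\Cl(V,g)$. Your plan is exactly that classical Atiyah--Bott--Shapiro argument: kernel via an orthonormal basis, then the norm is scalar, then $\rho_{\mathbf x}$ is an isometry, then surjectivity from reflections and Cartan--Dieudonn\'e, which needs only non-degeneracy. The overall structure is correct, but two intermediate claims are stated wrongly and need fixing.

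\textbf{Kernel step.} Write $\mathbf x_1=a_1+\sigma_Ab_1$ with $a_1$ odd and $b_1$ even, neither containing $\sigma_A$. Then $a_1$ already anticommutes with $\sigma_A$, while $\sigma_Ab_1$ commutes with it. So anticommutation forces $b_1=0$, not $a_1=0$; for instance $\sigma_B$, $B\neq A$, anticommutes with $\sigma_A$ and has $a_1=\sigma_B\neq0$. With $b_1=0$ for every $A$, $\mathbf x_1$ contains no $\sigma_A$. It is therefore a scalar, and being odd it is zero.

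\textbf{Norm step.} Applying conjugation to $\mathbf w=\hat{\mathbf x}\mathbf v\mathbf x^{-1}$ and using $\overline{\hat{\mathbf x}}=\tilde{\mathbf x}$ gives $\bar{\mathbf x}^{-1}\mathbf v\tilde{\mathbf x}=\hat{\mathbf x}\mathbf v\mathbf x^{-1}$. That is $\widehat{\tilde{\mathbf x}\mathbf x}\,\mathbf v=\mathbf v\,\tilde{\mathbf x}\mathbf x$, so what lands in $\ker\rho$ is $\tilde{\mathbf x}\mathbf x$, not $\bar{\mathbf x}\mathbf x$. Your identity $\mathbf w^2=-\mathbf w\bar{\mathbf w}$ needs $\bar{\mathbf x}\mathbf x\in\mathbb R^*$. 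You get that by applying \emph{reversion} instead, which fixes vecfors: $\tilde{\mathbf x}^{-1}\mathbf v\bar{\mathbf x}=\hat{\mathbf x}\mathbf v\mathbf x^{-1}$, whence $\widehat{\bar{\mathbf x}\mathbf x}\,\mathbf v=\mathbf v\,\bar{\mathbf x}\mathbf x$. Alternatively, keep conjugation, put $\lambda=\tilde{\mathbf x}\mathbf x\in\mathbb R^*$ and compute $\mathbf w^2=\mathbf w\tilde{\mathbf w}=\lambda^{-1}\mathbf v^2\,\widehat{\mathbf x\tilde{\mathbf x}}=\mathbf v^2$.

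With these two corrections the argument is complete, and the surjectivity step via reflections in non-isotropic vecfors is fine as written.
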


Restricting $\Gamma_{H_{p,q}}$ to elements of norm $N(\mathbf x)=\mathbf x\bar{\mathbf x}=\pm1$ gives the \emph{hyperbolic Pin group},
$$\mathrm{Pin}_{H_{p,q}}=\{\mathbf x\in\Gamma_{H_{p,q}}\;:\;N(\mathbf x)=\pm1\}$$
and further restricting to the even part, $\Gamma^{+}_{H_{p,q}}=\Gamma_{H_{p,q}}\cap\Cl^{(0)}(H_{p,q})$, gives the \emph{hyperbolic Spin group},
$$\mathrm{Spin}_{H_{p,q}}=\{\mathbf x\in\Gamma^{+}_{H_{p,q}}\;:\;N(\mathbf x)=\pm1\}.$$
One has $\ker(\rho|_{\mathrm{Pin}_{H_{p,q}}})=\ker(\rho|_{\mathrm{Spin}_{H_{p,q}}})=\{\pm1\}\simeq\mathbb Z_2$ \cite[\S 3.3]{NotteCuello2026}, whence
\begin{equation}\label{eq:quotients}
O_{H_{p,q}}=\frac{\mathrm{Pin}_{H_{p,q}}}{\mathbb Z_2},\qquad SO_{H_{p,q}}=\frac{\mathrm{Spin}_{H_{p,q}}}{\mathbb Z_2},\qquad SO^e_{H_{p,q}}=\frac{\mathrm{Spin}^{e}_{H_{p,q}}}{\mathbb Z_2},
\end{equation}
where $\mathrm{Spin}^{e}_{H_{p,q}}$ denotes the identity component of $\mathrm{Spin}_{H_{p,q}}$.

\begin{remark}
As emphasized in \cite{NotteCuello2026}, the structural difference from the classical case does not lie in the form of these constructions, formally identical to those of $\Cl(V,g)$, but rather in the dimension of the space on which they act: whereas the classical $\Gamma_{p,q}$ acts on an $n$-dimensional space, $\Gamma_{H_{p,q}}$ acts on the $2n$-dimensional space $H_{p,q}$, which allows multivectors and multiforms to be represented simultaneously.
\end{remark}

Since $\langle\,,\,\rangle$ does not depend on $(p,q)$, the groups above depend only on $n$. We write $\mathrm{Spin}_{H}$ and $\SpineH$ (or $\mathrm{Spin}^{e}_{H_{1,3}}$ when $V=\mathbb R^{1,3}$) for the hyperbolic Spin group and its identity component and $\SOeH$ for the identity component of $O_H$. For $u\in\mathrm{Spin}_H$, which is even, $\rho_u(v)=uvu^{-1}$, and its extension to $\RH$ is the inner automorphism $\mathrm{Ad}_u$.

\begin{lemma}[Lie algebra and normalization]\label{lem:lie}
$\Ext^2\HV\subset\RH$ is closed under commutators and is the Lie algebra of\footnote{The statement emulates the classical case of the Clifford algebra of multivectors (or multiforms) $\mathbb R_{1,3}=\Cl(\mathbb R^{1,3},\eta)$, where every $u\in\Spine$ can be written as $u=\pm e^{F}$ with $F\in\Ext^2\mathbb R^{1,3}\hookrightarrow\mathbb R_{1,3}$, so that the Lie algebra of $\Spine$ is the space of bivectors $\Ext^2\mathbb R^{1,3}$ with the commutator product \cite[\S 3.3.4]{RodriguesCapelas2016}. There the base space is $\mathbb R^{1,3}$ because it is the space that generates the Clifford algebra and on which $\Spine$ acts by $\mathrm{Ad}$. In $\RH=\Cl(\HV)$ both roles are played by the whole hyperbolic space $\HV=V\oplus V^*$: it generates $\RH$ and $\SpineH$ acts on it through $\rho$. Hence the analogue of $\Ext^2\mathbb R^{1,3}$ is $\Ext^2\HV$ and not $\Ext^2V$ or $\Ext^2V^*$, which, $V$ and $V^*$ being totally isotropic, are abelian subalgebras of dimension $n(n-1)/2$ each. In fact, as vector spaces, $\Ext^2\HV=\Ext^2V\oplus(V\wedge V^*)\oplus\Ext^2V^*$, of dimension $n(2n-1)=\dim\mathfrak{so}(\HV)$, and the mixed part $V\wedge V^*$ coincides with the image of $\mathfrak{gl}(V)$ under $j_*$ (both have dimension $n^2$, and $X_\Omega=\tfrac12a^k{}_le_k\wedge\theta^l$) (see the last part of the lemma).} $\SpineH$. For $X\in\Ext^2\HV$, $[X,v]\in\HV$ for $v\in\HV$, and $d\rho:X\mapsto[X,\cdot\,]|_{\HV}$ is a Lie algebra isomorphism onto $\mathfrak{so}(\HV,\langle\,,\,\rangle)$ with inverse
\begin{equation}\label{eq:norm}
\Omega\longmapsto X_\Omega=\tfrac14\,\Omega_{AB}\,E^A\wedge E^B,\qquad \Omega_{AB}:=\langle E_A,\Omega E_B\rangle,
\end{equation}
for any basis $\{E_A\}_{A=1}^{2n}$ of $\HV$ with reciprocal basis $\{E^A\}$ (i.e.\ $\langle E^A,E_B\rangle=\delta^A_B$). For $\Omega=j_*(a)$, $a\in\mathfrak{gl}(V)$, $ae_l=a^k{}_le_k$, one has $X_\Omega=\tfrac12a^k{}_l\,e_k\wedge\theta^l$. Moreover $\tilde X=-X$ for $X\in\Ext^2\HV$, and for every $\Omega\in\mathfrak{so}(\HV)$ the derivation of $\RH$ extending $\Omega$ is $[X_\Omega,\cdot\,]$.
\end{lemma}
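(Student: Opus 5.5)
The plan is to work entirely inside $\RH$ with the anticommutation relation \eqref{eq:anticomm}. It rests on one identity: for $u,v,w\in\HV$,
\[
[u\wedge v,w]=2\big(\langle v,w\rangle u-\langle u,w\rangle v\big).
\]
To prove it, write $u\wedge v=\tfrac12(uv-vu)$. Then use $uvw-wuv=2\langle v,w\rangle u-2\langle u,w\rangle v$, which follows by moving $w$ across $v$ and then $u$ with \eqref{eq:anticomm}, and antisymmetrize in $u,v$. This shows $[X,\cdot\,]$ maps $\HV$ into $\HV$ for $X\in\Ext^2\HV$. It also shows that the map is skew for $\langle\,,\,\rangle$, since $\langle [u\wedge v,w],w'\rangle$ is antisymmetric in $w,w'$. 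Hence $d\rho(X):=[X,\cdot\,]|_{\HV}\in\mathfrak{so}(\HV)$.

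\emph{Closure and Lie isomorphism.} Since $[X,\cdot\,]$ is a derivation of $\RH$ preserving $\HV$, it preserves the grade filtration and $\Ext^2\HV$. Indeed, $[X,u\wedge v]=[X,u]\wedge v+u\wedge[X,v]$: the commutator of $X$ with $uv$ and with $vu$ is computed by the Leibniz rule, and the scalar parts cancel by skewness. So $\Ext^2\HV$ is closed under commutators. The Jacobi identity gives $d\rho([X,Y])=[d\rho X,d\rho Y]$. Dimensions agree, both being $n(2n-1)$, so it suffices to show that \eqref{eq:norm} is a right inverse. Using the identity,
\[
[X_\Omega,E_C]=\tfrac14\Omega_{AB}\cdot2\big(\delta^B_C E^A-\delta^A_C E^B\big)=\tfrac12(\Omega_{AC}-\Omega_{CA})E^A=\Omega_{AC}E^A ,
\]
where the last step uses $\Omega_{CA}=-\Omega_{AC}$ (skewness). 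Finally $\Omega_{AC}E^A=\langle E_A,\Omega E_C\rangle E^A=\Omega E_C$. The formula is basis independent because $\tfrac14\Omega_{AB}E^A\wedge E^B$ is a full contraction.

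\emph{Lie algebra of $\SpineH$.} For $X\in\Ext^2\HV$, consider $u_t=e^{tX}$. We have $\tilde X=-X$, since reversion of a $2$-blade changes its sign, and $\hat X=X$. Hence $\bar u_t=e^{-tX}$, so $N(u_t)=1$ and $u_t$ is even. Moreover $\mathrm{Ad}_{u_t}=e^{t\,\mathrm{ad}X}$ preserves $\HV$, so $u_t\in\SpineH$. Thus $\Ext^2\HV\subset\mathrm{Lie}(\SpineH)$. Conversely, by \eqref{eq:quotients}, $\rho$ is a double covering of $\SOeH$ with kernel $\{\pm1\}$. Hence $\dim\SpineH=\dim\mathfrak{so}(\HV)=n(2n-1)$, and the inclusion is an equality. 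Equivalently, any tangent vector $Y$ at $1$ satisfies $d\rho(Y)=d\rho(X)$ for some bivector $X$. Then $Y-X$ commutes with $\HV$, hence is central, so it is a scalar. The norm condition forces $Y+\bar Y=0$, which kills that scalar. I expect this step, pinning the tangent space down exactly rather than just up to the center, to be the main point needing care.

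\emph{The $j_*(a)$ formula and the derivation claim.} Take the Witt basis $E_A=(e_k,\theta^k)$, whose reciprocal basis is $(\theta^k,e_k)$. For $\Omega=a\oplus(-a^*)$, the only nonzero components are $\Omega_{\theta^k,e_l}=\langle\theta^k,ae_l\rangle=a^k{}_l$ and $\Omega_{e_l,\theta^k}=-a^k{}_l$. Substituting into \eqref{eq:norm} gives $\tfrac14\big(a^k{}_l\,e_k\wedge\theta^l-a^k{}_l\,\theta^l\wedge e_k\big)=\tfrac12a^k{}_le_k\wedge\theta^l$. Finally, $[X_\Omega,\cdot\,]$ is a derivation of $\RH$ restricting to $\Omega$ on the generating space $\HV$. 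A derivation is determined by its values on generators, so it is the unique derivation extending $\Omega$.
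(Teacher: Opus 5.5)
Your proposal is correct and follows essentially the same route as the paper. Both rest on the same identity $[u\wedge v,w]=2(\langle v,w\rangle u-\langle u,w\rangle v)$, the same $\exp(tX)$ path plus the dimension count $n(2n-1)$ to identify the Lie algebra of $\SpineH$, and the same derivation-agreeing-on-generators argument for the last claim.

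The differences are minor. You prove closure of $\Ext^2\HV$ directly via the derivation property, whereas the paper inherits it from being a Lie algebra. You get bijectivity of $d\rho$ from the right inverse \eqref{eq:norm} plus the dimension count, where the paper argues injectivity through the center. You obtain $X_{j_*(a)}$ by substituting into \eqref{eq:norm}, rather than checking brackets with $e_m,\theta^m$ as the paper does.
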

\begin{proof}
From \eqref{eq:anticomm}, $[a\wedge b,v]=2(\langle b,v\rangle a-\langle a,v\rangle b)\in\HV$ for $a,b,v\in\HV$, which is skew for $\langle\,,\,\rangle$. A direct substitution gives $[X_\Omega,E_D]=\Omega_{AD}E^A=\Omega E_D$. For $X\in\Ext^2\HV$, $u(t)=\exp(tX)$ is even, $u\,v\,u^{-1}=e^{t\,\mathrm{ad}_X}v\in\HV$ and $\tilde u=\exp(-tX)=u^{-1}$, so $u(t)\in\Gamma^+_{H}$ with $N(u)=1$, a path from $1$: $u(t)\in\SpineH$. Since $\dim\Ext^2\HV=n(2n-1)=\dim\SpineH$ ($\ker\rho$ is discrete), $\Ext^2\HV$ is the Lie algebra. If $[X,\cdot]=0$ on $\HV$, $X$ is central, hence scalar, hence $0$: $d\rho$ is injective. For the Witt basis, $[\tfrac12a^k{}_le_k\wedge\theta^l,e_m]=a^k{}_me_k$ and $[\tfrac12a^k{}_le_k\wedge\theta^l,\theta^m]=-a^m{}_l\theta^l$, which is $j_*(a)=a\oplus(-a^*)$. The last statement holds because $[X_\Omega,\cdot]$ is a derivation of $\RH$ that agrees with $\Omega$ on the generating subspace $\HV$.
\end{proof}

\subsection{Canonical lift of $\mathrm{GL}^+(V)$}

We now present the construction of the canonical lift $\tilde\jmath:\mathrm{GL}^+(V)\to\SpineH$ and establishes the invariance of the spinor pairing under $\SpineH$. 

\begin{proposition}[Canonical lift of $\mathrm{GL}^+(V)$]\label{prop:lift}
There is a unique Lie group homomorphism $\tilde\jmath:\mathrm{GL}^+(V)\to\SpineH$ with $\rho\circ\tilde\jmath=j$. Its differential is $a\mapsto\tfrac12a^k{}_le_k\wedge\theta^l$, and for $A\in\mathrm{GL}^+(V)$, $\alpha\in\Ext V^*$:
\begin{equation}\label{eq:lift}
\begin{gathered}
\tilde\jmath(A)\,\alpha\,\tilde\jmath(A)^{-1}=(A^{-1})^*\alpha,\qquad \tilde\jmath(A)\fH=\fH\tilde\jmath(A)=(\det A)^{1/2}\fH,\\ \tilde\jmath(A)(\alpha\fH)=(\det A)^{1/2}\big((A^{-1})^*\alpha\big)\fH .
\end{gathered}
\end{equation}
\end{proposition}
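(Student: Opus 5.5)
Existence will come from covering space theory, uniqueness from connectedness, and the formulas \eqref{eq:lift} from the Lie algebra together with Theorem~\ref{thm:fH}.

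\textbf{Existence.} The map $\rho:\SpineH\to\SOeH$ is a double covering by \eqref{eq:quotients}. The map $j:\mathrm{GL}^+(V)\to\SOeH$ is a continuous homomorphism. Since $\mathrm{GL}^+(V)$ is connected and locally simply connected, $j$ lifts through $\rho$ to a continuous map $\tilde\jmath$ with $\tilde\jmath(1)=1$ iff $j_*\pi_1(\mathrm{GL}^+(V))\subset\rho_*\pi_1(\SpineH)$.

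For $n\ge3$, $\pi_1(\mathrm{GL}^+(V))\simeq\pi_1(\mathrm{SO}(n))\simeq\mathbb Z_2$. It is generated by the rotation loop $t\mapsto\exp(2\pi t\,a)$, $t\in[0,1]$, with $a=E_{12}-E_{21}$ the generator of rotations in the $(e_1,e_2)$-plane. The Lie algebra part of Lemma~\ref{lem:lie} gives the candidate lift $t\mapsto\exp(2\pi tX_a)$ with $X_a=\tfrac12(e_1\wedge\theta^2-e_2\wedge\theta^1)$. The condition then reduces to the single computation $\exp(2\pi X_a)=1$ in $\RH$.

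To verify it, I would rewrite $X_a$ in the orthonormal basis \eqref{eq:witt-from-orth}. This gives $X_a=\tfrac12(\sigma_1\sigma_2-\sigma_{n+1}\sigma_{n+2})$ up to sign conventions. The two summands commute, since they have disjoint support, and each squares to $-1$. Hence
$$\exp(2\pi X_a)=\exp(\pi\sigma_1\sigma_2)\exp(-\pi\sigma_{n+1}\sigma_{n+2})=(-1)(-1)=1.$$
So the loop lifts to a closed loop, and $\tilde\jmath$ exists as a continuous map. It is a homomorphism because $(A,B)\mapsto\tilde\jmath(AB)\tilde\jmath(B)^{-1}\tilde\jmath(A)^{-1}$ is continuous, takes values in $\ker\rho=\{\pm1\}$, equals $1$ at $(1,1)$, and has connected domain. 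A continuous homomorphism is smooth.

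\textbf{Uniqueness.} Two lifts differ by a continuous map into $\{\pm1\}$ that equals $1$ at the identity, since any homomorphism sends $1\mapsto1$. By connectedness this map is identically $1$, so the lifts agree.

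\textbf{Differential.} Since $d\rho\circ d\tilde\jmath=j_*$, Lemma~\ref{lem:lie} gives $d\tilde\jmath(a)=X_{j_*(a)}=\tfrac12a^k{}_le_k\wedge\theta^l$.

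\textbf{Formulas \eqref{eq:lift}.} Since $\tilde\jmath(A)$ is even, $\mathrm{Ad}_{\tilde\jmath(A)}=j(A)_\#$. On $\Ext V^*$, $j(A)_\#$ acts as $(A^{-1})^*$, which gives the first formula. It also gives $\tilde\jmath(A)\fH\tilde\jmath(A)^{-1}=\fH$ by Theorem~\ref{thm:fH}(ii), so $\tilde\jmath(A)$ commutes with $\fH$. Then $\fH\tilde\jmath(A)\fH=c(A)\fH$ by Theorem~\ref{thm:fH}(iv), and the commutation upgrades this to $\tilde\jmath(A)\fH=c(A)\fH$. The map $c:\mathrm{GL}^+(V)\to\mathbb R^*$ is a continuous character with $c(1)=1$.

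To identify $c$, I would differentiate. Using $e_k\fH=0$ and $e_k\theta^l\fH=(2\delta^l_k-\theta^le_k)\fH=2\delta^l_k\fH$,
$$\tfrac12a^k{}_l(e_k\wedge\theta^l)\fH=\tfrac12a^k{}_l(e_k\theta^l-\delta^l_k)\fH=\tfrac12\operatorname{tr}(a)\fH.$$
Hence $c(\exp a)=e^{\operatorname{tr}a/2}$. Since a character of a connected group is determined by its differential, $c(A)=(\det A)^{1/2}$. The last formula then follows:
$$\tilde\jmath(A)\alpha\fH=\big(\tilde\jmath(A)\alpha\tilde\jmath(A)^{-1}\big)\tilde\jmath(A)\fH=(\det A)^{1/2}\big((A^{-1})^*\alpha\big)\fH.$$

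\textbf{Main obstacle.} The step I expect to be delicate is the sign bookkeeping showing $\exp(2\pi X_a)=1$. One must confirm that the two orthonormal bivectors both square to $-1$, so that each exponential contributes a factor $-1$ and the two cancel rather than compound.
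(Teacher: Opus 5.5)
Your proposal is correct and follows essentially the paper's route. Existence is reduced to $\exp(2\pi X_a)=1$ through the commuting bivectors $\sigma_1\sigma_2$ and $\sigma_{n+1}\sigma_{n+2}$, each of square $-1$, and the character on $\fH$ is identified from $X_a\fH=\tfrac12\operatorname{tr}(a)\fH$. The remaining differences are minor and harmless: you use the opposite sign convention for the rotation generator $a$, you check the homomorphism property explicitly rather than via the identity component of the restricted covering, and you obtain the first formula directly from $\mathrm{Ad}_{\tilde\jmath(A)}=j(A)_\#$ instead of infinitesimally.
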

\begin{proof}
We first recall the domains and ranges of the maps involved. The isotropic extension \eqref{eq:j}, $j:\mathrm{GL}(V)\to O_H$, $j(A)=A\oplus(A^{-1})^*$, is an injective homomorphism (the block $[A]$ is recovered from the block-diagonal matrix of $j(A)$). Since $\mathrm{GL}^+(V)$ is connected, it restricts to an injective homomorphism
\[
j:\mathrm{GL}^+(V)\longrightarrow\SOeH ,
\]
whose image $j(\mathrm{GL}^+(V))\simeq\mathrm{GL}^+(V)$ is a subgroup of dimension $n^2<n(2n-1)=\dim\SOeH$. On the other hand, by \eqref{eq:quotients}, 
\[
\rho:\SpineH\longrightarrow\SOeH,\qquad \rho_u(v)=uvu^{-1},
\]
is a surjective homomorphism with kernel $\{\pm1\}$, i.e.\ a two-sheeted covering. Thus both maps take values in $\SOeH$ and the statement asks for a lift of $j$ through the covering $\rho$:
\[
\begin{array}{ccc}
 & & \SpineH\\[2pt]
 & {\scriptstyle\tilde\jmath}\nearrow & \big\downarrow{\scriptstyle\rho}\\[2pt]
\mathrm{GL}^+(V) & \xrightarrow{\ \ j\ \ } & \SOeH
\end{array}
\]
Since $\mathrm{GL}^+(V)$ is connected, such a lift with $\tilde\jmath(1)=1$ is unique if it exists and it exists if and only if the restricted covering
\[
\rho:\ \rho^{-1}(j(\mathrm{GL}^+(V)))\cap\SpineH\longrightarrow j(\mathrm{GL}^+(V))\simeq\mathrm{GL}^+(V)
\]
is trivial. This restricted map is again a two-sheeted covering (it contains $\ker\rho=\{\pm1\}$), and it is trivial iff the lift of a generator of $\pi_1(\mathrm{GL}^+(V))\simeq\pi_1(SO(n))$ is a closed loop. (That the spin covering is trivial over $j(\mathrm{GL}(V))$ was already observed by Gualtieri \cite[\S2.3, p.~10]{Gualtieri2004}, who also obtains the action $\sqrt{\det g}\,(g^*)^{-1}$ of $\mathrm{GL}^+(V)$ on $\Ext V^*$, cf.\ also the canonical lift $O(n)\to\mathrm{Spin}(n,n)$ of Karoubi \cite[p.~174]{Karoubi1968}. For $n=2$, $\pi_1(\mathrm{GL}^+(V))\simeq\mathbb Z$ is generated by the same rotation and the argument below applies verbatim.) Take a basis orthonormal for an auxiliary inner product and the rotation generator $a=E_{21}-E_{12}$. By Lemma~\ref{lem:lie} and \eqref{eq:witt-from-orth}, $X_a=\tfrac12(e_2\theta^1-e_1\theta^2)=\tfrac12(-\sigma_1\sigma_2+\sigma_{n+1}\sigma_{n+2})$ and the two bivectors commute (and square to $-1$), so
$$\exp(tX_a)=\big(\cos\tfrac t2-\sigma_1\sigma_2\sin\tfrac t2\big)\big(\cos\tfrac t2+\sigma_{n+1}\sigma_{n+2}\sin\tfrac t2\big),$$
which equals $(-1)(-1)=1$ at $t=2\pi$. Hence the covering is trivial and its identity component maps isomorphically onto $\mathrm{GL}^+(V)$. This defines $\tilde\jmath$. The first identity \eqref{eq:lift} holds infinitesimally by Lemma~\ref{lem:lie}, hence on the connected group. By Theorem~\ref{thm:fH}(ii), $\tilde\jmath(A)$ commutes with $\fH$, so $\tilde\jmath(A)\fH=\fH\tilde\jmath(A)\fH\in\mathbb R\fH$: $\tilde\jmath(A)\fH=\chi(A)\fH$ with $\chi$ a positive character. Infinitesimally, $X_a\fH=\tfrac12a^k{}_l(e_k\theta^l-\delta^l_k)\fH=\tfrac12\operatorname{tr}(a)\fH$, because $e_k\theta^l\fH=-\theta^le_k\fH=0$ for $k\neq l$ and $e_k\theta^k\fH=2\fH$. Hence $\chi=(\det)^{1/2}$ and the last identity follows from the first two.
\end{proof}

\begin{proposition}[Spinor pairing]\label{prop:pairing}
For $\psi,\varphi\in S=\RH\fH$ define $\beta(\psi,\varphi)\in\mathbb R$ by $\tilde\psi\varphi=\beta(\psi,\varphi)\,\theta^*$. Then:
\begin{enumerate}[label=(\roman*)]
\item $\beta$ is well defined and $\beta(\alpha\fH,\alpha'\fH)$ is the coefficient of $\theta^*$ in $(\tilde\alpha\wedge\alpha')_{[n]}$ (the Chevalley--Mukai pairing of $\Ext V^*$);
\item $\beta$ is non-degenerate and $\beta(\varphi,\psi)=(-1)^{n(n-1)/2}\beta(\psi,\varphi)$. For $n=4$ it is symmetric of neutral signature $(8,8)$;
\item $\beta(u\psi,u\varphi)=\beta(\psi,\varphi)$ for $u\in\SpineH$, and $\beta(X\psi,\varphi)+\beta(\psi,X\varphi)=0$ for $X\in\Ext^2\HV$. For $u\in\mathrm{Spin}_H$ one only has $\beta(u\psi,u\varphi)=N(u)\,\beta(\psi,\varphi)=\pm\beta(\psi,\varphi)$ (e.g.\ $u=\sigma_1\sigma_{n+1}$ has $N(u)=-1$ and reverses the sign of $\beta$; cf.\ \cite[\S2.4, p.~11]{Gualtieri2004});
\item for $n$ even $S^+\perp S^-$, for $n$ odd $S^\pm$ are isotropic;
\item on $S_c=(\CRH)\fH$, $h(\psi,\varphi)$ defined by $\widetilde{\psi^\natural}\varphi=h(\psi,\varphi)\theta^*$ ($\natural$ = complex conjugation of $\CRH$, fixing $\RH$) is sesquilinear, non-degenerate, $(-1)^{n(n-1)/2}$-Hermitian and $\SpineH$-invariant.
\end{enumerate}
\end{proposition}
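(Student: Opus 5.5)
The whole proof rests on Theorem~\ref{thm:fH}(vii), $\fcH\gamma\fH=\gamma_{[n]}$, together with $\tilde f_H=\fcH$. For part (i), write $\psi=\alpha\fH$ and $\varphi=\alpha'\fH$ with $\alpha,\alpha'\in\Ext V^*$; this is possible by Theorem~\ref{thm:fH}(v), and the representation is unique. Then
\[
\tilde\psi\varphi=\fcH\,\tilde\alpha\,\alpha'\fH=\fcH(\tilde\alpha\wedge\alpha')\fH=(\tilde\alpha\wedge\alpha')_{[n]}\in\mathbb R\theta^*,
\]
because Clifford and exterior products coincide on $\Ext V^*$. So $\beta$ is well defined and equals the Chevalley--Mukai pairing.

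For part (ii), non-degeneracy follows by pairing $\theta^J$ with $\theta^{K}$, where $K$ is the complement of $J$: $\tilde\theta^J\wedge\theta^K=\pm\theta^*$, while all other basis pairs contribute nothing to top degree. For the symmetry, I would apply reversion to $\tilde\psi\varphi=\beta\,\theta^*$. This gives $\tilde\varphi\psi=\beta(\psi,\varphi)\tilde\theta^*=(-1)^{n(n-1)/2}\beta(\psi,\varphi)\theta^*$. For $n=4$ the sign is $+$, so $\beta$ is symmetric. For the signature, the basis pairs $\{\theta^J,\theta^K\}$ with $K\neq J$ span hyperbolic planes. Since $\theta^J\neq\theta^K$ whenever $n\geq1$, every basis element is paired with a different one, and I expect the signature to be $(8,8)$. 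One caveat: the pairing couples degree $r$ with degree $n-r$, so for $n=4$ the degree-$2$ forms pair among themselves. Here $J$ and its complement are distinct $2$-sets, so each such pair still forms a hyperbolic plane. The form is therefore split, with signature $(2^{n-1},2^{n-1})$.

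For part (iii), use $\widetilde{u\psi}\,u\varphi=\tilde\psi\,\tilde u u\,\varphi$ together with $\tilde uu=\pm N(u)$. Since $u$ is even, $\bar u=\tilde u$, so $\tilde u u=\bar u u$. I would take $\bar uu=N(u)$, using that $N(u)$ is a scalar, so that $\bar u$ is, up to that scalar, the inverse of $u$. On $\SpineH$ one has $N(u)=1$: $N$ is continuous with values $\pm1$, and $N(1)=1$. The infinitesimal version comes from differentiating at $u=\exp(tX)$, or directly from $\tilde X=-X$ (Lemma~\ref{lem:lie}). For the example, $\sigma_1\sigma_{n+1}$ satisfies $\overline{\sigma_1\sigma_{n+1}}\,\sigma_1\sigma_{n+1}=\sigma_{n+1}\sigma_1\sigma_1\sigma_{n+1}=\sigma_{n+1}^2=-1$.

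For part (iv), elements of $S^\pm$ have definite parity. The product $\tilde\alpha\wedge\alpha'$ can have a nonzero degree-$n$ component only if $\deg\alpha+\deg\alpha'\equiv n \pmod 2$. For $n$ even this forces $\alpha$ and $\alpha'$ to have the same parity, so $S^+\perp S^-$. For $n$ odd it forces opposite parity, so each of $S^\pm$ is isotropic.

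For part (v), the same computation works with $\natural$ inserted. We have $\natural\fH=\fH$, and $\natural$ commutes with reversion. Hence $h(\alpha\fH,\alpha'\fH)$ is the coefficient of $\theta^*$ in $(\widetilde{\alpha^\natural}\wedge\alpha')_{[n]}$. This expression is antilinear in the first argument and linear in the second. Applying reversion and $\natural$ gives $(-1)^{n(n-1)/2}$-Hermitian symmetry. Invariance under $\SpineH$ follows as in (iii), since $u$ is real and so $u^\natural=u$.

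The main obstacle is the signature count in (ii), in particular the middle-degree pairings when $n$ is even. I would check it for $n=4$ by listing the $2^4$ basis monomials and their complements, recording the sign of $\tilde\theta^J\wedge\theta^K$ for each pair.
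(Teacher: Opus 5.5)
Your proof is correct and follows the paper's argument for (i), for non-degeneracy and symmetry in (ii), and for (iii)--(v). In both, everything reduces to $\fcH\gamma\fH=\gamma_{[n]}$. Symmetry comes from applying reversion to $\tilde\psi\varphi=\beta\,\theta^*$. Invariance uses $\tilde uu=N(u)$ with $N\equiv1$ on the identity component, together with $\tilde X=-X$. Orthogonality in (iv) comes from degree $r$ pairing only with degree $n-r$, and (v) uses $\natural$ commuting with reversion, with $u$ real.

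The only real difference is the signature count for $n=4$. The paper splits $S$ $\beta$-orthogonally by degree into $(S_0\oplus S_4)\oplus(S_1\oplus S_3)\oplus S_2$. It treats the first two summands as perfectly paired isotropic subspaces, and it diagonalizes the wedge pairing on $\Ext^2\mathbb R^4$ by hand with $\theta^{12}\pm\theta^{34}$, $\theta^{13}\mp\theta^{24}$, $\theta^{14}\pm\theta^{23}$. You instead note that the Gram matrix in the basis $\{\theta^J\fH\}$ is a symmetric signed permutation matrix supported on the fixed-point-free involution $J\mapsto J^c$. So $S$ is an orthogonal sum of $2^{n-1}$ planes, each with zero diagonal and off-diagonal entries $\pm1$, hence each of signature $(1,1)$.

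Your version is shorter and handles the middle degree exactly like the others. It also gives $(2^{n-1},2^{n-1})$ for every $n\equiv0,1\pmod 4$ at once. The check you list as the ``main obstacle'' is therefore already settled by your own argument. The paper's version has the advantage of exhibiting an explicit diagonalizing basis of $\Ext^2V^*$.

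Two small points remain. In (iii) the relation is exactly $\tilde uu=N(u)$, not $\pm N(u)$. In (v) you never state that $h$ is non-degenerate. This follows at once from $h(\psi,\varphi)=\beta_{\mathbb C}(\psi^\natural,\varphi)$ and the non-degeneracy of $\beta$, which is how the paper argues it.
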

\begin{proof}
(i) $\tilde\psi\varphi=\tilde\fH\tilde\alpha\alpha'\fH=\fcH(\tilde\alpha\alpha')\fH=(\tilde\alpha\wedge\alpha')_{[n]}$ by Theorem~\ref{thm:fH}(vii). 
\par\noindent (ii) By Theorem~\ref{thm:fH}, $\{\theta^J\fH\}_{J\subset[n]}$ is a basis of $S$. Since $\tilde\theta^J=(-1)^{|J|(|J|-1)/2}\theta^J$ and $\theta^J\wedge\theta^K=0$ whenever $J\cap K\neq\emptyset$, the degree-$n$ component of
$\tilde\theta^J\wedge\theta^K$ is nonzero if and only if $K=J^c$. Writing $\theta^J\wedge\theta^{J^c}=\varepsilon(J)\,\theta^*$ with $\varepsilon(J)=\pm1$,
part~(i) gives
\[
  \beta(\theta^J\fH,\theta^K\fH)=(-1)^{|J|(|J|-1)/2}\,\varepsilon(J)\,\delta_{K,J^c}.
\]
In the basis $\{\theta^J\fH\}$, each row and each column of the Gram matrix of $\beta$
has exactly one nonzero entry, equal to $\pm1$, located at $K=J^c$ (note that
$J\mapsto J^c$ is a bijection). Hence the Gram matrix is a signed permutation matrix,
so $\det=\pm1$ and $\beta$ is non-degenerate. Reversion is an $\mathbb R$-linear anti-automorphism and $\beta(\psi,\varphi)\in\mathbb R$, so
\[
  \tilde\varphi\psi=\widetilde{\tilde\psi\varphi}
  =\widetilde{\beta(\psi,\varphi)\,\theta^*}=\beta(\psi,\varphi)\,\tilde\theta^*.
\]
Since $\theta^*=\theta^1\cdots\theta^n$ is a product of $n$ pairwise anticommuting elements, $\tilde\theta^*=\theta^n\cdots\theta^1=(-1)^{n(n-1)/2}\theta^*$.
On the other hand, $\tilde\varphi\psi=\beta(\varphi,\psi)\,\theta^*$ by definition.
Comparing coefficients of $\theta^*\neq0$,
\[
  \beta(\varphi,\psi)=(-1)^{n(n-1)/2}\,\beta(\psi,\varphi).
\]

For $n=4$ $(-1)^{n(n-1)/2}=(-1)^6=1$, so $\beta$ is symmetric. Let $S_r=\{\alpha\fH:\alpha\in\Ext^rV^*\}$. By the formula above, $\beta(S_r,S_s)=0$ unless $r+s=4$, so
\[
  S=(S_0\oplus S_4)\;\oplus\;(S_1\oplus S_3)\;\oplus\;S_2
\]
is a $\beta$-orthogonal decomposition. We compute the signature of each summand. For summands $S_r\oplus S_{4-r}$, $r\in\{0,1\}$, since $r+r\neq4$ and $(4-r)+(4-r)\neq4$, both $S_r$ and $S_{4-r}$ are isotropic and $\beta$ pairs them perfectly. Choose bases $\{e_i\}$ of $S_r$ and $\{f_i\}$ of $S_{4-r}$ with $\beta(e_i,f_j)=\delta_{ij}$. By symmetry of $\beta$, the vectors $e_i\pm f_i$ satisfy $\beta(e_i+f_i,e_i+f_i)=2$, $\beta(e_i-f_i,e_i-f_i)=-2$ and are mutually orthogonal. Hence this summand has signature $(d_r,d_r)$ with $d_r=\binom4r$, that is, $(1,1)$ for $r=0$ and $(4,4)$ for $r=1$. Concerning summand $S_2$, for $\alpha\in\Ext^2V^*$ we have $\tilde\alpha=-\alpha$, so $\beta(\alpha\fH,\alpha'\fH)=-\,c$, where $\alpha\wedge\alpha'=c\,\theta^*$. The vectors
  \[
    \theta^{12}+\theta^{34},\quad \theta^{13}-\theta^{24},\quad \theta^{14}+\theta^{23}
  \]
  satisfy $\alpha\wedge\alpha=+2\theta^*$, while
  \[
    \theta^{12}-\theta^{34},\quad \theta^{13}+\theta^{24},\quad \theta^{14}-\theta^{23}
  \]
  satisfy $\alpha\wedge\alpha=-2\theta^*$. These six vectors are mutually
  $\wedge$-orthogonal (two monomials sharing an index have zero wedge product and
  $(\theta^{ij}+\theta^{kl})\wedge(\theta^{ij}-\theta^{kl})=0$). So the wedge pairing on $\Ext^2\mathbb R^4$ has signature $(3,3)$ and the overall sign change leaves $S_2$ with signature $(3,3)$. Adding up, $\beta$ has signature $(1+4+3,\;1+4+3)=(8,8)$.

\par\noindent (iii) Let $u\in\SpineH$. Since $u$ is even, $\bar u=\tilde u$, so $N(u)=u\tilde u=\pm1$ and as $N$ is continuous, $N(1)=1$ and $\SpineH$ is connected, $N(u)=1$, i.e.\ $\tilde u=u^{-1}$. Moreover $uS\subset S$, since $S$ is a left ideal. Hence $\widetilde{(u\psi)}\,(u\varphi)=\tilde\psi\,\tilde uu\,\varphi=\tilde\psi\varphi=\beta(\psi,\varphi)\theta^*$. For $u\in\mathrm{Spin}_H$ the same computation, with $\tilde uu=N(u)=\pm1$, gives $\beta(u\psi,u\varphi)=N(u)\beta(\psi,\varphi)$. Infinitesimally, $\widetilde{(X\psi)}\,\varphi+\tilde\psi\,X\varphi=\tilde\psi(\tilde X+X)\varphi=0$, because $\tilde X=-X$ (Lemma~\ref{lem:lie}).
\par\noindent (iv) By (i), $\beta(\theta^J\fH,\theta^K\fH)\neq0$ only if $\tilde\theta^J\wedge\theta^K$ has a component of degree $n$, i.e.\ $K=J^c$: $\beta$ pairs degree $r$ with degree $n-r$. Recall that $S^+$ consists of even degrees and $S^-$ of odd degrees. If $n$ is even, $r$ and $n-r$ have the same parity, so $\beta$ only pairs $S^+$ with $S^+$ and $S^-$ with $S^-$: $S^+\perp S^-$. If $n$ is odd, the parities are opposite, so $\beta$ only pairs $S^+$ with $S^-$, hence $\beta|_{S^\pm}=0$, i.e.\ $S^\pm$ are isotropic (and $\beta$ puts them in duality).
\par\noindent (v) Let $\beta_{\mathbb C}$ be the $\mathbb C$-bilinear extension of $\beta$ to $S_c=\mathbb C\otimes S$. Since $\natural$ commutes with reversion, $h(\psi,\varphi)=\beta_{\mathbb C}(\psi^\natural,\varphi)$. Hence $h$ is antilinear in $\psi$, linear in $\varphi$, and non-degenerate because $\beta$ is. Since $\beta$ is real, $\overline{\beta_{\mathbb C}(a,b)}=\beta_{\mathbb C}(a^\natural,b^\natural)$ and (ii) gives $h(\varphi,\psi)=(-1)^{n(n-1)/2}\,\overline{h(\psi,\varphi)}$. Finally, $u\in\SpineH$ is real, so $(u\psi)^\natural=u\psi^\natural$ and invariance follows from (iii).
\end{proof}

\section{The hyperbolic Clifford bundle and hyperbolic spin structures}\label{sec:spinstructure}

Throughout, $M$ is an oriented manifold of dimension $n\geq 3$ and $H(M):=TM\oplus T^*M$ is endowed with the non-degenerate symmetric bilinear form \eqref{eq:hypform}. To distinguish them from the elements $x_*,y_*\in V$ and $x^*,y^*\in V^*$, the elements (and sections) of $TM$ will be denoted by $Y_*,W_*$, those of $T^*M$ by $Y^*,W^*$, and those of $\HM$ by $\mathbf Y=Y_*\oplus Y^*$, $\mathbf W=W_*\oplus W^*$. Thus, fibrewise, \eqref{eq:hypform} reads $\langle\mathbf Y,\mathbf W\rangle=Y^*(W_*)+W^*(Y_*)$.

The hyperbolic Clifford bundle over $M$ is defined, in complete analogy with the classical construction of $\Cl(M,g)$ as a union of fibres \cite[Def.\ 4]{MosnaRodrigues2004}, by
$$\Cl(\HM)=\bigsqcup_{x\in M} \Cl(T_xM\oplus T_x^{*}M,\langle\,,\,\rangle),$$
which is precisely the definition given in \cite[\S 4.2]{NotteCuello2026}. This bundle is built over the \emph{hyperbolic frame bundle}
$$F_H(M)=\bigcup_{x\in M} F_xM\oplus F_x^{*}M,$$
introduced in \cite[\S 4.1]{NotteCuello2026} as a principal bundle with structure group $\mathrm{GL}(2n,\mathbb R)$, constructed from the Whitney sum $TM\oplus T^{*}M$. Reducing the structure group to $O_{H}$ (orthonormal frames with respect to $\langle\,,\,\rangle$) and then to $SO_{H}$ and $\SOeH$, yields the principal bundles $P_{O_H}(M)$, $P_{SO_H}(M)$ and $P_{\SOeH}(M)$ \cite[\S 4.1]{NotteCuello2026}. We make the last reduction explicit.

\smallskip\noindent\emph{Witt-induced frames.} For a frame $e=(e_k)$ of $T_xM$ with dual coframe $(\theta^k)$, let $\iota(e):\HV\to H_x(M)$ be the linear map sending the model Witt basis to $e_k$, $\theta^k$ (we use the same letters in the model and on $M$). It is an isometry for $\langle\,,\,\rangle$ \emph{for every} frame $e$, since $\langle\,,\,\rangle$ is built from the duality pairing only. It sends the model orthonormal basis \eqref{eq:orthonormal} to an orthonormal frame of $H_x(M)$, called the \emph{Witt-induced frame}\footnote{Explicitly, the Witt-induced frame of $e$ is $\sigma_k=\tfrac{1}{\sqrt2}(e_k\oplus\theta^k)$, $\sigma_{n+k}=\tfrac{1}{\sqrt2}(-e_k\oplus\theta^k)$, as in \eqref{eq:orthonormal}. Since $\langle e_k,\theta^l\rangle=\theta^l(e_k)=\delta^l_k$ and $\langle e_k,e_l\rangle=\langle\theta^k,\theta^l\rangle=0$ for any frame, these vectors are orthonormal with no metric on $M$. Thus every frame of $T_xM$, orthonormal or not, induces an orthonormal frame of $H_x(M)$. Under $e\mapsto eA$ the coframe changes by $A^{-1}$, so the induced frame changes by $j(A)$.} of $e$. For $A\in\mathrm{GL}(n,\mathbb R)$ one has $\iota(eA)=\iota(e)\circ j(A)$. We take $P_{\SOeH}(M)$ to be the $\SOeH$-bundle generated by the Witt-induced frames of positively oriented frames,
\begin{equation}\label{eq:PSOeH}
P_{\SOeH}(M)=F^+(M)\times_j\SOeH,
\end{equation}
where $F^+(M)$ is the bundle of positively oriented frames of $TM$. Since $j(\mathrm{GL}^+)\subset\SOeH$, this is well defined. An orthonormal frame at $x$ is regarded as an isometry $\phi:\HV\to H_x(M)$ and $\phi_\#:\RH\to\Cl(H_x(M))$ denotes its extension to an algebra isomorphism.

If $g_{\alpha\beta}:U_\alpha\cap U_\beta\to\SOeH$ are the transition functions of $P_{\SOeH}(M)$ with respect to local sections, the transition functions of the hyperbolic Clifford bundle are the algebra automorphisms $h_{\alpha\beta}(x)=(g_{\alpha\beta}(x))_\#$ of $\RH$. When a hyperbolic spin structure (Definition~\ref{def:spinstructure}) is given, with $\SpineH$-valued transition functions $\tilde g_{\alpha\beta}$ such that $\rho(\tilde g_{\alpha\beta})=g_{\alpha\beta}$, they act by conjugation,
\[
h_{\alpha\beta}(x)\,a=\tilde g_{\alpha\beta}(x)\,a\,\tilde g_{\alpha\beta}(x)^{-1},\qquad a\in\RH
\]
(cf.\ Proposition~\ref{prop:ClAd}). In all cases and the hyperbolic Clifford bundle is the associated bundle
\begin{equation}\label{eq:ClHM-assoc}
\Cl(\HM)=P_{\SOeH}(M)\times_{\rho}\RH,
\end{equation}
where $\rho$ acts on $\RH$ through the extension of the defining representation to algebra automorphisms \cite[\S 4.2]{NotteCuello2026}.

\smallskip\noindent\emph{Canonical global sections.} By Theorem~\ref{thm:fH}(ii), the element $\fH$ computed in the Witt-induced frame of any frame $e$ of $T_xM$ is independent of $e$. This defines a global section, still denoted $\fH$, of\footnote{Indeed, in the Witt-induced frame of $e$ one has $\fH=2^{-n}(-1)^{n(n-1)/2}e_*\theta^*$ (Theorem~\ref{thm:fH}(ii)). If $e'=eA$, i.e.\ $e'_k=A^l{}_k e_l$, its dual coframe is $\theta'^k=(A^{-1})^k{}_l\theta^l$. Since $V$ and $V^*$ are totally isotropic, the Clifford products $e_*$, $\theta^*$ are exterior products, hence $e'_*=(\det A)\,e_*$ and $\theta'^*=(\det A)^{-1}\theta^*$. The two factors cancel whatever the sign of $\det A$, so no orientation of $M$ is needed, and $\iota(eA)_\#\fH=\iota(e)_\#\,j(A)_\#\fH=\iota(e)_\#\fH$ for every $A\in\mathrm{GL}(n,\mathbb R)$, and $\fH(x)$ is well defined. In the Witt-induced frames of local frames of $TM$ it has constant components, hence it is smooth and on overlaps $U_\alpha\cap U_\beta$, where $e_\beta=e_\alpha A_{\alpha\beta}$, the local expressions coincide by the computation above, so they glue to a global section. At each point it is a primitive idempotent of $\Cl(H_x(M))$. Unlike $\varpi$, $\fH$ is invariant only under $j(\mathrm{GL}(V))$, not under all of $\SOeH$: it encodes the splitting $\HM=TM\oplus T^*M$.} $\Cl(\HM)$. Likewise, since $\det j(A)=1$, the volume element $\varpi$ of \eqref{eq:volume} defines a global section $\varpi$ of $\Cl(\HM)$, and $\HM$ carries a canonical orientation.\footnote{With effect, the Witt-induced frames of two frames $e$ and $eA$ of $T_xM$ differ by $j(A)$, whose matrix $\mathrm{diag}([A],([A]^{-1})^T)$ has determinant $\det A\,\det A^{-1}=1$. Since an isometry $R$ of $\HV$ acts on the top-degree part by $R_\#(\sigma_1\cdots\sigma_{2n})=(\det R)\,\sigma_1\cdots\sigma_{2n}$, the product $\sigma_1\cdots\sigma_{2n}$ is the same in all Witt-induced frames at $x$, for every $A\in\mathrm{GL}(n,\mathbb R)$ (not only $\mathrm{GL}^+$). Choosing local frames of $TM$, $\varpi$ has constant components in the corresponding Witt-induced frames, so it is a smooth global section of $\bigwedge^{2n}\HM\subset\Cl(\HM)$, nowhere zero since $\varpi^2=1$. Equivalently, all Witt-induced frames at $x$ define the same orientation of $H_x(M)$, the one in which $\varpi$ is positive. This orientation exists even if $M$ is not orientable, reflecting the canonical trivialization $\bigwedge^{2n}(TM\oplus T^*M)\simeq\bigwedge^nTM\otimes\bigwedge^nT^*M\simeq M\times\mathbb R$.} We call $\fH$ and $\varpi$ the \emph{canonical idempotent} and the \emph{volume element} of $\Cl(\HM)$.

We now turn our attention to the hyperbolic spin structures, showing that every oriented manifold carries such an structure. 

\begin{definition}[Hyperbolic spin structure]\label{def:spinstructure}
A \emph{hyperbolic spin structure} on $M$ is a principal bundle
$$\Pi_{\Upsilon}:P_{\SpineH}(M)\to M,$$
with structure group $\SpineH$, together with a map
$$\Upsilon:P_{\SpineH}(M)\to P_{\SOeH}(M),$$
such that:
\begin{enumerate}[label=(\alph*)]
\item $\Pi(\Upsilon(p))=\Pi_{\Upsilon}(p)$ for all $p\in P_{\SpineH}(M)$, where $\Pi$ is the projection of $P_{\SOeH}(M)$;
\item $\Upsilon(pu)=\Upsilon(p)\rho_u$ for all $p\in P_{\SpineH}(M)$, $u\in\SpineH$, with $\rho_u(v)=uvu^{-1}$, $v\in\HV$.
\end{enumerate}
An oriented manifold equipped with a hyperbolic spin structure is called a \emph{hyperbolic spin manifold} and every section of $P_{\SpineH}(M)$ is called a \emph{hyperbolic spin frame field} \cite[\S 4.2]{NotteCuello2026}.
\end{definition}

Definition~\ref{def:spinstructure} is given in \cite{NotteCuello2026}, by analogy with the standard notion of spin structure on a Lorentzian manifold \cite{RodriguesCapelas2016,MosnaRodrigues2004,Geroch1,Geroch2}. In the classical case, $P_{\Spine}(M)$ exists if and only if the second Stiefel--Whitney class of $M$ vanishes. In the hyperbolic case there is no such obstruction: by Karoubi \cite[\S1.1, Prop.~1.1.26, pp.~174--175]{Karoubi1968}, a bundle $E=E^+\oplus E^-$ with structure group $O(p,q)$, $E^\pm$ positive (negative) definite, reduces to $\mathrm{Spin}(p,q)$ if and only if $w_1(E^+)+w_1(E^-)=0$ and $w_2(E^+)+w_2(E^-)=0$. For $\HM=TM\oplus T^*M$ both $E^+$ and $E^-$ project isomorphically onto $TM$, so these classes cancel mod~$2$ in the Whitney sum, and the hyperbolic frame bundle always admits the required lifting \cite[\S2.8, Props.~2.27--2.28, pp.~15--16]{Gualtieri2004} (for the general lifting formalism see also \cite{GP}). The following theorem realizes, in the formalism of $\RH$ and $\Cl(\HM)$, the explicit construction of these spin structures from the frame bundle of $TM$ indicated by Gualtieri \cite[\S2.8, p.~16]{Gualtieri2004} (cf.\ also the canonical lift $O(n)\to\mathrm{Spin}(n,n)$ of Karoubi \cite[p.~174]{Karoubi1968}), and exhibits a canonical hyperbolic spin structure.

\begin{theorem}[Canonical hyperbolic spin structure]\label{thm:canonical}
Let $M$ be oriented and let $F^+(M)$ be its bundle of positively oriented frames. Then
$$\Pcan:=F^+(M)\times_{\tilde\jmath}\SpineH,\qquad [e,u]\sim[eA,\tilde\jmath(A)^{-1}u],$$
$$[e,u]\cdot w:=[e,uw],\qquad \Upsilon([e,u]):=\iota(e)\circ\rho_u,$$
with $\tilde\jmath$ the canonical lift of Proposition~\ref{prop:lift}, is a hyperbolic spin structure in the sense of Definition~\ref{def:spinstructure}. It depends only on the orientation of $M$. For connected $M$, $P_{SO_H}(M)$ has exactly two reductions to $\SOeH$, namely $F^{\pm}(M)\times_j\SOeH$, one for each orientation. Any other hyperbolic spin structure over $P_{\SOeH}(M)$ differs from it by a principal $\mathbb Z_2$-bundle, i.e.\ by a class in $H^1(M;\mathbb Z_2)$.
\end{theorem}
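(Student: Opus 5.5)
The plan is to check the axioms of Definition~\ref{def:spinstructure} directly on the associated bundle, using that $\tilde\jmath$ is a homomorphism with $\rho\circ\tilde\jmath=j$ (Proposition~\ref{prop:lift}). First, $\Pcan=F^+(M)\times_{\tilde\jmath}\SpineH$ is a principal $\SpineH$-bundle by the general associated-bundle construction. Here $\mathrm{GL}^+(n,\mathbb R)$ acts on the left of $\SpineH$ via $\tilde\jmath$, and this commutes with right multiplication, so $[e,u]\cdot w=[e,uw]$ is a well-defined, free, fibre-transitive right action. Local trivializations come from local sections of $F^+(M)$, and the transition functions are $\tilde\jmath(A_{\alpha\beta})$.

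Next, I check that $\Upsilon$ is well defined and equivariant. The identity $\iota(eA)=\iota(e)\circ j(A)$ gives
\[
\Upsilon([eA,\tilde\jmath(A)^{-1}u])=\iota(e)\circ j(A)\circ\rho_{\tilde\jmath(A)^{-1}}\circ\rho_u=\iota(e)\circ\rho_u ,
\]
since $\rho$ is a homomorphism and $\rho\circ\tilde\jmath=j$. The image lies in $P_{\SOeH}(M)=F^+(M)\times_j\SOeH$ of \eqref{eq:PSOeH}, because $\iota(e)\circ R$ with $R\in\SOeH$ is exactly the frame $[e,R]$. Condition (a) holds because both $[e,u]$ and $\iota(e)\circ\rho_u$ lie over the base point of $e$. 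Condition (b) is immediate: $\Upsilon([e,uw])=\iota(e)\rho_u\rho_w=\Upsilon([e,u])\rho_w$. Thus $\Upsilon$ is the bundle map $F^+\times_{\tilde\jmath}\SpineH\to F^+\times_j\SOeH$ induced by $\rho$, and it is a fibrewise double cover. The construction uses only $F^+(M)$, i.e.\ the orientation, and the canonical $\tilde\jmath$, which is unique by Proposition~\ref{prop:lift}; so no further choices enter.

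For the statement about reductions to $\SOeH$, I use that $P_{SO_H}(M)$ consists of the Witt-induced frames of all frames composed with $SO_H$. Since $SO_H$ has two components, reductions to $\SOeH$ over connected $M$ correspond to continuous choices of a component in each fibre, i.e.\ to sections of the associated $\mathbb Z_2=SO_H/\SOeH$-bundle. A frame $e$ with $\det A<0$ changing to $eA$ has $j(A)\in SO_H$. The key point is to show $j(A)\notin\SOeH$ when $\det A<0$. I would argue this via the spinor norm: take $A=\mathrm{diag}(-1,1,\dots,1)$, so $j(A)$ is the reflection pair in $\sigma_1$ and $\sigma_{n+1}$, which is $\rho(\sigma_1\sigma_{n+1})$ up to sign. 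It has $N=-1$ (as already noted in Proposition~\ref{prop:pairing}(iii)), hence it lies outside the identity component. So the $\mathbb Z_2$-bundle of components is canonically the orientation double cover of $M$. Its sections are the two orientations when $M$ is orientable and connected, which yields exactly $F^\pm(M)\times_j\SOeH$.

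Finally, for the classification, I use the standard argument. Given two spin structures $(P,\Upsilon)$ and $(P',\Upsilon')$ over the same $P_{\SOeH}(M)$, the fibres of $P\times_M P'$ restricted to pairs with $\Upsilon(p)=\Upsilon'(p')$, taken modulo the diagonal $\SpineH$-action, form a principal $\ker\rho=\{\pm1\}$-bundle. Conversely, twisting $P$ by a $\mathbb Z_2$-bundle $L$ (so $P\times_{\mathbb Z_2}L$, using that $\pm1$ is central) gives another spin structure. On the cocycle level this is the statement that lifts $\tilde g_{\alpha\beta}$ of fixed $g_{\alpha\beta}$ differ by $\mathbb Z_2$-valued cocycles, so the set of spin structures is a torsor for $H^1(M;\mathbb Z_2)$. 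I expect the main obstacle to be the component argument for $j(A)$ with $\det A<0$, namely verifying cleanly that it lies outside $\SOeH$; the rest is formal.
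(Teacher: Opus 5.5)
Your proposal is correct and follows the paper's proof in outline. The well-definedness and equivariance check of $\Upsilon$ (via $\iota(eA)=\iota(e)\circ j(A)$ and $\rho\circ\tilde\jmath=j$) is the paper's. So is the description of reductions to $\SOeH$ as sections of the $\mathbb Z_2$-bundle $P_{SO_H}(M)/\SOeH$, and so is the test element $A=\mathrm{diag}(-1,1,\dots,1)$. The differences lie in three sub-arguments.

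First, for $j(A)\notin\SOeH$ the paper only says that $j(A)$ reverses $\sigma_1$ and $\sigma_{n+1}$. This tacitly uses the characterization of the identity component of $SO(n,n)$ by orientations of maximal definite subspaces. Your argument avoids that: $j(A)=\rho(\sigma_1\sigma_{n+1})$ exactly, $N(\sigma_1\sigma_{n+1})=-1$, while $N(\pm u)=N(u)$ and $N\equiv1$ on $\SpineH$, so $N\equiv1$ on $\rho^{-1}(\SOeH)$. It stays inside the paper's own Clifford formalism and is self-contained.

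Second, the paper merely notes that the component bundle is trivial because one reduction exists. You instead identify it with the orientation double cover of $M$, using $j(A)\in\SOeH$ iff $\det A>0$; the general case follows from the diagonal one by writing $A=\mathrm{diag}(-1,1,\dots,1)B$ with $B\in\mathrm{GL}^+$. This gives slightly more: a reduction to $\SOeH$ exists only when $M$ is orientable, in line with Remark~\ref{rem:orient}.

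Third, for the classification the paper invokes Lawson--Michelsohn's description of lifts by classes in $H^1(P_{\SOeH}(M);\mathbb Z_2)$ together with the exact sequence, which gives freeness and transitivity at once. Your difference-bundle and cocycle argument is more elementary and constructive, exhibiting any other spin structure over $F^+(M)\times_j\SOeH$ as $\Pcan\times_{\mathbb Z_2}L$. Transitivity is all the statement requires. For the full torsor claim you would add one observation: an isomorphism of spin structures covering the identity of $P_{\SOeH}(M)$ is locally right multiplication by locally constant elements of $\ker\rho=\{\pm1\}$, so it changes the $\mathbb Z_2$-cocycle only by a coboundary.
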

\begin{proof}
$\Upsilon$ is well defined: $\iota(eA)\rho_{\tilde\jmath(A)^{-1}u}=\iota(e)j(A)j(A)^{-1}\rho_u=\iota(e)\rho_u$. Condition (a) of Definition~\ref{def:spinstructure} is clear, and (b): $\Upsilon([e,u]w)=\iota(e)\rho_{uw}=\Upsilon([e,u])\rho_w$. Surjectivity onto $P_{\SOeH}(M)=F^+(M)\times_j\SOeH$ follows from that of $\rho$, and $\Upsilon$ is a two-sheeted covering. Reductions of $P_{SO_H}(M)$ to $\SOeH$ correspond to sections of the principal $\mathbb Z_2$-bundle $P_{SO_H}(M)/\SOeH\to M$, which is trivial because $F^+(M)\times_j\SOeH$ is one of them. On connected $M$ there are thus exactly two, and $F^-(M)\times_j\SOeH$ is the other one, since $j(A)\in SO_H\setminus\SOeH$ when $\det A<0$ (for $A=\mathrm{diag}(-1,1,\dots,1)$, $j(A)$ reverses both $\sigma_1$ and $\sigma_{n+1}$). The last statement is the standard classification of lifts to a double covering group: by the argument of \cite[Theorem~1.7, p.~82]{LawsonMichelsohn} (via Theorem~1.4, Corollary~1.5 and the exact sequence (1.4) there), lifts of $P_{\SOeH}(M)$ correspond to classes in $H^1(P_{\SOeH}(M);\mathbb Z_2)$ restricting on the fibre to the class of the covering $\rho:\SpineH\to\SOeH$, and by exactness of $0\to H^1(M;\mathbb Z_2)\to H^1(P_{\SOeH}(M);\mathbb Z_2)\to H^1(\SOeH;\mathbb Z_2)$ the group $H^1(M;\mathbb Z_2)$ acts freely and transitively on the set of these classes. The argument uses only that $\rho$ is a two-sheeted covering, not the signature.
\end{proof}

\begin{corollary}[Existence of hyperbolic spin structures]\label{cor:existence}
Every oriented manifold $M$ of dimension $n\geq 3$ admits a hyperbolic spin structure, in agreement with \cite[Prop.~1.1.26]{Karoubi1968} and \cite[Prop.~2.28]{Gualtieri2004}. This contrasts with the classical case, where $P_{\Spine}(M)$ exists if and only if $w_2(M)=0$.
\end{corollary}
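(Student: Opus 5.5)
The existence claim follows directly from Theorem~\ref{thm:canonical}, so the plan is to apply that theorem and then check that its hypotheses hold for every oriented $M$ with $n\geq3$.

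First, fix an orientation of $M$ and form the bundle $F^+(M)$ of positively oriented frames. This is a principal $\mathrm{GL}^+(n,\mathbb R)$-bundle. By \eqref{eq:PSOeH} it determines $P_{\SOeH}(M)=F^+(M)\times_j\SOeH$. Second, invoke Proposition~\ref{prop:lift}, which provides the Lie group homomorphism $\tilde\jmath:\mathrm{GL}^+(V)\to\SpineH$ with $\rho\circ\tilde\jmath=j$. This lets us form the associated bundle $\Pcan=F^+(M)\times_{\tilde\jmath}\SpineH$.

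Third, verify that $\Pcan$ is a principal $\SpineH$-bundle and that $\Upsilon([e,u])=\iota(e)\circ\rho_u$ satisfies conditions (a) and (b) of Definition~\ref{def:spinstructure}. This is exactly the content of Theorem~\ref{thm:canonical}, whose proof shows that $\Upsilon$ is well defined precisely because $\rho\circ\tilde\jmath=j$ and $\iota(eA)=\iota(e)\circ j(A)$. Hence $M$ admits a hyperbolic spin structure.

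The only substantive input is the existence of $\tilde\jmath$, and the only potentially delicate point is its dependence on the hypothesis $n\geq3$. That hypothesis is used to have $\pi_1(\mathrm{GL}^+(V))\simeq\mathbb Z_2$, generated by a rotation in a coordinate plane. The key computation in Proposition~\ref{prop:lift} is that $\exp(2\pi X_a)=(-1)(-1)=1$: the two commuting bivectors $\sigma_1\sigma_2$ and $\sigma_{n+1}\sigma_{n+2}$ each contribute a sign $-1$, so the generating loop lifts to a closed loop. No topological condition on $M$ enters, because the lift is performed once, on the structure group.

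For the comparison with the classical case, recall that the Lorentzian $\Spine$-structure exists if and only if $w_2(M)=0$ (the Geroch/standard criterion cited). In the hyperbolic case the corresponding Karoubi obstruction is $w_2(E^+)+w_2(E^-)$. For $E^\pm\simeq TM$ this equals $2w_2(TM)=0$ mod $2$, and similarly $w_1(E^+)+w_1(E^-)=0$, consistent with the construction above.
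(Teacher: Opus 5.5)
Your proposal is correct and is essentially the paper's argument. The paper gives no separate proof: the corollary is an immediate consequence of Theorem~\ref{thm:canonical}, i.e.\ of the canonical structure $\Pcan=F^+(M)\times_{\tilde\jmath}\SpineH$ built from the lift of Proposition~\ref{prop:lift}, and your closing remark on Karoubi's criterion ($E^\pm\simeq TM$, so the Stiefel--Whitney classes cancel mod~$2$) reproduces the paragraph preceding that theorem.
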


\begin{remark}[Orientability]\label{rem:orient}
Orientability of $M$ is used only to reduce $P_{SO_H}(M)$ to the identity component $\SOeH$ (condition $w_1(M)=0$). Thus Corollary~\ref{cor:existence} states that there is no topological obstruction beyond orientability. For the full group $\mathrm{Spin}_H$ no condition at all is needed: by Karoubi's criterion \cite[Prop.~1.1.26]{Karoubi1968}, $TM\oplus T^*M$ admits a $\mathrm{Spin}(n,n)$-structure on every manifold, orientable or not \cite[Prop.~2.28]{Gualtieri2004} and the corresponding spinor bundles are $\Ext T^*M\otimes|\det TM|^{1/2}$ and $\Ext T^*M\otimes\det TM\,|\det TM|^{-1/2}$ \cite[\S2.8, p.~16]{Gualtieri2004}. Extending the constructions of this paper to non-orientable manifolds, using $\mathrm{Spin}_H$ and $|\det|^{1/2}$ instead of $\SpineH$ and $(\det)^{1/2}$, is left for future work.
\end{remark}

\begin{remark}
The underlying reason for Theorem~\ref{thm:canonical} is Proposition~\ref{prop:lift}: the double covering $\SpineH\to\SOeH$ is trivial over $j(\mathrm{GL}^+(V))$. This is in accordance with the description of the spinors of $V\oplus V^*$ as $\Ext V^*\otimes(\det V)^{1/2}$ in generalized geometry \cite{Gualtieri2004}; see Theorem~\ref{thm:fock} below.
\end{remark}

\section{Hyperbolic spinor bundles and spin-Clifford bundles}\label{sec:bundles}

In this section $(P_{\SpineH}(M),\Upsilon)$ is a hyperbolic spin structure on $M$ (for instance the canonical one of Theorem~\ref{thm:canonical}).

\begin{definition}[Left real hyperbolic spinor bundle]\label{def:SH}
$$S_H(M)=P_{\SpineH}(M)\times_{\mu_l}\mathcal M_H,$$
where $\mathcal M_H$ is a left module for $\RH$ and $\mu_l$ is the representation of $\SpineH$ on $\mathcal M_H$ given by left multiplication, $\mu_l(u)m=um$.
\end{definition}

\begin{definition}[Right real hyperbolic spinor bundle]\label{def:SHstar}
$$S_H^{\star}(M)=P_{\SpineH}(M)\times_{\mu_r}\mathcal M_H^{\star},$$
where $\mathcal M_H^{\star}$ is a right module for $\RH$ and $\mu_r(u)a=au^{-1}$.
\end{definition}

It is verified directly that $\mu_r$ defines a left representation of the group, despite being given by right multiplication, thanks to the inverse: $\mu_r(uu')a=a(uu')^{-1}=au'^{-1}u^{-1}=\mu_r(u)\mu_r(u')a$. These constructions extend the left and right real spinor bundles defined over $\mathbb R_{1,3}$ by Mosna and Rodrigues \cite[Defs.\ 11--12]{MosnaRodrigues2004}, replacing the classical algebra by $\RH$ and the group $\Spine$ by $\SpineH$. The same construction, for the classical case $\Cl(V,g)$, is also given by Hassan \cite[Defs.\ 9--10]{Hassan2021}.

\begin{definition}[Complex hyperbolic spinor bundles]\label{def:ScH}
In analogy with the classical complex spinor bundles of \cite[Defs.\ 13--14]{MosnaRodrigues2004}, the \emph{complex left} and \emph{complex right hyperbolic spinor bundles} are
$$S_{c,H}(M)=P_{\SpineH}(M)\times_{\mu_c}\mathcal M_{c,H},\qquad S_{c,H}^{\star}(M)=P_{\SpineH}(M)\times_{\mu_c^\star}\mathcal M_{c,H}^{\star},$$
where $\mathcal M_{c,H}$ (resp.\ $\mathcal M_{c,H}^{\star}$) is a complex left (resp.\ right) module for $\CRH$, $\mu_c(u)m=um$ and $\mu_c^\star(u)a=au^{-1}$.
\end{definition}

By Proposition~\ref{prop:realization}, $\CRH\simeq M_{2^n}(\mathbb C)$, so that the simple modules are explicit: one may take for $\mathcal M_{c,H}$ a minimal left ideal of $\CRH$, for instance $(\CRH)\fH\simeq\mathbb C\otimes\Ext V^*\simeq\mathbb C^{2^n}$ (for $n=4$, $\mathbb C^{16}$), the hyperbolic analogue of the column space $\mathbb C^4$ used for $S_c(M)$ in \cite[Def.~13]{MosnaRodrigues2004}, with $\mu_c$ acting by left multiplication of $\SpineH\subset(\RH)^{(0)}$. This choice is studied in Section~\ref{sec:ideal}.

\subsection{Left and right hyperbolic spin-Clifford bundles}

We now construct the hyperbolic extension of the spin-Clifford bundles classically defined by Mosna and Rodrigues \cite[\S\S 2.3--2.4]{MosnaRodrigues2004} (see also \cite[\S\S 6.1--6.3.2]{RodriguesCapelas2016}).

\begin{definition}[Left hyperbolic spin-Clifford bundle]\label{def:Cll}
The left real hyperbolic spin-Clifford bundle of $M$ is the vector bundle
$$\Cl^{l}_{\SpineH}(M)=P_{\SpineH}(M)\times_l \RH,$$
where $l$ is the representation of $\SpineH$ on $\RH$ given by $l(u)x=ux$. Sections of $\Cl^{l}_{\SpineH}(M)$ are called \emph{left hyperbolic spin-Clifford fields}.
\end{definition}

\begin{definition}[Right hyperbolic spin-Clifford bundle]\label{def:Clr}
The right real hyperbolic spin-Clifford bundle of $M$ is
$$\Cl^{r}_{\SpineH}(M)=P_{\SpineH}(M)\times_r \RH,$$
where $r(u)x=xu^{-1}$. Sections are called \emph{right hyperbolic spin-Clifford fields}, in analogy with \cite[Def.\ 20]{MosnaRodrigues2004}.
\end{definition}

The \emph{complexified spin-Clifford bundles} are $\Cl^{l}_{\SpineH}(M)_{\mathbb C}:=P_{\SpineH}(M)\times_l(\CRH)$ and $\Cl^{r}_{\SpineH}(M)_{\mathbb C}:=P_{\SpineH}(M)\times_r(\CRH)$, in analogy with \cite[Rem.\ 28]{MosnaRodrigues2004}. By Proposition~\ref{prop:realization} their typical fibre is $M_{2^n}(\mathbb C)$. Local representatives readily follow from the construction. Given a local section $\zeta:U\to P_{\SpineH}(M)$, sections of $\Cl^l_{\SpineH}(M)$, of $\Cl^r_{\SpineH}(M)$ and of $P_{\SpineH}(M)\times_{\mathrm{Ad}}\RH$ over $U$ are written $[\zeta,\psi_\zeta]$ with $\psi_\zeta:U\to\RH$. For $u:U\to\SpineH$ the definitions give
\begin{equation}\label{eq:transf}
\psi_{\zeta u}=u^{-1}\psi_\zeta\ \ (\Cl^l),\qquad \psi_{\zeta u}=\psi_\zeta u\ \ (\Cl^r),\qquad \psi_{\zeta u}=u^{-1}\psi_\zeta u\ \ (\mathrm{Ad}).
\end{equation}

Moreover, the right and left actions of the algebra are given as follow. For every constant $a\in\CRH$, the map $R_a[p,x]:=[p,xa]$ is well defined on (the complexification of) $\Cl^l_{\SpineH}(M)$, because right multiplication commutes with $l(u)$. This is a free right action of $\RH$ on $\Cl^l_{\SpineH}(M)$ in direct analogy with \cite{MosnaRodrigues2004}. Likewise $L_a[p,x]:=[p,ax]$ is a well-defined left action on $\Cl^r_{\SpineH}(M)$. There are natural inclusions $P_{\SpineH}(M)\hookrightarrow \Cl^{l}_{\SpineH}(M)$, $p\mapsto[p,1]$, and $P_{\SpineH}(M)\hookrightarrow \Cl^{r}_{\SpineH}(M)$, coming from $\SpineH\hookrightarrow(\RH)^{(0)}$. In particular every local section $\zeta$ of $P_{\SpineH}(M)$ defines the local section $\Xi_\zeta:=[\zeta,1]$ of $\Cl^{l}_{\SpineH}(M)$, which is invertible at every point.

\begin{proposition}\label{prop:ClAd}
On a hyperbolic spin manifold, the map $\kappa:P_{\SpineH}(M)\times_{\mathrm{Ad}}\RH\to\Cl(\HM)$, $\kappa([p,a])=\Upsilon(p)_\#(a)$, is a well-defined isomorphism of bundles of algebras. Hence
$$\Cl(\HM)\simeq P_{\SpineH}(M)\times_{\mathrm{Ad}}\RH .$$
\end{proposition}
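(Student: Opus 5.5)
The plan is to verify in turn that $\kappa$ is well defined, is a fibrewise algebra isomorphism, and is a smooth bundle map, and then to conclude using \eqref{eq:ClHM-assoc}.

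\emph{Well-definedness.} Recall that $[pu,u^{-1}au]=[p,a]$ in $P_{\SpineH}(M)\times_{\mathrm{Ad}}\RH$. By Definition~\ref{def:spinstructure}(b), $\Upsilon(pu)=\Upsilon(p)\circ\rho_u$. Since extension to algebra automorphisms is functorial, $\Upsilon(pu)_\#=\Upsilon(p)_\#\circ(\rho_u)_\#$. For even $u$, the automorphism $(\rho_u)_\#$ is $\mathrm{Ad}_u$, because both are algebra automorphisms agreeing on the generating subspace $\HV$. Therefore
\[
\kappa([pu,u^{-1}au])=\Upsilon(p)_\#(u\,u^{-1}au\,u^{-1})=\Upsilon(p)_\#(a),
\]
so $\kappa$ is independent of the representative.

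\emph{Fibrewise isomorphism.} Condition (a) of the definition shows that $\kappa$ maps the fibre over $x$ into $\Cl(H_x(M))$. At a fixed point $p$, the map $a\mapsto\Upsilon(p)_\#(a)$ extends an isometry $\HV\to H_x(M)$, so it is an algebra isomorphism $\RH\to\Cl(H_x(M))$. The fibre operations of the $\mathrm{Ad}$-bundle, $[p,a][p,b]=[p,ab]$, are well defined because $\mathrm{Ad}_u$ is an automorphism. Hence $\kappa$ is a fibrewise algebra isomorphism.

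\emph{Smoothness and conclusion.} Take a local section $\zeta$ of $P_{\SpineH}(M)$. Then $\Upsilon\circ\zeta$ is a smooth local section of $P_{\SOeH}(M)$. In the trivializations induced by $\zeta$ and by $\Upsilon\circ\zeta$ (the latter via \eqref{eq:ClHM-assoc}), $\kappa$ becomes the identity map of $U\times\RH$. Thus $\kappa$ and its inverse are smooth. An equivalent viewpoint is the associated-bundle identity $P\times_{\mathrm{Ad}}\RH\simeq(P\times_\rho\SOeH)\times_\rho\RH$, with $\Upsilon$ inducing $P\times_\rho\SOeH\simeq P_{\SOeH}(M)$.

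The only point that needs care is the identification $(\rho_u)_\#=\mathrm{Ad}_u$ on all of $\RH$, and not merely on $\HV$; it rests on the universal property of the Clifford algebra together with the evenness of $u$, which gives $\hat u=u$. Everything else is bookkeeping.
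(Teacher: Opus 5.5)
Your proof is correct and follows the paper's argument essentially step for step. The identity $(\rho_u)_\#=\mathrm{Ad}_u$ (valid since $u$ is even) gives well-definedness, $\kappa$ is fibrewise the Clifford extension of the isometry $\Upsilon(p)$, and it is the identity in the trivializations induced by $\zeta$ and $\Upsilon\circ\zeta$. Your associated-bundle remark plays the role of the paper's observation that $\mathrm{Ad}_{-1}=\mathrm{id}$, so that $\mathrm{Ad}$ descends to $\SOeH$ consistently with \eqref{eq:ClHM-assoc}.
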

\begin{proof}
Since $\rho_u(v)=uvu^{-1}$ on $\HV$, its Clifford extension is $(\rho_u)_\#=\mathrm{Ad}_u$. Hence $\kappa([pu,u^{-1}au])=(\Upsilon(p)\rho_u)_\#(u^{-1}au)=\Upsilon(p)_\#(a)$. On each fibre $\kappa$ is an algebra isomorphism. In local trivializations it is smooth with smooth inverse. As in the classical case \cite[Prop.\ 23]{MosnaRodrigues2004}, $\mathrm{Ad}_{-1}=\mathrm{id}$, so $\mathrm{Ad}$ descends to a representation of $\SOeH=\SpineH/\mathbb Z_2$, consistently with \eqref{eq:ClHM-assoc}.
\end{proof}

\subsection{Module bundle structure}

In this subsection we explore a fiber bundle whose fibers carry a module structure over the algebra bundle, so that spinor fields may be described.  

\begin{proposition}\label{prop:module}
$S_H(M)$ (in particular $\Cl^{l}_{\SpineH}(M)$) is a left module bundle over the algebra bundle $\Cl(\HM)$. The same holds for $S_{c,H}(M)$ and $\Cl^{l}_{\SpineH}(M)_{\mathbb C}$ over $\mathbb C\otimes\Cl(\HM)$.
\end{proposition}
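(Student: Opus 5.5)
The plan is to use the identification of Proposition~\ref{prop:ClAd}, $\Cl(\HM)\simeq P_{\SpineH}(M)\times_{\mathrm{Ad}}\RH$, and define the module action fibrewise on representatives sharing the same spin frame. For $x\in M$, $p\in P_{\SpineH}(M)$ over $x$, $a\in\RH$ and $m\in\mathcal M_H$, I will set
\[
[p,a]\cdot[p,m]:=[p,am].
\]
Since any two elements of the fibres over $x$ can be written with a common $p$ (the structure group acts transitively on the fibre of $P_{\SpineH}(M)$), this defines a map $\Cl_x(\HM)\times S_H(M)_x\to S_H(M)_x$.

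The first step is well-definedness. Replacing $p$ by $pu$ with $u\in\SpineH$, the representatives change as $a\mapsto u^{-1}au$ and $m\mapsto u^{-1}m$, in accordance with \eqref{eq:transf}. Then
\[
[pu,u^{-1}au]\cdot[pu,u^{-1}m]=[pu,u^{-1}am]=[p,am],
\]
so the product does not depend on the choice of $p$. This is the only point with any content, and it works precisely because the representation $\mathrm{Ad}$ of $\SpineH$ on $\RH$ and the representation $\mu_l$ on $\mathcal M_H$ are compatible: $\mathrm{Ad}_u(a)\,\mu_l(u)m=\mu_l(u)(am)$.

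The second step is to check the module axioms fibrewise. Bilinearity, associativity $(ab)\cdot\psi=a\cdot(b\cdot\psi)$ and $1\cdot\psi=\psi$ all follow directly from the $\RH$-module axioms of $\mathcal M_H$ applied to the representatives at a fixed $p$. Composing with $\kappa$ gives the action of $\Cl(\HM)$ itself. Concretely, for $\mathbf Y\in\Cl(\HM)_x$ one acts by $\kappa^{-1}(\mathbf Y)$, which amounts to $\Upsilon(p)_\#^{-1}(\mathbf Y)\,m$ on representatives.

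The third step is smoothness. Over $U$ I choose a local section $\zeta$, so that $\Cl(\HM)|_U\simeq U\times\RH$ and $S_H(M)|_U\simeq U\times\mathcal M_H$ through $\zeta$. In this trivialization the action is the constant bilinear map $(a,m)\mapsto am$, which is smooth. On overlaps the trivializations are related by the transition functions $\tilde g_{\alpha\beta}$, and compatibility there is the computation of the first step.

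For the particular case $\mathcal M_H=\RH$, which gives $\Cl^l_{\SpineH}(M)$, the same formula applies with $m\in\RH$. It commutes with the right action $R_a$, so the left $\Cl(\HM)$-action and the right $\RH$-action coexist. The complex statements for $S_{c,H}(M)$ and $\Cl^l_{\SpineH}(M)_{\mathbb C}$ over $\mathbb C\otimes\Cl(\HM)$ follow verbatim after replacing $\RH$ by $\CRH$. Here one uses that $\SpineH\subset\RH\subset\CRH$ and $\mathbb C\otimes\Cl(\HM)\simeq P_{\SpineH}(M)\times_{\mathrm{Ad}}\CRH$. I expect no real obstacle; the only care needed is the bookkeeping of the equivariance in the first step.
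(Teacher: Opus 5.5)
Your proposal is correct and follows essentially the same route as the paper's proof. Both define the action fibrewise on a common spin-frame representative, $[p,a]\cdot[p,m]:=[p,am]$, via Proposition~\ref{prop:ClAd}, and get well-definedness from $(u^{-1}au)(u^{-1}m)=u^{-1}(am)$. Both check the module axioms on representatives, where associativity uses that $\kappa$ is fibrewise an algebra isomorphism, so $[p,a][p,a']=[p,aa']$. Smoothness comes from the constant bilinear map in the trivializations given by a local section $\zeta$, and the complex case is identical.
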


\begin{proof}
By Definitions~\ref{def:SH} and~\ref{def:Cll},
$$S_H(M)=P_{\SpineH}(M)\times_{\mu_l}\mathcal M_H,\qquad \Cl^{l}_{\SpineH}(M)=P_{\SpineH}(M)\times_l\RH,$$
are bundles associated with the same principal bundle $P_{\SpineH}(M)$, with $\mu_l(u)m=um$ for $u\in\SpineH\subset\mathbb R^{H\star}$ and $m\in\mathcal M_H$ (respectively $l(u)x=ux$, the particular case $\mathcal M_H=\RH$). Their elements are equivalence classes
$$[p,m]\in S_H(M),\qquad [p,m]\sim[pu,u^{-1}m],\qquad u\in\SpineH$$
and, by Proposition~\ref{prop:ClAd},
$$[p,a]\in\Cl(\HM),\qquad [p,a]\sim[pu,u^{-1}au],\qquad u\in\SpineH.$$

\emph{Fibrewise definition of the action.} Let $x\in M$ and let $[p,a]\in\Cl(\HM)_x$, $[q,m]\in S_H(M)_x$. Since $\SpineH$ acts transitively on the fibre $\Pi_\Upsilon^{-1}(x)$, there is $u\in\SpineH$ with $q=pu$. Rewriting $[q,m]=[pu,m]=[p,um]$, both classes admit a common representative $p$. We define
\begin{equation}\label{eq:action}
[p,a]\cdot[p,m]:=[p,am]\in S_H(M)_x,
\end{equation}
where $am$ denotes the action of $a\in\RH$ on $m\in\mathcal M_H$ given by the module structure (or, for $\Cl^{l}_{\SpineH}(M)$, the Clifford product).

\emph{Well-definedness.} Let $p'=pu$, $u\in\SpineH$, be another common representative of $[p,a]$ and $[p,m]$. Then $[p',a']=[p,a]$ with $a'=u^{-1}au$ and $[p',m']=[p,m]$ with $m'=u^{-1}m$. By associativity of the module action,
$$a'm'=(u^{-1}au)(u^{-1}m)=u^{-1}(am),$$
so that $[p',a'm']=[pu,u^{-1}(am)]=[p,am]$. Thus \eqref{eq:action} is independent of the common representative.

\emph{Module axioms.} For $[p,a],[p,a']\in\Cl(\HM)_x$ and $[p,m],[p,m']\in S_H(M)_x$ (with a common representative $p$, always attainable as above),
\begin{align*}
[p,a]\cdot\big([p,m]+[p,m']\big)&=[p,am]+[p,am']=[p,a]\cdot[p,m]+[p,a]\cdot[p,m'],\\
\big([p,a]+[p,a']\big)\cdot[p,m]&=[p,am]+[p,a'm]=[p,a]\cdot[p,m]+[p,a']\cdot[p,m],\\
\big([p,a][p,a']\big)\cdot[p,m]&=[p,(aa')m]=[p,a(a'm)]=[p,a]\cdot\big([p,a']\cdot[p,m]\big),\\
[p,1]\cdot[p,m]&=[p,m],
\end{align*}
where in the third line we used Proposition~\ref{prop:ClAd} to identify $[p,a][p,a']=[p,aa']$ and the associativity of the action of $\RH$ on $\mathcal M_H$.

\emph{Smoothness.} Let $\{\zeta_\alpha:U_\alpha\to P_{\SpineH}(M)\}$ be an atlas of local sections. Taking $p=\zeta_\alpha(x)$ as common representative for every $x\in U_\alpha$, one obtains smooth local trivializations $\Cl(\HM)|_{U_\alpha}\cong U_\alpha\times\RH$, $S_H(M)|_{U_\alpha}\cong U_\alpha\times\mathcal M_H$, in which \eqref{eq:action} reads $(x,a)\cdot(x,m)=(x,am)$: a fixed bilinear map between finite-dimensional vector spaces. Since the transition functions of $\Cl(\HM)$ and of $S_H(M)$ are given by $\mathrm{Ad}$ and by $\mu_l$ (both smooth), \eqref{eq:action} glues to a smooth map of vector bundles $\Cl(\HM)\otimes S_H(M)\to S_H(M)$. Hence $\sec S_H(M)$ is a left module over the ring $\sec\Cl(\HM)$, via $(\phi\Psi)(x):=\phi(x)\cdot\Psi(x)$. The complex case is identical. This argument is the hyperbolic analogue of the proof of the classical case in \cite[Prop.\ 24]{MosnaRodrigues2004} (cf.\ \cite[p.\ 97]{LawsonMichelsohn}).
\end{proof}

\begin{corollary}\label{cor:psiPhi}
Let $\Phi\in\sec\Cl^l_{\SpineH}(M)$ be invertible at every point (i.e.\ $\Phi_\zeta(x)\in\mathbb R^{H\star}$, a condition independent of $\zeta$). Then for every $\Psi\in\sec\Cl^l_{\SpineH}(M)$ there is a unique $\psi\in\sec\Cl(\HM)$ with $\Psi=\psi\Phi$, namely $\psi=[\zeta,\Psi_\zeta\Phi_\zeta^{-1}]_{\mathrm{Ad}}$. In particular, locally every $\Psi$ can be written as $\Psi=\psi\,\Xi_\zeta$ with $\psi\in\sec\Cl(H(U))$. The invertibility hypothesis cannot be dropped: if $\Phi$ takes values in an ideal subbundle ($\Phi=\Phi\fH$, see Section~\ref{sec:ideal}), then so does $\psi\Phi$, and no $\Psi$ outside that subbundle is of the form $\psi\Phi$.
\end{corollary}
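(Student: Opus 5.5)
The plan is to work in a local trivialization given by a section $\zeta:U\to P_{\SpineH}(M)$ and use the module bundle structure of Proposition~\ref{prop:module} together with the identification $\Cl(\HM)\simeq P_{\SpineH}(M)\times_{\mathrm{Ad}}\RH$ of Proposition~\ref{prop:ClAd}. In this trivialization the action \eqref{eq:action} reads $[\zeta,a]_{\mathrm{Ad}}\cdot[\zeta,m]=[\zeta,am]$, so the equation $\Psi=\psi\Phi$ becomes the pointwise equation $\Psi_\zeta=\psi_\zeta\Phi_\zeta$ in $\RH$.

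First I check that invertibility of $\Phi_\zeta(x)$ does not depend on $\zeta$. By \eqref{eq:transf}, $\Phi_{\zeta u}=u^{-1}\Phi_\zeta$ with $u^{-1}$ invertible, so $\Phi_{\zeta u}$ is invertible if and only if $\Phi_\zeta$ is. Next, on $U$ the element $\psi_\zeta:=\Psi_\zeta\Phi_\zeta^{-1}$ is the unique solution of $\Psi_\zeta=\psi_\zeta\Phi_\zeta$, obtained by right multiplication with $\Phi_\zeta^{-1}$. It is smooth because inversion is smooth on the open set $\mathbb R^{H\star}\subset\RH$.

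The main point is that these local solutions glue to a global section of the $\mathrm{Ad}$-bundle. Under a change of section $\zeta\mapsto\zeta u$,
\[
\Psi_{\zeta u}\Phi_{\zeta u}^{-1}=u^{-1}\Psi_\zeta\Phi_\zeta^{-1}u,
\]
which is exactly the $\mathrm{Ad}$ transformation law in \eqref{eq:transf}. Hence $[\zeta,\Psi_\zeta\Phi_\zeta^{-1}]_{\mathrm{Ad}}$ is independent of $\zeta$ and defines $\psi\in\sec\Cl(\HM)$. Global uniqueness then follows from the local uniqueness at each point. The special case $\Phi=\Xi_\zeta=[\zeta,1]$, which is invertible, gives $\psi=[\zeta,\Psi_\zeta]_{\mathrm{Ad}}$ on $U$.

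For the final remark, suppose $\Phi_\zeta=\Phi_\zeta\fH$. Then $(\psi\Phi)_\zeta=\psi_\zeta\Phi_\zeta\fH$ again lies in the left ideal $\RH\fH$. Take any $\Psi$ with $\Psi_\zeta\fH\neq\Psi_\zeta$, for example $\Psi=\Xi_\zeta$: since $\fH\neq1$, its representative $1$ is not in $\RH\fH$. Such a $\Psi$ cannot be written as $\psi\Phi$. This condition is preserved by left multiplication by $u^{-1}$, so the ideal subbundle is well defined.
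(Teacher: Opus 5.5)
Your proof is correct and follows the paper's argument. The transformation law $\Psi_{\zeta u}\Phi_{\zeta u}^{-1}=u^{-1}\Psi_\zeta\Phi_\zeta^{-1}u$ from \eqref{eq:transf} is the $\mathrm{Ad}$-rule, so the local solutions glue; the module action of Proposition~\ref{prop:module} gives $\psi\Phi=[\zeta,\psi_\zeta\Phi_\zeta]$; and uniqueness follows from invertibility. You also spell out steps the paper leaves implicit: the $\zeta$-independence of invertibility, the smoothness of $\psi$, and the ideal-subbundle remark.
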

\begin{proof}
By \eqref{eq:transf}, $\Psi_{\zeta u}\Phi_{\zeta u}^{-1}=u^{-1}\Psi_\zeta\Phi_\zeta^{-1}u$, which is the $\mathrm{Ad}$-rule and $[\zeta,\psi_\zeta]_{\mathrm{Ad}}\cdot[\zeta,\Phi_\zeta]=[\zeta,\psi_\zeta\Phi_\zeta]$ by Proposition~\ref{prop:module}. Uniqueness follows from invertibility.
\end{proof}

This is the hyperbolic counterpart of \cite[Cor.\ 25]{MosnaRodrigues2004}. The local representation $\Psi=\psi\,\Xi_\zeta$ is the one underlying the Dirac--Hestenes description of spinor fields.

\begin{definition}\label{def:equiv-spinor}
Two left real hyperbolic spinor bundles are \emph{equivalent} if they are isomorphic as module bundles over $\Cl(\HM)$, in analogy with \cite[Def.\ 26]{MosnaRodrigues2004}.
\end{definition}

\subsection{Ideal hyperbolic algebraic spinor fields}\label{sec:ideal}

\begin{definition}[Real and complex ideal hyperbolic algebraic spinor fields]\label{def:ideal}
For a primitive idempotent $f$ of $\RH$ (resp.\ $\CRH$) put
$$I_f(M):=\{\Psi\in\Cl^l_{\SpineH}(M):\Psi f=\Psi\},\qquad I^\star_f(M):=\{\Psi\in\Cl^r_{\SpineH}(M):f\Psi=\Psi\}$$
(resp.\ inside the complexified bundles). $I_f(M)$ is called a subbundle of \emph{left-ideal hyperbolic algebraic spinor fields} (H-LIASF) and $I^\star_f(M)$ a subbundle of \emph{right-ideal hyperbolic algebraic spinor fields} (H-RIASF), in analogy with the LIASF and RIASF of \cite[Defs.\ 18, 21]{MosnaRodrigues2004}. The \emph{canonical real} and \emph{complex H-LIASF bundles} are
$$\IH:=I_{\fH}(M),\qquad \IcH:=I^{\mathbb C}_{\fH}(M)\subset\Cl^l_{\SpineH}(M)_{\mathbb C},$$
and the canonical H-RIASF bundle is $I^\star_H(M):=I^\star_{\fH}(M)$.
\end{definition}

\noindent In the condition $\Psi f=\Psi$, $f$ is a constant element of $\RH$ acting on $\Cl^l_{\SpineH}(M)$ by right multiplication $R_f$ (see the paragraph on right and left actions of the algebra in Section~\ref{sec:bundles}). This must not be confused with the left action of the global section $\fH$ of $\Cl(\HM)$ (Section~\ref{sec:spinstructure}) through the module structure of Proposition~\ref{prop:module}: both objects are denoted by $\fH$, but the two operations are different.

The condition $\Psi f=\Psi$ is $R_f\Psi=\Psi$, i.e.\ it places $\Psi$ fibrewise in the left minimal ideal generated by $f$. For $f=\fH$, the fibre of $\IH$ is $\RH\fH=\RH\theta^*\simeq\Ext V^*$ (Theorem~\ref{thm:fH}(v)) and that of $I^\star_H(M)$ is $\fH\RH=e_*\RH\simeq\Ext V$ (Theorem~\ref{thm:fH}(vi)).

\begin{definition}\label{def:geomequiv}
Two H-LIASF subbundles $I_f(M)$, $I_{f'}(M)$ (resp.\ H-RIASF subbundles) are \emph{geometrically equivalent} if $f'=ufu^{-1}$ for some $u\in\SpineH$.
\end{definition}

\begin{theorem}[Structure of the ideal subbundles]\label{thm:ideal}
\begin{enumerate}[label=(\roman*)]
\item $I_f(M)=P_{\SpineH}(M)\times_l(\RH f)$ is a subbundle of real rank $2^n$ (complex rank $2^n$ in the complex case), the image of the projection $R_f$. It coincides fibrewise with $\Ext V^*\simeq\RH\theta^*$ for $f=\fH$.
\item $I_f(M)$ is a module bundle over $\Cl(\HM)$ (resp.\ $\mathbb C\otimes\Cl(\HM)$), fibrewise simple. It is the spinor bundle $S_H(M)$ of Definition~\ref{def:SH} with $\mathcal M_H=\RH f$ and, in the complex case, the bundle $S_{c,H}(M)$ of Definition~\ref{def:ScH} with $\mathcal M_{c,H}=(\CRH)f\simeq\mathbb C^{2^n}$ and $\mu_c$ the restriction of $l$.
\item If $f'=gfg^{-1}$, then $R_{g^{-1}}:I_f(M)\to I_{f'}(M)$ is an isomorphism of module bundles. Geometric equivalence (Definition~\ref{def:geomequiv}) is the case $g\in\SpineH$. All real (resp.\ complex) ideal subbundles are isomorphic, and every $S_H(M)$ with simple $\mathcal M_H$ is equivalent, in the sense of Definition~\ref{def:equiv-spinor}, to $\IH$. For arbitrary $\mathcal M_H\simeq(\RH\fH)^{\oplus m}$, $S_H(M)\simeq\IH^{\oplus m}$ (same in the complex case).
\item $\Cl^l_{\SpineH}(M)=\bigoplus_JI_{p_J}(M)$ (Proposition~\ref{prop:idempotents}(b)), each summand isomorphic to $\IH$.
\item $\IcH=\mathbb C\otimes\IH$; complex conjugation $\Psi\mapsto\Psi^\natural$ is a well-defined antilinear involution of $\IcH$ with fixed set $\IH$.
\item $\IH=I^+_H(M)\oplus I^-_H(M)$, with $I^\pm_H(M)=P_{\SpineH}(M)\times_l S^\pm$ (Proposition~\ref{prop:chir}).
\item $\beta([p,\psi],[p,\varphi]):=\beta(\psi,\varphi)$ and $h$ (Proposition~\ref{prop:pairing}) define non-degenerate fibre forms on $\IH$ and $\IcH$, symmetric (resp.\ Hermitian) if $n\equiv0,1\pmod 4$ and skew-symmetric (resp.\ skew-Hermitian) if $n\equiv2,3\pmod4$.
\item $I^\star_f(M)=P_{\SpineH}(M)\times_r(f\RH)$ is the right spinor bundle $S^\star_H(M)$ of Definition~\ref{def:SHstar} with $\mathcal M^\star_H=f\RH$ (for $f=\fH$, fibre $e_*\RH\simeq\Ext V$), and $[p,a]\otimes[p,b]\mapsto ab\in f\RH f=\mathbb Rf$ is a well-defined non-degenerate pairing $I^\star_f(M)\otimes I_f(M)\to M\times\mathbb R$.
\end{enumerate}
\end{theorem}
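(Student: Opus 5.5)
The guiding observation is that every item reduces to a fibrewise algebraic fact from Sections~\ref{sec:alg-spinors}--\ref{sec:groups}, carried to bundles by the associated-bundle formalism. The key point is that the left representation $l(u)x=ux$ of $\SpineH$ commutes with every right multiplication $R_a$.

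\textbf{Items (i) and (ii).} Since $R_f$ commutes with $l(u)$, it descends to a bundle endomorphism of $\Cl^l_{\SpineH}(M)$ with $R_f^2=R_f$, so it has constant rank. Hence its image $I_f(M)$ is a smooth subbundle, equal to $P_{\SpineH}(M)\times_l(\RH f)$.

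Under $\RH\simeq M_{2^n}(\mathbb R)$, $f$ is a rank-one idempotent by Proposition~\ref{prop:idempotents}(a). Therefore $\RH f$ is a column space of dimension $2^n$. For $f=\fH$, Theorem~\ref{thm:fH}(v) identifies this fibre with $\Ext V^*$.

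For (ii), I restrict the module structure of Proposition~\ref{prop:module}: $[p,a]\cdot[p,\Psi]=[p,a\Psi]$ preserves $\Psi f=\Psi$. Fibrewise simplicity holds because $\RH f$ is a minimal left ideal. The identification with $S_H(M)$, and with $S_{c,H}(M)$ in the complex case, is then literally Definition~\ref{def:SH} with $\mathcal M_H=\RH f$.

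\textbf{Item (iii).} If $f'=gfg^{-1}$, then $xf\mapsto xfg^{-1}=xg^{-1}f'$ maps $\RH f$ onto $\RH f'$. It is left $\RH$-linear and commutes with $l(u)$, so it defines a module bundle isomorphism. All primitive idempotents are conjugate by Proposition~\ref{prop:idempotents}(a), which gives the isomorphism of all ideal subbundles.

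For a general module, I decompose $\mathcal M_H$ into simple summands, using semisimplicity of $M_{2^n}(\mathbb R)$. Each simple module is isomorphic to $\RH\fH$ as an $\RH$-module, hence as a $\SpineH$-representation. The associated bundles are then isomorphic as module bundles, because the $\Cl(\HM)$-action is fibrewise the $\RH$-action.

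\textbf{Items (iv)--(vii).} For (iv), from $\sum_J p_J=1$ and $p_Jp_{J'}=\delta_{JJ'}p_J$ I get $x=\sum_J xp_J$ in every fibre, i.e.\ $\RH=\bigoplus_J\RH p_J$. This decomposition is $l$-invariant, and (iii) makes each summand isomorphic to $\IH$.

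For (v), $(\CRH)\fH=\mathbb C\otimes\RH\fH$ because $\fH$ is real. Complex conjugation $\natural$ commutes with left multiplication by the real elements $u\in\SpineH$ and fixes $\fH$. Hence it descends to an antilinear involution of $\IcH$ whose fixed set is $\IH$.

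For (vi), $S^\pm$ are invariant under $(\RH)^{(0)}\supset\SpineH$ by Proposition~\ref{prop:chir}, so the splitting descends.

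For (vii), well-definedness is exactly the $\SpineH$-invariance in Proposition~\ref{prop:pairing}(iii), together with $[p,\psi]=[pu,u^{-1}\psi]$ and $u^{-1}\in\SpineH$. Non-degeneracy and the symmetry sign $(-1)^{n(n-1)/2}$ come from Proposition~\ref{prop:pairing}(ii),(v). The sign equals $+1$ exactly when $n\equiv0,1\pmod4$.

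\textbf{Item (viii).} The only step needing care is the right-ideal bundle and the pairing, where the conventions must be tracked carefully. The bundle $P\times_r(f\RH)$ is well defined because $r(u)x=xu^{-1}$ preserves $f\RH$, and this matches Definition~\ref{def:SHstar}.

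For the pairing, I take $[p,a]\in I^\star_f$ and $[p,b]\in I_f$ with a common representative $p$. Changing the representative to $pu$ replaces $a$ by $au$ (the $\Cl^r$ rule in \eqref{eq:transf}) and $b$ by $u^{-1}b$. Thus $ab$ is unchanged, and it lies in $f\RH f=\mathbb Rf$ because $a=fa$ and $b=bf$.

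For non-degeneracy, I plan to reduce to $f=\fH$ via conjugation. There I would pair $\fH\lambda$ ($\lambda\in\Ext V$) with $\alpha\fH$ ($\alpha\in\Ext V^*$). A computation like Theorem~\ref{thm:fH}(vii), or more simply the matrix picture of a row vector times a column vector, shows that this is a perfect pairing $\Ext V\otimes\Ext V^*\to\mathbb R$. I expect the main obstacle to be keeping these left/right conventions consistent, rather than any genuine difficulty.
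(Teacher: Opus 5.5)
Your proposal is correct and takes the same route as the paper. Each item is reduced to a fibrewise algebraic fact (Theorem~\ref{thm:fH}, Propositions~\ref{prop:idempotents}, \ref{prop:chir} and~\ref{prop:pairing}) and carried to the associated bundles, using that constant right multiplications commute with $l(u)$ and that $\SpineH$ preserves the relevant subspaces and forms. You also supply details the paper leaves implicit: the constant-rank argument for $R_f$, the check that $ab$ does not depend on the representative in (viii), and the rank-one matrix picture for non-degeneracy of the pairing.
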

\begin{proof}
(i) $\RH f$ is invariant under left multiplication, so $P\times_l(\RH f)\subset\Cl^l$ is a subbundle; $\Psi f=\Psi$ iff $\Psi_\zeta\in\RH f$, a condition preserved by \eqref{eq:transf}. The rank is that of a minimal left ideal of $M_{2^n}(\mathbb R)$ and the last statement is Theorem~\ref{thm:fH}(v). (ii) The module structure is that of Proposition~\ref{prop:module} restricted to $\RH f$; simplicity is fibrewise minimality. (iii) $R_{g^{-1}}$ commutes with $l(u)$ and with the left module action and $(\Psi g^{-1})f'=\Psi fg^{-1}=\Psi g^{-1}$. Every simple module of $M_N(\mathbb R)$ is isomorphic to $\RH\fH$ by a constant $\RH$-linear map $T$ and $[p,m]\mapsto[p,Tm]$ is well defined because $T$ commutes with $\mu_l(u)$. (iv) From $\sum_Jp_J=1$ and orthogonality, $\Psi=\sum_J\Psi p_J$ uniquely. Now apply (iii). (v) $\fH$ is real. (vi) $\SpineH$ preserves $S^\pm$. (vii) $\beta$ and $h$ are $\SpineH$-invariant. (viii) $r(u)$ preserves $f\RH$; $(au^{-1})(ub)=ab$; non-degeneracy follows from Theorem~\ref{thm:fH}(v)--(vii).
\end{proof}

\begin{theorem}[Fock realization for the canonical spin structure]\label{thm:fock}
Let $P=\Pcan$ and let $s:\Ext V^*\to\Ext V^*$ be $s(\alpha)=2^{-r/2}\alpha$ for $\alpha\in\Ext^rV^*$ (note that $s$ commutes with every degree-preserving operator on $\Ext V^*$, such as $(A^{-1})^*$ or $\theta^l\wedge i_{e_k}$. This is used below and in Proposition~\ref{prop:explicit}). The map
$$\Phi:\Ext T^*M\otimes(\Ext^nTM)^{1/2}\longrightarrow\IH,\qquad \Phi\big(\alpha\otimes(e_*)^{1/2}\big):=\big[[e,1],\,s(e^*\alpha)\,\fH\big],$$
is a well-defined isomorphism of vector bundles (here $e^*\alpha\in\Ext V^*$ are the components of $\alpha$ in the frame $e$, and $(\Ext^nTM)^{1/2}=F^+(M)\times_{(\det)^{1/2}}\mathbb R$). For $Y^*\in T^*M$ and $Y_*\in TM$, regarded as sections of $\HM$ acting by the module structure,
$$Y^*\cdot\Phi(\omega)=\sqrt2\,\Phi(Y^*\wedge\omega),\qquad Y_*\cdot\Phi(\omega)=\sqrt2\,\Phi(i_{Y_*}\omega),\qquad \varpi\cdot\Phi(\omega)=(-1)^{n(n-1)/2}\Phi(\hat\omega),$$
and $\beta(\Phi(\alpha\otimes\mu),\Phi(\alpha'\otimes\mu'))=2^{-n/2}\big\langle(\tilde\alpha\wedge\alpha')_{[n]},\mu\mu'\big\rangle$. If $\gring$ is a metric on $M$ (any signature), $(e_*)^{1/2}$ for positively oriented $\gring$-orthonormal frames is a global trivialization of $(\Ext^nTM)^{1/2}$, and $\Phi$ gives canonical isomorphisms $\IH\simeq\Ext T^*M\simeq\Cl(M,\gring)$ and $\IcH\simeq\mathbb C\otimes\Ext T^*M$ (as vector bundles).
\end{theorem}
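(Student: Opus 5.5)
The plan is to work entirely in local representatives over the frames of $F^+(M)$ and to reduce every claim to the algebraic identities of Theorem~\ref{thm:fH}, Proposition~\ref{prop:chir}, Proposition~\ref{prop:lift} and Proposition~\ref{prop:pairing}. The only genuinely geometric input is the relation $[eA,1]=[e,\tilde\jmath(A)]$ in $\Pcan$. It follows directly from the defining equivalence $[e,u]\sim[eA,\tilde\jmath(A)^{-1}u]$ with $u=\tilde\jmath(A)$.

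\emph{Step 1: well-definedness.} Let $e'=eA$ with $A\in\mathrm{GL}^+(n,\mathbb R)$. The components then change by $e'^*\alpha=A^*(e^*\alpha)$, where $A^*$ is the pullback, and $(e'_*)^{1/2}=(\det A)^{1/2}(e_*)^{1/2}$. Using $[eA,1]=[e,\tilde\jmath(A)]$ and the last identity of \eqref{eq:lift}, I compute
\[
\big[[eA,1],s(A^*e^*\alpha)\fH\big]=\big[[e,1],\tilde\jmath(A)\,s(A^*e^*\alpha)\fH\big]=(\det A)^{1/2}\big[[e,1],s(e^*\alpha)\fH\big].
\]
This step also uses that $s$ commutes with the degree-preserving maps $(A^{-1})^*$ and $A^*$. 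The factor $(\det A)^{1/2}$ is exactly the one relating $\alpha\otimes(e'_*)^{1/2}$ to $\alpha\otimes(e_*)^{1/2}$, so the definition of $\Phi$ is frame independent.

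Fibrewise, $\Phi$ is the composite of the isomorphism $\alpha\mapsto e^*\alpha$, the invertible map $s$, and the isomorphism $\gamma\mapsto\gamma\fH$ of Theorem~\ref{thm:fH}(v). Hence it is a linear isomorphism onto the fibre $\RH\fH$ of $\IH$ (Theorem~\ref{thm:ideal}(i)). Smoothness is read off in a local frame, where $\Phi$ has constant coefficients.

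\emph{Step 2: Clifford actions.} At $p=[e,1]$ one has $\Upsilon(p)=\iota(e)$. Hence, by Proposition~\ref{prop:ClAd}, the section $Y^*=y_k\theta^k$ of $\HM$ is represented by $[p,y_k\theta^k]$ and $Y_*=y^ke_k$ by $[p,y^ke_k]$. The module action of Proposition~\ref{prop:module} is then left multiplication on the representative. For $\gamma\in\Ext^rV^*$, \eqref{eq:fock} gives
\[
\theta^k\big(s(\gamma)\fH\big)=2^{-r/2}(\theta^k\wedge\gamma)\fH=\sqrt2\,s(\theta^k\wedge\gamma)\fH,
\qquad
e_k\big(s(\gamma)\fH\big)=2\cdot2^{-r/2}(i_{e_k}\gamma)\fH=\sqrt2\,s(i_{e_k}\gamma)\fH .
\]
In the first identity the degree rises by one, and in the second it falls by one. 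This gives the first two formulas. For the third, $\varpi$ is represented by the model $\varpi$ in every Witt-induced frame, and $s$ commutes with $\hat{\ }$, so the formula follows from Proposition~\ref{prop:chir}.

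\emph{Step 3: the pairing.} By Proposition~\ref{prop:pairing}(i), $\beta(s\gamma\,\fH,s\gamma'\,\fH)$ is the $\theta^*$-coefficient of $(\widetilde{s\gamma}\wedge s\gamma')_{[n]}$. Only degrees $r$ and $n-r$ contribute, so the total scale factor is $2^{-r/2}2^{-(n-r)/2}=2^{-n/2}$. Writing $\mu=c\,(e_*)^{1/2}$ and $\mu'=c'(e_*)^{1/2}$, the factor $cc'$ combines with the $\theta^*$-coefficient into the natural pairing of $(\tilde\alpha\wedge\alpha')_{[n]}\in\Ext^nT^*M$ with $\mu\mu'\in\Ext^nTM$. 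This pairing is frame independent by the computation of Step 1.

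\emph{Step 4: the metric case.} Two positively oriented $\gring$-orthonormal frames differ by $A$ with $\det A=1$, in any signature. Therefore $(e_*)^{1/2}$ is unambiguous and gives a nowhere-vanishing global section. It identifies $(\Ext^nTM)^{1/2}$ with $M\times\mathbb R$, and so $\IH\simeq\Ext T^*M$. Composing with the standard vector-bundle isomorphism $\Ext T^*M\simeq\Cl(M,\gring)$ given by the symbol map, and complexifying via Theorem~\ref{thm:ideal}(v), gives the remaining isomorphisms.

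I expect the main obstacle to be Step 1. The conventions must be tracked carefully: the components transform by $A^*$ while $\tilde\jmath(A)$ acts by $(A^{-1})^*$, and the density factor must carry $(\det A)^{+1/2}$ rather than its inverse. If the signs or powers fail to match, I would first recheck the direction of the equivalence relation defining $\Pcan$.
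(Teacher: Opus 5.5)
Your proposal is correct and follows essentially the same route as the paper's proof. Both arguments get well-definedness from $[eA,1]=[e,1]\tilde\jmath(A)$, the last identity of \eqref{eq:lift} and the commutation of $s$ with $(A^{-1})^*$; then use \eqref{eq:fock}, Proposition~\ref{prop:chir} and Proposition~\ref{prop:pairing}(i) with the factor $2^{-r/2}2^{-(n-r)/2}=2^{-n/2}$; and finish with $\det A=1$ for positively oriented orthonormal frames, while you spell out details the paper leaves implicit (the representatives of $Y^*$, $Y_*$, $\varpi$ at $p=[e,1]$ via Proposition~\ref{prop:ClAd}, and the $(\det A)^{1/2}$ bookkeeping).
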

\begin{proof}
In $\Pcan$, $[eA,1]=[e,\tilde\jmath(A)]=[e,1]\tilde\jmath(A)$, so $[[eA,1],x]=[[e,1],\tilde\jmath(A)x]$. With $(eA)^*\alpha=A^*e^*\alpha$, $(eA)_*=(\det A)e_*$ and \eqref{eq:lift} ($s$ commutes with $(A^{-1})^*$),
$$\Phi\big(\alpha\otimes(eA)_*^{1/2}\big)=\big[[e,1],(\det A)^{1/2}s(e^*\alpha)\fH\big]=(\det A)^{1/2}\,\Phi\big(\alpha\otimes e_*^{1/2}\big),$$
which is consistent with linearity; hence $\Phi$ is well defined and it is a fibrewise isomorphism by Theorem~\ref{thm:fH}(v). The Clifford formulas follow from \eqref{eq:fock} and the choice of $s$ (which symmetrizes the factors $1$ and $2$ into $\sqrt2$), the chirality formula from Proposition~\ref{prop:chir}, the pairing from Proposition~\ref{prop:pairing}(i), since $s$ contributes $2^{-r/2}2^{-(n-r)/2}$. For orthonormal frames $\det A=1$, whence the global trivialization.
\end{proof}

\section{Connections}\label{sec:connections}

\subsection{Isotropic extension of a linear connection}

Let $\nabla$ be an arbitrary linear connection on $TM$ (not necessarily compatible with a metric, nor torsion-free). It induces a connection on $T^*M$, still denoted $\nabla$, characterized by the compatibility with the duality pairing: for\footnote{Here $Z$ acts as a vector field differentiating the real-valued function $Y^*(Y_*):M\to\mathbb R$, whose value at $q\in M$ is $Y^*|_q(Y_*|_q)$, i.e.\ $Z(f)=df(Z)$ for $f\in C^\infty(M)$.} $Z,Y_*\in\sec TM$ and $Y^*\in\sec T^*M$,
\begin{equation}\label{eq:dualpairing}
Z\big(Y^*(Y_*)\big)=(\nabla_{Z}Y^*)(Y_*)+Y^*(\nabla_{Z}Y_*).
\end{equation}
This identity holds for every linear connection, with no metric hypothesis. It allows one to extend $\nabla$ diagonally to $\HM=TM\oplus T^*M$.

\begin{definition}[Isotropic extension]\label{def:isotropic}
The \emph{isotropic extension} of $\nabla$ is the connection $\nabla^H$ on $\HM$ given by
\begin{equation}\label{eq:DH}
\nabla^H_{Z}(Y_*\oplus Y^*):=(\nabla_{Z}Y_*)\oplus(\nabla_{Z}Y^*),\qquad Z,Y_*\in\sec TM,\ Y^*\in\sec T^*M .
\end{equation}
When $\nabla=\Dring$ is the Levi-Civita connection of a metric $\gring$ on $M$, we write $\Dring^H$.
\end{definition}

The name is chosen in analogy with the isotropic extension $I(\phi)=\phi\oplus\phi^*$ of an endomorphism \cite[\S 3.1]{RodriguesSouza2007}. If $\nabla_{Z}e_l=\omega_l{}^k(Z)e_k$ in a local frame $e$ of $TM$, then $\nabla_{Z}\theta^k=-\omega_l{}^k(Z)\theta^l$ by \eqref{eq:dualpairing}, so the connection matrix of $\nabla^H$ in the Witt-induced frame of $e$ is $j_*(\omega(Z))=\omega(Z)\oplus(-\omega(Z)^*)$.

\begin{lemma}\label{lem:DHcompat}
For every linear connection $\nabla$ on $TM$, $\nabla^H$ is compatible with $\langle\,,\,\rangle$ and extends uniquely to a connection on $\Cl(\HM)$ satisfying the Leibniz rule for the Clifford product: for $\mathbf Y,\mathbf W\in\sec\HM$, $Z\in\sec TM$ and $A,B\in\sec\Cl(\HM)$,
\[
Z\langle\mathbf Y,\mathbf W\rangle=\langle\nabla^H_{Z}\mathbf Y,\mathbf W\rangle+\langle\mathbf Y,\nabla^H_{Z}\mathbf W\rangle,\qquad
\nabla^H_{Z}(AB)=(\nabla^H_{Z}A)B+A(\nabla^H_{Z}B).
\]
\end{lemma}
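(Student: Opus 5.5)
The plan is to prove metric compatibility first, directly from the duality identity \eqref{eq:dualpairing}, and then to extend $\nabla^H$ to $\Cl(\HM)$ by local Clifford-algebra derivations.

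\emph{Metric compatibility.} Write $\mathbf Y=Y_*\oplus Y^*$ and $\mathbf W=W_*\oplus W^*$. By \eqref{eq:hypform}, $\langle\mathbf Y,\mathbf W\rangle=Y^*(W_*)+W^*(Y_*)$. Apply \eqref{eq:dualpairing} to each of the two terms and regroup according to \eqref{eq:DH}. This gives
\[
Z\langle\mathbf Y,\mathbf W\rangle=\langle\nabla^H_Z\mathbf Y,\mathbf W\rangle+\langle\mathbf Y,\nabla^H_Z\mathbf W\rangle .
\]
No metric on $M$ enters, since only the duality pairing is used. Equivalently, in the Witt-induced frame of a local frame $e$ the connection matrix is $j_*(\omega(Z))$, which lies in $\mathfrak{so}(\HV)$.

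\emph{Extension to $\Cl(\HM)$.} Work locally over $U$ with a frame $e$, and identify $\Cl(H(U))\cong U\times\RH$ through $\iota(e)_\#$. The candidate is
\[
\nabla^H_Z A:=\iota(e)_\#\big(Z(A_e)+[X_{\Omega(Z)},A_e]\big),\qquad \Omega(Z)=j_*(\omega(Z)),
\]
so that by Lemma~\ref{lem:lie} $X_{\Omega(Z)}=\tfrac12\omega_l{}^k(Z)\,e_k\wedge\theta^l$. This operator satisfies the Leibniz rule for the Clifford product, because $Z$ acting on components and $[X,\cdot\,]$ are both derivations of the algebra. Lemma~\ref{lem:lie} also shows that $[X_{\Omega(Z)},\cdot\,]$ restricted to $\HV$ equals $\Omega(Z)$, so the operator restricts to $\nabla^H$ on $\HM$.

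\emph{Well-definedness on overlaps.} Let $e'=eA$ with $A:U\to\mathrm{GL}(n,\mathbb R)$. Then $A_{e'}=j(A)_\#^{-1}A_e$, and the connection matrix transforms as $\omega'=A^{-1}\omega A+A^{-1}dA$. Compatibility on overlaps is then the standard gauge transformation identity, which holds on generators by the bundle-level definition of $\nabla^H$ and hence on the whole algebra. A cleaner route, which I would actually use, avoids this computation through uniqueness. Any two connections on $\Cl(\HM)$ that satisfy the Clifford Leibniz rule and agree on $\HM$ (and kill the constant $1$) have a difference that is a $C^\infty$-linear derivation of each fibre vanishing on $H_x(M)$. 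Since $H_x(M)$ generates $\Cl(H_x(M))$, this difference is zero. So the local extensions coincide on overlaps and glue, and the same argument gives uniqueness.

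\emph{Main obstacle.} The main subtlety is verifying that the local operator really restricts to $\nabla^H$ on $\HM$ and preserves the Clifford relations. This is exactly where metric compatibility is needed. A derivation extending an endomorphism $D$ of $\HV$ is compatible with the relation $\mathbf x\mathbf y+\mathbf y\mathbf x=2\langle\mathbf x,\mathbf y\rangle$ only if $D$ is skew, and this is guaranteed by the first step through $j_*(\mathfrak{gl})\subset\mathfrak{so}$.
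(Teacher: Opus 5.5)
Your proof is correct. The metric-compatibility step coincides with the paper's, but the extension to $\Cl(\HM)$ takes a different route.

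The paper argues globally and without frames. It extends $\nabla^H$ as a derivation of the tensor bundle $T(\HM)$. It then checks that the ideal $\mathcal I$ generated by $\mathbf Y\otimes\mathbf W+\mathbf W\otimes\mathbf Y-2\langle\mathbf Y,\mathbf W\rangle\mathbf 1$ is mapped into itself; this is exactly where metric compatibility is used. The connection therefore descends to the quotient $\Cl(\HM)=T(\HM)/\mathcal I$, and uniqueness follows from local generation by $\HM$.

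You instead write the extension in a Witt-induced frame as $Z(\cdot)+[X_{\Omega(Z)},\cdot\,]$, using Lemma~\ref{lem:lie} to realize the connection matrix $j_*(\omega(Z))$ as an inner derivation. You glue the local pieces by noting that the difference of two Clifford--Leibniz connections agreeing on $\HM$ is a $C^\infty(M)$-linear fibrewise derivation vanishing on generators, hence zero. This makes the gauge computation on overlaps unnecessary and gives uniqueness at the same time.

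The paper's route is intrinsic, does not depend on Section~\ref{sec:groups}, and isolates the role of skewness in the invariance of $\mathcal I$. In your route that role is absorbed into the existence of $X_{\Omega(Z)}$: Lemma~\ref{lem:lie} needs $\Omega(Z)\in\mathfrak{so}(\HV)$, which here is guaranteed by $j_*(\mathfrak{gl}(V))\subset\mathfrak{so}(\HV)$. In exchange, you get at once the explicit local formula $\nabla^H_Z=Z+\tfrac12\omega_l{}^k(Z)\,[e_k\wedge\theta^l,\cdot\,]$. The paper only uses this formula later, in Proposition~\ref{prop:parallel}(c), Theorem~\ref{thm:compat}(a) and Proposition~\ref{prop:explicit}.
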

\begin{proof}
\noindent\textbf{(i) Compatibility with $\langle\,,\,\rangle$.}
Write $\mathbf Y=Y_*\oplus Y^*$, $\mathbf W=W_*\oplus W^*$, with $Y_*,W_*\in\sec TM$ and $Y^*,W^*\in\sec T^*M$. By \eqref{eq:hypform}, $\langle\mathbf Y,\mathbf W\rangle=Y^*(W_*)+W^*(Y_*)$, so that
\begin{equation}\label{eq:step1}
Z\langle\mathbf Y,\mathbf W\rangle=Z\big(Y^*(W_*)\big)+Z\big(W^*(Y_*)\big).
\end{equation}
By \eqref{eq:dualpairing},
$Z\big(Y^*(W_*)\big)=(\nabla_{Z}Y^*)(W_*)+Y^*(\nabla_{Z}W_*)$ and $Z\big(W^*(Y_*)\big)=(\nabla_{Z}W^*)(Y_*)+W^*(\nabla_{Z}Y_*)$.
On the other hand, by \eqref{eq:DH}, $\nabla^H_{Z}\mathbf Y=(\nabla_{Z}Y_*)\oplus(\nabla_{Z}Y^*)$ and $\nabla^H_{Z}\mathbf W=(\nabla_{Z}W_*)\oplus(\nabla_{Z}W^*)$, so that
\[
\langle\nabla^H_{Z}\mathbf Y,\mathbf W\rangle=(\nabla_{Z}Y^*)(W_*)+W^*(\nabla_{Z}Y_*),\qquad
\langle\mathbf Y,\nabla^H_{Z}\mathbf W\rangle=Y^*(\nabla_{Z}W_*)+(\nabla_{Z}W^*)(Y_*).
\]
The sum of these two expressions is exactly the right-hand side of \eqref{eq:step1}, rewritten with \eqref{eq:dualpairing}. This proves the first identity.

\smallskip
\noindent\textbf{(ii) Leibniz rule for the Clifford product.}
We first extend $\nabla^H$ to the tensor algebra $T(\HM)=\bigoplus_{k\geq0}\HM^{\otimes k}$ and then check that the extension descends to the quotient defining the Clifford bundle (cf.\ Eq.~(4.150) of \cite{RodriguesCapelas2016}). For sections $\mathbf Y_1,\dots,\mathbf Y_k$ of $\HM$ define
\[
\nabla^H_{Z}(\mathbf Y_1\otimes\cdots\otimes\mathbf Y_k):=\sum_{i=1}^k\mathbf Y_1\otimes\cdots\otimes(\nabla^H_{Z}\mathbf Y_i)\otimes\cdots\otimes\mathbf Y_k,
\]
extended by linearity, with $\nabla^H_{Z} f=Z(f)$ on functions. This extension satisfies
\begin{equation}\label{eq:leibniz-tensor}
\nabla^H_{Z}(S\otimes T)=(\nabla^H_{Z}S)\otimes T+S\otimes(\nabla^H_{Z}T),\qquad S,T\in\sec T(\HM).
\end{equation}

Now, consider $\Cl(\HM)=T(\HM)/\mathcal I$, where $\mathcal I$ is the bundle of two-sided ideals generated fibrewise by $\mathbf Y\otimes \mathbf W+\mathbf W\otimes \mathbf Y-2\langle \mathbf Y,\mathbf W\rangle\,\mathbf 1$. Using \eqref{eq:leibniz-tensor}, part (i) and $\nabla^H_{Z}\mathbf1=0$,
\begin{align*}
\nabla^H_{Z}\big(\mathbf Y\otimes \mathbf W+\mathbf W\otimes \mathbf Y-2\langle \mathbf Y,\mathbf W\rangle\,\mathbf1\big)
&=\Big[(\nabla^H_{Z}\mathbf Y)\otimes \mathbf W+\mathbf W\otimes(\nabla^H_{Z}\mathbf Y)-2\langle\nabla^H_{Z}\mathbf Y,\mathbf W\rangle\,\mathbf1\Big]\\
&\qquad+\Big[\mathbf Y\otimes(\nabla^H_{Z}\mathbf W)+(\nabla^H_{Z}\mathbf W)\otimes \mathbf Y-2\langle \mathbf Y,\nabla^H_{Z}\mathbf W\rangle\,\mathbf1\Big]
\end{align*}
and each bracket is a generator of $\mathcal I$. Since $\mathcal I$ is a two-sided ideal, \eqref{eq:leibniz-tensor} shows that $\nabla^H_{Z}(\mathcal I)\subset\mathcal I$. Hence $\nabla^H$ induces a connection on the quotient and passing \eqref{eq:leibniz-tensor} to the quotient (where $\otimes$ becomes the Clifford product) gives $\nabla^H_{Z}(AB)=(\nabla^H_{Z}A)B+A(\nabla^H_{Z}B)$. Uniqueness holds because $\Cl(\HM)$ is locally generated by $\HM$.
\end{proof}

\begin{remark}[Metric compatibility and non-metricity]\label{rem:relocation}
Lemma~\ref{lem:DHcompat} shows that even when $\nabla$ is not compatible with a given metric $\gring$ on $M$ ($\nabla\gring\neq0$, as in Weyl or Riemann--Cartan--Weyl geometries), its isotropic extension $\nabla^H$ is \emph{always} compatible with the hyperbolic metric of $\HM$. The non-metricity of $\nabla$ is thus not detected by $\langle\,,\,\rangle$, but by the splitting of $\HM$ determined by $\gring$: put $H_\pm:=\{Y_*\oplus\pm\gring^\flat Y_*:Y_*\in TM\}$, so that $\HM=H_+\oplus H_-$ orthogonally and $\langle\,,\,\rangle|_{H_\pm}=\pm2\gring$. Then $H_+$ and $H_-$ are $\nabla^H$-parallel if and only if $\nabla\gring=0$. Indeed, $\nabla^H_{Z}(Y_*\oplus\pm\gring^\flat Y_*)=\nabla_{Z}Y_*\oplus\pm\nabla_{Z}(\gring^\flat Y_*)$ lies in $H_\pm$ for all $Y_*$ iff $\nabla_{Z}(\gring^\flat Y_*)=\gring^\flat(\nabla_{Z}Y_*)$, i.e.\ iff $(\nabla_{Z}\gring)(Y_*,\cdot)=0$. When $\nabla\gring=0$, $Y_*\mapsto\tfrac1{\sqrt2}(Y_*\oplus\pm\gring^\flat Y_*)$ induces isomorphisms $\Cl(TM,\pm\gring)\simeq\Cl(H_\pm)\subset\Cl(\HM)$ compatible with the connections, the geometric form of \eqref{eq:tensor-iso}. In particular, for the Levi-Civita connection both $H_+$ and $H_-$ are $\Dring^H$-parallel and on each factor $\Cl(H_\pm)\simeq\Cl(TM,\pm\gring)$ of \eqref{eq:tensor-iso} $\Dring^H$ restricts to the connection induced by the Levi-Civita connection.
\end{remark}

\begin{proposition}[Parallel sections]\label{prop:parallel}
Let $\nabla$ be a linear connection on $M$ and $\nabla^H$ its isotropic extension.
\begin{enumerate}[label=(\alph*)]
\item For any local frame $e$ of $TM$, $\nabla_{Z}e_*=\operatorname{tr}\omega(Z)\,e_*$ and $\nabla_{Z}\theta^*=-\operatorname{tr}\omega(Z)\,\theta^*$, and $\nabla^H$ restricted to $\Ext TM$ and to $\Ext T^*M$ (subalgebras of $\Cl(\HM)$) is the connection induced by $\nabla$ on multivectors and on forms.
\item The canonical idempotent $\fH$ is $\nabla^H$-parallel.
\item The volume element $\varpi$ of $\Cl(\HM)$ is $\nabla^H$-parallel.
\item If $\nabla$ is compatible with a metric $\gring$ and $e$ is $\gring$-orthonormal, then $\operatorname{tr}\omega=0$, and $e_*$ and $\theta^*$ are $\nabla$-parallel; in particular this holds for $\Dring$.
\end{enumerate}
\end{proposition}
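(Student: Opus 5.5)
The plan is to work throughout in the Witt-induced frame of a local frame $e$ of $TM$. There the connection matrix of $\nabla^H$ is $j_*(\omega(Z))$, and by Lemma~\ref{lem:DHcompat} the extension of $\nabla^H$ to $\Cl(\HM)$ is a derivation obeying the Clifford Leibniz rule.

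For (a), recall from \eqref{eq:witt} that on $V$ (resp.\ $V^*$) the Clifford product coincides with the exterior product, because these subspaces are totally isotropic. Hence $\Ext TM$ and $\Ext T^*M$ are subalgebras of $\Cl(\HM)$ generated by $TM$ and $T^*M$, and $\nabla^H$ preserves each of them since it preserves $TM$ and $T^*M$. On each subalgebra $\nabla^H$ is a derivation of the wedge product that agrees with $\nabla$ on generators. By uniqueness of such an extension it is the induced connection on multivectors and forms. Next I apply Leibniz to $e_*=e_1\cdots e_n$. Since $\nabla_Ze_l=\omega_l{}^k(Z)e_k$, only the diagonal terms $k=l$ survive in the wedge, which gives $\nabla_Ze_*=\operatorname{tr}\omega(Z)\,e_*$. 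Likewise $\nabla_Z\theta^k=-\omega_l{}^k(Z)\theta^l$ gives $\nabla_Z\theta^*=-\operatorname{tr}\omega(Z)\,\theta^*$.

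For (b), Theorem~\ref{thm:fH}(ii) gives $\fH=2^{-n}(-1)^{n(n-1)/2}e_*\theta^*$ in any Witt-induced frame. Applying the Leibniz rule and (a) yields $\nabla^H_Z\fH=c\,(\operatorname{tr}\omega-\operatorname{tr}\omega)e_*\theta^*=0$. An alternative route is to use Lemma~\ref{lem:lie}: the derivation extending $j_*(\omega(Z))$ is $[X,\cdot\,]$ with $X=\tfrac12\omega_l{}^k e_k\wedge\theta^l$. Since $\fH$ is $j(\mathrm{GL}(V))_\#$-invariant, it is annihilated by $[X,\cdot\,]$. In the local frame the components of $\fH$ are constant, so $\nabla^H_Z\fH$ is exactly this commutator, namely $0$.

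For (c), the same argument applies. In the Witt-induced frame $\varpi$ has constant components, so $\nabla^H_Z\varpi=[X,\varpi]$ with $X$ even. By Proposition~\ref{prop:chir}, $\varpi x=\hat x\varpi$, so even elements commute with $\varpi$ and $[X,\varpi]=0$. Equivalently, $\mathfrak{so}(\HV)$ acts on $\Ext^{2n}\HV$ by the trace, and the trace of $j_*(a)$ is $0$.

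For (d), suppose $\nabla\gring=0$ and $e$ is $\gring$-orthonormal. Then $\omega_{kl}=\gring(e_k,\nabla e_l)$ is skew with respect to $\gring$, i.e.\ $\eta_{kk}\omega_l{}^k$ is antisymmetric with a diagonal signature matrix $\eta$. Hence its diagonal entries vanish, $\operatorname{tr}\omega=0$, and (a) gives the parallelism of $e_*$ and $\theta^*$. The Levi-Civita connection is a special case. The only point needing care is the justification in (b) and (c) that $\nabla^H_Z$ acts as a commutator on sections with constant frame components. I would settle this via the final clause of Lemma~\ref{lem:lie} together with Lemma~\ref{lem:DHcompat}'s uniqueness; the direct Leibniz computation serves as a check.
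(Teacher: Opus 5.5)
Your proof is correct and follows the paper's argument essentially step by step: (a) through the identification of the Clifford product with the exterior product on the isotropic subbundles and the Leibniz expansion of $e_*$ and $\theta^*$; (b) through $\fH\propto e_*\theta^*$ with cancelling traces; (c) through $\nabla^H_Z$ acting as $[X_{\Omega(Z)},\cdot\,]$ on sections with constant frame components, together with $\varpi$ commuting with even elements; and (d) through the vanishing diagonal of the $\gring$-skew connection matrix. Your supplementary checks, namely the infinitesimal $j(\mathrm{GL}(V))$-invariance of $\fH$ for (b) and $\operatorname{tr}j_*(a)=0$ on $\Ext^{2n}\HV$ for (c), are also valid.
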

\begin{proof}
(a) Since $TM$ and $T^*M$ are totally isotropic, the Clifford product on each of them is the exterior product (cf.\ \eqref{eq:witt}), so, by Lemma~\ref{lem:DHcompat}, $\nabla^H$ acts on $\Ext TM$ and $\Ext T^*M$ as the derivation extending $\nabla$. Expanding $\nabla_{Z}e_k=\omega_k{}^l(Z)e_l$ inside $e_1\wedge\cdots\wedge e_n$, only the terms $l=k$ survive, giving $\operatorname{tr}\omega(Z)e_*$, and similarly for $\theta^*$, with $\nabla_{Z}\theta^k=-\omega_l{}^k(Z)\theta^l$.
(b) By Theorem~\ref{thm:fH}(ii), $\fH=2^{-n}e_*\tilde\theta^*$ in the Witt-induced frame of any local frame $e$. By (a) and the Leibniz rule, $\nabla^H_{Z}\fH=2^{-n}\big(\operatorname{tr}\omega(Z)-\operatorname{tr}\omega(Z)\big)e_*\tilde\theta^*=0$.
(c) In a local orthonormal frame $\{\underline\sigma_A\}$ of $\HM$ (e.g.\ a Witt-induced one), $\nabla^H_{Z}\underline\sigma_A=\Omega(Z)^B{}_A\underline\sigma_B$ with $\Omega(Z)\in\mathfrak{so}(\HV)$ by the metric compatibility in Lemma~\ref{lem:DHcompat}. By Lemma~\ref{lem:lie}, on sections with constant components in this frame, $\nabla^H_{Z}$ acts as the derivation $[X_{\Omega(Z)},\cdot\,]$; $\varpi$ has constant components, and it commutes with the even element $X_{\Omega(Z)}$ (Proposition~\ref{prop:chir}). Hence $\nabla^H_{Z}\varpi=0$.
(d) Metric compatibility gives $0=Z\,\gring(e_k,e_l)=\omega_{kl}(Z)+\omega_{lk}(Z)$ with $\omega_{kl}:=\gring_{lm}\omega_k{}^m$, so $\omega_k{}^k(Z)=\gring^{kk}\omega_{kk}(Z)=0$ for each $k$ and $\operatorname{tr}\omega=0$. Now apply (a).
\end{proof}

\begin{remark}
In the Clifford algebra, $e_*\theta^*$ is \emph{not} the exterior product $e_*\wedge\theta^*$ (the factors are not orthogonal, $\langle e_k,\theta^k\rangle=1$). By Theorem~\ref{thm:fH}(ii), $e_*\theta^*=2^n(-1)^{n(n-1)/2}\fH$, which is idempotent up to a factor, whereas the volume element of $\HV$ is $\varpi$, with $\varpi^2=1$. Proposition~\ref{prop:parallel} shows that both are parallel, the former for every linear connection on $M$.
\end{remark}

\subsection{The hyperbolic spin connection}

 Let $(P_{\SpineH}(M),\Upsilon)$ be a hyperbolic spin structure. In this subsection vector fields on $M$ are denoted by $Z,Z_1,Z_2$ to avoid confusion with the model space $V$, with elements $u$ of $\SpineH$ and with open sets $U\subset M$. Let $\{E_A\}_{A=1}^{2n}$ be a basis of $\HV$, so that $E_AE_B+E_BE_A=2\langle E_A,E_B\rangle$ and let $\{E^A\}$ be its reciprocal basis.

\begin{definition}[Hyperbolic spin connection]\label{def:spinconn}
Let $\nabla^H$ be a linear connection on $\HM$ compatible with\footnote{Examples are $\Dring^H$ and the isotropic extension $\nabla^H$ of an arbitrary linear connection $\nabla$ on $M$ (Lemma~\ref{lem:DHcompat}).} $\langle\,,\,\rangle$. Let $\zeta:U\to P_{\SpineH}(M)$ be a local section, $\underline E_A:=\Upsilon(\zeta)(E_A)$ the corresponding orthonormal frame of $\HM$, and $\nabla^H_{Z}\underline E_A=\Omega^\zeta(Z)^B{}_A\underline E_B$. The \emph{local spin connection form} is
$$\omega^\zeta_{Z}:=2X_{\Omega^\zeta(Z)}=\tfrac12\,\Omega^\zeta_{AB}(Z)\,E^A\wedge E^B\in\Ext^2\HV,\qquad\text{i.e.}\qquad \tfrac12[\omega^\zeta_{Z},v]=\Omega^\zeta(Z)v\ \ (v\in\HV),$$
and the \emph{hyperbolic spinorial covariant derivative} on $\Cl^l_{\SpineH}(M)$ (and on its complexification, and on $S_H(M)$, $S_{c,H}(M)$ through $\mu_l$, $\mu_c$) is
\begin{equation}\label{eq:nablas}
\nabla^s_{Z}[\zeta,\psi_\zeta]:=\big[\zeta,\,Z(\psi_\zeta)+\tfrac12\,\omega^\zeta_{Z}\,\psi_\zeta\big],\qquad Z\in\sec TM.
\end{equation}
On $\Cl^r_{\SpineH}(M)$ and $S^\star_H(M)$ we set $\nabla^r_{Z}[\zeta,\varphi_\zeta]:=[\zeta,Z(\varphi_\zeta)-\tfrac12\varphi_\zeta\omega^\zeta_{Z}]$, and on $P_{\SpineH}(M)\times_{\mathrm{Ad}}\RH$, $\nabla_{Z}[\zeta,a_\zeta]:=[\zeta,Z(a_\zeta)+\tfrac12[\omega^\zeta_{Z},a_\zeta]]$.
\end{definition}

Formula \eqref{eq:nablas} is the hyperbolic analogue of the classical spinorial covariant derivative $\nabla^s_{Z}\Psi=Z(\Psi)+\tfrac12\omega_{Z}\Psi$ of \cite[Prop.\ 58]{MosnaRodrigues2004}. The factor $\tfrac12$ corresponds to the normalization $\tfrac12[\omega_{Z},\cdot\,]=\Omega(Z)$. Note that $\omega^\zeta$ is a \emph{local} ($\zeta$-dependent) $\Ext^2\HV$-valued $1$-form; the global object is the principal connection of Theorem~\ref{thm:welldef}(a).

\begin{theorem}[Well-definedness]\label{thm:welldef}
\begin{enumerate}[label=(\alph*)]
\item The forms $\tfrac12\omega^\zeta$ are the local representatives of a unique principal connection on $P_{\SpineH}(M)$ lifting the connection of $P_{\SOeH}(M)$ defined by $\nabla^H$. For $u:U\to\SpineH$,
\begin{equation}\label{eq:gauge}
\omega^{\zeta u}=u^{-1}\omega^\zeta u+2\,u^{-1}du .
\end{equation}
\item The hyperbolic spinorial covariant derivative \eqref{eq:nablas} is independent of $\zeta$ and defines a linear connection on $\Cl^l_{\SpineH}(M)$, $\Cl^l_{\SpineH}(M)_{\mathbb C}$, $S_H(M)$ and $S_{c,H}(M)$. Likewise $\nabla^r$ and the $\mathrm{Ad}$-connection.
\end{enumerate}
\end{theorem}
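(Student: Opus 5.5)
The plan is to establish the gauge law \eqref{eq:gauge} first and then derive both (a) and (b) from it.

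\emph{Transformation of the frames.} Let $u:U\to\SpineH$. By condition (b) of Definition~\ref{def:spinstructure}, $\Upsilon(\zeta u)=\Upsilon(\zeta)\rho_u$, so the new frame is $\underline E'_A=\Upsilon(\zeta)(uE_Au^{-1})$. Write $R=\rho_u$ for the matrix with $uE_Au^{-1}=R^B{}_AE_B$. Then $\underline E'_A=R^B{}_A\underline E_B$, and the Leibniz rule gives
\[
\Omega^{\zeta u}(Z)=R^{-1}\Omega^\zeta(Z)R+R^{-1}Z(R)
\]
in $\mathfrak{so}(\HV)$.

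\emph{Transfer to bivectors.} The map $X\mapsto\tfrac12[X,\cdot\,]|_{\HV}$ from $\Ext^2\HV$ to $\mathfrak{so}(\HV)$ is a Lie algebra isomorphism, by Lemma~\ref{lem:lie} with the factor $2$ coming from $\omega=2X_\Omega$. It is also equivariant: $\tfrac12[u^{-1}Xu,\cdot]=\rho_u^{-1}\circ\tfrac12[X,\cdot]\circ\rho_u$. For the inhomogeneous term I will check that $u^{-1}Z(u)\in\Ext^2\HV$. This holds because it is the left Maurer--Cartan form of $\SpineH$, whose Lie algebra is $\Ext^2\HV$. Then I compute its image in $\mathfrak{so}(\HV)$ by differentiating $\rho_u(v)=uvu^{-1}$:
\[
R^{-1}Z(R)v=u^{-1}\big(Z(u)v-vZ(u)\big)\cdot\text{conj}\;=\;[u^{-1}Z(u),v].
\]
Comparing with the normalization $\tfrac12[\omega,v]=\Omega v$ gives the coefficient $2$ in $2u^{-1}du$. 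Injectivity of $d\rho$ then yields \eqref{eq:gauge}. The main obstacle is this bookkeeping of the factors $2$ and $\tfrac12$: the Maurer--Cartan term corresponds to $\tfrac12\omega$, not to $\omega$. I will fix conventions once, by checking on a single generator $u=\exp(tX)$.

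\emph{Part (a).} The relation $\tfrac12\omega^{\zeta u}=u^{-1}(\tfrac12\omega^\zeta)u+u^{-1}du$ is exactly the compatibility condition for a family of $\mathfrak{spin}$-valued local forms to glue into a principal connection form on $P_{\SpineH}(M)$. The gluing is given on $T_{\zeta(x)u}$ by the standard formula $\mathrm{Ad}_{u^{-1}}\zeta^*(\cdot)+\text{Maurer--Cartan}$. Since $d\rho$ is an isomorphism, its pushforward under $\Upsilon$ has local forms $\Omega^{\zeta}$, which are the local forms of the $\SOeH$-connection defined by $\nabla^H$. Compatibility of $\nabla^H$ with $\langle\,,\,\rangle$ ensures $\Omega\in\mathfrak{so}(\HV)$. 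Uniqueness follows because $\Upsilon$ is a covering, so any connection on $P_{\SpineH}(M)$ is determined by its image under $d\rho$.

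\emph{Part (b).} By \eqref{eq:transf}, $\psi_{\zeta u}=u^{-1}\psi_\zeta$. Then
\[
Z(\psi_{\zeta u})+\tfrac12\omega^{\zeta u}_Z\psi_{\zeta u}=-u^{-1}Z(u)u^{-1}\psi_\zeta+u^{-1}Z(\psi_\zeta)+\tfrac12u^{-1}\omega^\zeta_Z\psi_\zeta+u^{-1}Z(u)u^{-1}\psi_\zeta,
\]
using \eqref{eq:gauge}. The Maurer--Cartan terms cancel, leaving $u^{-1}\big(Z(\psi_\zeta)+\tfrac12\omega^\zeta_Z\psi_\zeta\big)$, which is the $\Cl^l$ transformation rule. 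Hence \eqref{eq:nablas} is independent of $\zeta$. Linearity over $C^\infty$ in $Z$ and the Leibniz rule in $\psi$ are evident locally, so $\nabla^s$ is a linear connection. The same computation works for $S_H(M)$ and $S_{c,H}(M)$ via $\mu_l,\mu_c$, and for the complexification, since $u$ is real.

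For $\nabla^r$, use $\varphi_{\zeta u}=\varphi_\zeta u$. The derivative produces $\varphi_\zeta Z(u)$, and this cancels the term $-\tfrac12\varphi_\zeta u\cdot 2u^{-1}Z(u)$. For the $\mathrm{Ad}$-bundle, combine the two computations: $a_{\zeta u}=u^{-1}a_\zeta u$, and the commutator $[\,\cdot\,,a]$ absorbs both Maurer--Cartan contributions.
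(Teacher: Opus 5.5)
Your proposal is correct and follows essentially the same route as the paper. You transform the frame by $\rho_u$, use $\rho_u^{-1}d\rho_u=d\rho(u^{-1}du)$ together with the $\mathrm{Ad}$-equivariance of $d\rho$, and invert $d\rho$ via Lemma~\ref{lem:lie}, with the factor $2$ coming from $\omega^\zeta=2X_{\Omega^\zeta}$; you then check $\zeta$-independence of $\nabla^s$, $\nabla^r$ and the $\mathrm{Ad}$-connection by substituting \eqref{eq:gauge} into \eqref{eq:transf}. The only blemish is notational: your intermediate expression involving ``conj'' should read $u^{-1}Z(u)\,v-v\,u^{-1}Z(u)$, but the resulting identity $R^{-1}Z(R)v=[u^{-1}Z(u),v]$ is right.
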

\begin{proof}
(a) The frame $\Upsilon(\zeta u)=\Upsilon(\zeta)\rho_u$ has connection matrix $\rho_u^{-1}\Omega^\zeta\rho_u+\rho_u^{-1}d\rho_u$. Since $\rho$ is a homomorphism, $\rho_u^{-1}d\rho_u=d\rho(u^{-1}du)$ and $\rho_u^{-1}d\rho(X)\rho_u=d\rho(u^{-1}Xu)$. Applying $d\rho^{-1}$ (Lemma~\ref{lem:lie}) gives $\tfrac12\omega^{\zeta u}=u^{-1}\tfrac12\omega^\zeta u+u^{-1}du$, the transformation law of a connection on the principal bundle. Uniqueness holds because $d\rho$ is a Lie algebra isomorphism.
(b) With $\psi_{\zeta u}=u^{-1}\psi_\zeta$,
$$Z(u^{-1}\psi_\zeta)+\tfrac12(u^{-1}\omega^\zeta_{Z}u+2u^{-1}Z(u))u^{-1}\psi_\zeta=u^{-1}\big(Z(\psi_\zeta)+\tfrac12\omega^\zeta_{Z}\psi_\zeta\big)+\big(Z(u^{-1})+u^{-1}Z(u)u^{-1}\big)\psi_\zeta,$$
and the last bracket vanishes. The Leibniz rule with respect to functions is obvious. For $\nabla^r$ and $\mathrm{Ad}$ the computation is identical, using the other two rules of \eqref{eq:transf}. On $S_{c,H}(M)$ the formula is the connection induced by the principal connection through $\mu_c$, since $d\mu_c(X)m=Xm$.
\end{proof}

\begin{theorem}[Compatibilities]\label{thm:compat}
Let $\nabla^s$ be as in Definition~\ref{def:spinconn}.
\begin{enumerate}[label=(\alph*)]
\item Under Proposition~\ref{prop:ClAd}, the $\mathrm{Ad}$-connection on $\Cl(\HM)$ is the extension of $\nabla^H$ as a derivation of the Clifford product (Lemma~\ref{lem:DHcompat}).
\item $\nabla^s_{Z}(\phi\Psi)=(\nabla^H_{Z}\phi)\Psi+\phi\,\nabla^s_{Z}\Psi$ for $\phi\in\sec\Cl(\HM)$ (resp.\ its complexification) and $\Psi$ a section of $\Cl^l_{\SpineH}(M)$, $S_H(M)$ or $S_{c,H}(M)$.
\item $\nabla^s_{Z}(\Psi a)=(\nabla^s_{Z}\Psi)a$ for constant $a\in\CRH$. Hence $\nabla^s$ preserves every $I_f(M)$ (real or complex), the decomposition of Theorem~\ref{thm:ideal}(iv), and commutes with the isomorphisms $R_{g^{-1}}$ of Theorem~\ref{thm:ideal}(iii).
\item \emph{(Compatibility of $\mu_c$.)} For every complex left $\CRH$-module $\mathcal M_{c,H}$, the connection of $S_{c,H}(M)=P_{\SpineH}(M)\times_{\mu_c}\mathcal M_{c,H}$ is given by \eqref{eq:nablas}. It satisfies (b), constant module maps $T:\mathcal M_{c,H}\to\mathcal M'_{c,H}$ induce parallel bundle maps, and the identification $\IcH\simeq S_{c,H}(M)$ of Theorem~\ref{thm:ideal}(ii) is connection-preserving. The same holds for $\mu_l$ and $S_H(M)$.
\item The fibre forms $\beta$ on $\IH$ and $h$ on $\IcH$ are parallel: $Z\beta(\Psi,\Psi')=\beta(\nabla^s_{Z}\Psi,\Psi')+\beta(\Psi,\nabla^s_{Z}\Psi')$ and similarly for $h$. Complex conjugation on $\IcH$ is parallel, so $(\IcH,\nabla^s)=\mathbb C\otimes(\IH,\nabla^s)$.
\item The volume element $\varpi$ is $\nabla^H$-parallel, and the chiral subbundles $I^\pm_H(M)$ are $\nabla^s$-parallel.
\item The pairing $I^\star_f(M)\otimes I_f(M)\to M\times\mathbb R$ is parallel: $Z(\Phi\Psi)=(\nabla^r_{Z}\Phi)\Psi+\Phi\,\nabla^s_{Z}\Psi$.
\item The curvature is $R^s(Z_1,Z_2)\Psi=\tfrac12\,\underline{\Omega}(Z_1,Z_2)\Psi$, where $\underline\Omega(Z_1,Z_2)\in\sec\Ext^2\HM$ is the bivector associated, by \eqref{eq:norm} (times $2$), with the curvature $R^H(Z_1,Z_2)$ of $\nabla^H$.
\end{enumerate}
\end{theorem}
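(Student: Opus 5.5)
The strategy is to work throughout in a single local section $\zeta:U\to P_{\SpineH}(M)$ and to reduce every item to an algebraic identity in $\RH$. Well-definedness (Theorem~\ref{thm:welldef}) ensures that nothing depends on $\zeta$. The basic tool is the identity $\tfrac12[\omega^\zeta_Z,v]=\Omega^\zeta(Z)v$ of Definition~\ref{def:spinconn}, combined with Lemma~\ref{lem:lie}: the derivation of $\RH$ extending $\Omega^\zeta(Z)$ is $\tfrac12[\omega^\zeta_Z,\cdot\,]$.

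For (a), take $a_\zeta$ to be the components of a section of $\Cl(\HM)$ in the frame $\underline E_A$. Then $\nabla^H_Z$ differentiates these components and adds the derivation extending $\Omega^\zeta(Z)$, by Lemma~\ref{lem:DHcompat} and the Leibniz rule on products of the $\underline E_A$. Via $\kappa$ of Proposition~\ref{prop:ClAd}, this derivation is $\tfrac12[\omega^\zeta_Z,\cdot\,]$, which is exactly the $\mathrm{Ad}$-connection.

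For (b), compute
\[
Z(\phi_\zeta\psi_\zeta)+\tfrac12\omega^\zeta_Z\phi_\zeta\psi_\zeta=\Big(Z(\phi_\zeta)+\tfrac12[\omega^\zeta_Z,\phi_\zeta]\Big)\psi_\zeta+\phi_\zeta\Big(Z(\psi_\zeta)+\tfrac12\omega^\zeta_Z\psi_\zeta\Big),
\]
and then use (a). Item (c) is immediate, since left multiplication by $\omega^\zeta_Z$ commutes with right multiplication by a constant $a$. Consequently $R_f$, $R_{p_J}$ and $R_{g^{-1}}$ are parallel, and parallel projections preserve their images.

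For (d), note that $d\mu_c(X)m=Xm$, so the induced connection is \eqref{eq:nablas}. A constant module map $T$ commutes with left multiplication, which makes it parallel, and the identification in Theorem~\ref{thm:ideal}(ii) is of this kind. For (e), apply the infinitesimal invariance in Proposition~\ref{prop:pairing}(iii) with $X=\tfrac12\omega^\zeta_Z\in\Ext^2\HV$. Since $\beta$ has constant coefficients in $\zeta$, this gives $Z\beta(\psi,\varphi)=\beta(Z\psi,\varphi)+\beta(\psi,Z\varphi)$, and the $\omega$ terms cancel. The same argument works for $h$, because $\omega^\zeta_Z$ is real and so commutes with $\natural$; this realness also gives the parallelism of conjugation.

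Item (f) has two parts. Parallelism of $\varpi$ is Proposition~\ref{prop:parallel}(c), or follows from (a) together with the fact that $\varpi$ is central in the even subalgebra. The subbundles $S^\pm$ are preserved by the even element $\omega^\zeta_Z$ (Proposition~\ref{prop:chir}) and by component differentiation. For (g), expand
\[
Z(\varphi_\zeta\psi_\zeta)=\Big(Z\varphi_\zeta-\tfrac12\varphi_\zeta\omega^\zeta_Z\Big)\psi_\zeta+\varphi_\zeta\Big(Z\psi_\zeta+\tfrac12\omega^\zeta_Z\psi_\zeta\Big);
\]
the two $\omega$ terms cancel.

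For (h), compute $R^s(Z_1,Z_2)=[\nabla^s_{Z_1},\nabla^s_{Z_2}]-\nabla^s_{[Z_1,Z_2]}$ locally. This gives left multiplication by
\[
\tfrac12\Big(Z_1\omega^\zeta_{Z_2}-Z_2\omega^\zeta_{Z_1}-\omega^\zeta_{[Z_1,Z_2]}\Big)+\tfrac14[\omega^\zeta_{Z_1},\omega^\zeta_{Z_2}].
\]
Because $\omega=2X_\Omega$ and $X$ is a Lie algebra isomorphism (Lemma~\ref{lem:lie}), we have $\tfrac14[\omega_1,\omega_2]=[X_{\Omega_1},X_{\Omega_2}]=X_{[\Omega_1,\Omega_2]}$. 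The whole expression therefore equals $X_{d\Omega+\Omega\wedge\Omega}=X_{R^H}=\tfrac12\underline\Omega$, which is the claim.

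The main obstacle I expect is bookkeeping in (h): the factors of $2$ in $\omega=2X_\Omega$, and the sign convention $[\Omega_1,\Omega_2]$ versus $\Omega\wedge\Omega$ in the matrix curvature. I would settle both by checking on a single bivector generator that $\tfrac12[\tfrac12\underline\Omega,v]=R^H v$.
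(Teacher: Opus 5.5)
Your proposal is correct and follows essentially the same route as the paper: each item is reduced to a local identity in $\RH$ in a fixed section $\zeta$. Specifically, (a) follows because two derivations agree on generators, (b), (c), (g) are the same one-line expansions, and (e) uses $\tilde X=-X$ for bivectors (equivalently, the infinitesimal invariance in Proposition~\ref{prop:pairing}(iii)). The remaining differences are cosmetic. In (h) you compute $[\nabla^s_{Z_1},\nabla^s_{Z_2}]-\nabla^s_{[Z_1,Z_2]}$ directly, and your factors of $2$ and signs do give $X_{R^H(Z_1,Z_2)}=\tfrac12\underline\Omega(Z_1,Z_2)$, whereas the paper quotes the curvature $d(\tfrac12\omega)+\tfrac12\omega\wedge\tfrac12\omega$ of the principal connection. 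In (f) you use the evenness of $\omega^\zeta_Z$ directly, whereas the paper treats $I^\pm_H(M)$ as eigenbundles of the parallel endomorphism $\varpi\cdot$.
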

\begin{proof}
(a) On a local section $\underline E_A=[\zeta,E_A]_{\mathrm{Ad}}$ the $\mathrm{Ad}$-connection gives $[\zeta,\tfrac12[\omega^\zeta_{Z},E_A]]=[\zeta,\Omega^\zeta(Z)E_A]=\nabla^H_{Z}\underline E_A$. Both connections are derivations of the Clifford product ($Z$ and $\tfrac12[\omega,\cdot]$ are derivations) agreeing on $\HM$ and on scalars, and $\Cl(\HM)$ is generated by $\HM$.
(b) Locally, $Z(\phi\psi)+\tfrac12\omega\phi\psi=(Z\phi+\tfrac12[\omega,\phi])\psi+\phi(Z\psi+\tfrac12\omega\psi)$.
(c) The connection acts on the left, and right multiplication by constants commutes with it. If $\Psi f=\Psi$ then $(\nabla^s\Psi)f=\nabla^s(\Psi f)=\nabla^s\Psi$.
(d) Theorem~\ref{thm:welldef}(b), with (b) and (c) above. A constant module map commutes with left multiplication by $\omega^\zeta$ and with $Z$.
(e) Locally $\tilde\psi\varphi=\beta(\psi,\varphi)\theta^*$ with $\theta^*$ constant; $Z(\tilde\psi\varphi)=\widetilde{Z\psi}\,\varphi+\tilde\psi\,Z\varphi$ and $\widetilde{\tfrac12\omega\psi}\varphi+\tilde\psi\tfrac12\omega\varphi=\tfrac12\tilde\psi(\tilde\omega+\omega)\varphi=0$ by Lemma~\ref{lem:lie}. For $h$ use also $\omega^\natural=\omega$ ($\omega$ is real). The same fact gives the parallelism of $\natural$.
(f) $\varpi$ is central in the even subalgebra, so $[\zeta,\varpi]_{\mathrm{Ad}}$ is independent of $\zeta$ by \eqref{eq:transf} and $\tfrac12[\omega,\varpi]=0$ (cf.\ Proposition~\ref{prop:parallel}(c)). $I^\pm_H(M)$ are the eigenbundles of the parallel endomorphism $\varpi\cdot$ (up to the sign $(-1)^{n(n-1)/2}$), by (b).
(g) $Z(\varphi\psi)=(Z\varphi-\tfrac12\varphi\omega)\psi+\varphi(Z\psi+\tfrac12\omega\psi)$.
(h) The local curvature of the principal connection $\tfrac12\omega$ is $F=d(\tfrac12\omega)+\tfrac12\omega\wedge\tfrac12\omega$ and $d\rho(F)$ is the curvature matrix of $\nabla^H$. By \eqref{eq:norm}, $F=\tfrac12\underline\Omega$, and it acts through $d\mu$, i.e.\ by left multiplication.
\end{proof}

\begin{proposition}[Explicit form for isotropic extensions]\label{prop:explicit}
Let $\nabla$ be any linear connection on $M$, $\nabla^H$ its isotropic extension, $e$ a positively oriented\footnote{Positive orientation is needed: if $\det A<0$, then $j(A)\in SO_H\setminus\SOeH$, so the Witt-induced frame of a negatively oriented frame does not belong to $P_{\SOeH}(M)=F^+(M)\times_j\SOeH$ (see the proof of Theorem~\ref{thm:canonical}).} local frame of $TM$ (not necessarily orthonormal) with $\nabla_{Z}e_l=\omega_l{}^k(Z)e_k$, and $\zeta$ a local section of $P_{\SpineH}(M)$ with $\Upsilon\circ\zeta=\iota(e)$ (it exists on contractible $U$. For $\Pcan$ one can take $\zeta=[e,1]$). Then
\begin{equation}\label{eq:omega-witt}
\omega^\zeta_{Z}=\sum_{k,l}\omega_l{}^k(Z)\,e_k\wedge\theta^l,\qquad \nabla^s_{Z}[\zeta,\psi]=\big[\zeta,\,Z(\psi)+\tfrac12\,\omega_l{}^k(Z)\,e_k\wedge\theta^l\,\psi\big]
\end{equation}
and on $\IH$, $\nabla^s_{Z}[\zeta,\alpha\fH]=[\zeta,(\nabla_{Z}\alpha+\tfrac12\operatorname{tr}\omega(Z)\,\alpha)\fH]$, where $\nabla_{Z}\alpha$ is the connection induced on forms, written in the frame $e$. Consequently, for $P=\Pcan$ and under the isomorphism $\Phi$ of Theorem~\ref{thm:fock}, $\nabla^s=\nabla\otimes1+1\otimes\nabla^{(1/2)}$, where $\nabla^{(1/2)}$ is the connection induced by $\nabla$ on $(\Ext^nTM)^{1/2}$. If $\nabla$ is compatible with a metric $\gring$ and $e$ is $\gring$-orthonormal, then $\operatorname{tr}\omega=0$, and under $\IH\simeq\Ext T^*M$ the spin connection is exactly the connection $\nabla$ induced on forms (for $\nabla=\Dring$: the Levi-Civita connection on forms). The curvature bivector is $\underline\Omega(Z_1,Z_2)=R^k{}_l(Z_1,Z_2)\,e_k\wedge\theta^l$.
\end{proposition}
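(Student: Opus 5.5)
The plan is to compute the connection matrix of $\nabla^H$ in the Witt-induced frame and feed it into the normalization of Lemma~\ref{lem:lie}. In the frame $\underline E_A=\Upsilon(\zeta)(E_A)=\iota(e)(E_A)$, the remark after Definition~\ref{def:isotropic} gives the connection matrix $\Omega^\zeta(Z)=j_*(\omega(Z))$. Lemma~\ref{lem:lie} then gives $X_{j_*(a)}=\tfrac12a^k{}_le_k\wedge\theta^l$, so $\omega^\zeta_Z=2X_{\Omega^\zeta(Z)}=\omega_l{}^k(Z)e_k\wedge\theta^l$. Substituting into \eqref{eq:nablas} yields the second formula in \eqref{eq:omega-witt}. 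For existence of $\zeta$: $\iota(e)\in P_{\SOeH}(M)$ because $e$ is positively oriented, and it lifts through the double cover $\Upsilon$ over contractible $U$. For $\Pcan$ the lift is explicitly $\Upsilon([e,1])=\iota(e)$.

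On $\IH$, I take $\psi=\alpha\fH$ with $\alpha$ the component form of a section in the frame $e$, so that $Z(\psi)=Z(\alpha)\fH$. In the Clifford algebra, $e_k\wedge\theta^l=e_k\theta^l-\delta^l_k$. By \eqref{eq:fock}, $e_k\theta^l\alpha\fH=2(i_{e_k}(\theta^l\wedge\alpha))\fH=2(\delta^l_k\alpha-\theta^l\wedge i_{e_k}\alpha)\fH$. Hence
\[
\tfrac12\omega_l{}^k e_k\wedge\theta^l\,\alpha\fH=\big(\tfrac12\operatorname{tr}\omega\,\alpha-\omega_l{}^k\theta^l\wedge i_{e_k}\alpha\big)\fH .
\]
Since $\nabla_Z\theta^k=-\omega_l{}^k\theta^l$, the derivation induced on forms is $\nabla_Z\alpha=Z(\alpha)-\omega_l{}^k\theta^l\wedge i_{e_k}\alpha$ in components, and the formula for $\nabla^s$ on $\IH$ follows. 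This sign and factor bookkeeping is the step I expect to be most error-prone. I would check it first on $\alpha=1$ and $\alpha=\theta^*$ against Proposition~\ref{prop:lift} and Proposition~\ref{prop:parallel}(a).

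For the tensor product decomposition with $P=\Pcan$, $\zeta=[e,1]$, I transport through $\Phi$. Here $\Phi(\alpha\otimes e_*^{1/2})=[\zeta,s(e^*\alpha)\fH]$, and $s$ commutes with the degree-preserving operator $\theta^l\wedge i_{e_k}$ and with $Z$. Therefore
\[
\nabla^s\Phi(\alpha\otimes e_*^{1/2})=\Phi\big(\nabla\alpha\otimes e_*^{1/2}+\alpha\otimes\tfrac12\operatorname{tr}\omega\,e_*^{1/2}\big).
\]
By Proposition~\ref{prop:parallel}(a), $\nabla e_*=\operatorname{tr}\omega\,e_*$, so the induced connection on the square root bundle is $\nabla^{(1/2)}e_*^{1/2}=\tfrac12\operatorname{tr}\omega\,e_*^{1/2}$. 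This gives $\nabla^s=\nabla\otimes1+1\otimes\nabla^{(1/2)}$.

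In the metric case, Proposition~\ref{prop:parallel}(d) gives $\operatorname{tr}\omega=0$ for $\gring$-orthonormal $e$. The trivializing section $e_*^{1/2}$ of Theorem~\ref{thm:fock} is then parallel, so $\nabla^s$ is exactly $\nabla$ on forms. Finally, the curvature matrix of $\nabla^H$ in the Witt-induced frame is $j_*(R(Z_1,Z_2))$, being the curvature of a direct sum of dual connections. Applying Theorem~\ref{thm:compat}(h) with the same normalization as for $\omega^\zeta$ gives $\underline\Omega=R^k{}_le_k\wedge\theta^l$.
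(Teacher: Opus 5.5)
Your proposal is correct and follows the paper's route. You feed the connection matrix $j_*(\omega(Z))$ in the Witt-induced frame into Lemma~\ref{lem:lie}, reduce on $\IH$, and then use Proposition~\ref{prop:parallel}(a) and (d) for the $(\Ext^nTM)^{1/2}$ twist and the metric case. The only variation is that you evaluate $\tfrac12\omega^\zeta_Z(\alpha\fH)$ directly through the Fock formulas \eqref{eq:fock} and $e_k\wedge\theta^l=e_k\theta^l-\delta^l_k$, whereas the paper splits it as $[\tfrac12\omega^\zeta_Z,\alpha]\fH+\alpha\,\tfrac12\omega^\zeta_Z\fH$ and reuses the infinitesimal computations from the proof of Proposition~\ref{prop:lift}; you also spell out the curvature identity, which the paper leaves implicit.
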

\begin{proof}
In the Witt-induced frame the connection matrix of $\nabla^H$ is $j_*(\omega(Z))$, so \eqref{eq:omega-witt} follows from Lemma~\ref{lem:lie}. By the proof of Proposition~\ref{prop:lift}, $\tfrac12\omega_{Z}\,(\alpha\fH)=[\tfrac12\omega_{Z},\alpha]\fH+\alpha\,\tfrac12\omega_{Z}\fH=\big(-\omega_l{}^k(Z)\,\theta^l\wedge i_{e_k}\alpha+\tfrac12\operatorname{tr}\omega(Z)\alpha\big)\fH$, and $Z(\alpha_e)-\omega_l{}^k(Z)\theta^l\wedge i_{e_k}\alpha_e$ is the frame expression of $\nabla_{Z}\alpha$. Since $\nabla_{Z}e_*=\operatorname{tr}\omega(Z)e_*$ (Proposition~\ref{prop:parallel}(a)), the term $\tfrac12\operatorname{tr}\omega$ is the connection form of $\nabla^{(1/2)}$ in the frame $(e_*)^{1/2}$. For a $\gring$-orthonormal frame and a metric connection, $\operatorname{tr}\omega=0$ by Proposition~\ref{prop:parallel}(d).
\end{proof}

\section{Conclusions}\label{sec:conclusions}

We have developed the theory of spinor fields associated with the hyperbolic Clifford algebra $\RH=\Cl(\HV)$. The substitution $\mathbb R_{1,3}\to\RH$, $\Spine\to\SpineH$ in the classical construction of Mosna and Rodrigues \cite{MosnaRodrigues2004} leads to the left and right hyperbolic spin-Clifford bundles
\begin{align*}
\Cl^{l}_{\SpineH}(M) &= P_{\SpineH}(M)\times_l \RH, & l(u)x&=ux,\\
\Cl^{r}_{\SpineH}(M) &= P_{\SpineH}(M)\times_r \RH, & r(u)x&=xu^{-1},
\end{align*}
resting on the principal bundle $P_{\SpineH}(M)$ of \cite{NotteCuello2026} and on the hyperbolic groups $\Gamma_{H_{p,q}}$, $\mathrm{Pin}_{H_{p,q}}$, $\mathrm{Spin}_{H_{p,q}}$. The main differences with the Lorentzian case, all of them consequences of the neutrality of $\HV$, are the following:
\begin{enumerate}[label=(\roman*)]
\item $\RH\simeq M_{2^n}(\mathbb R)$ is split and the canonical primitive idempotent $\fH=\prod_k\tfrac12(1+\sigma_k\sigma_{n+k})$ is real and remains primitive in $\CRH\simeq M_{2^n}(\mathbb C)$. It generates the ideal $\RH\fH=\RH\theta^*\simeq\Ext V^*$ of Witten superfields, whose commutant is $\mathbb R$, so that the real spinor module carries no complex structure (Theorem~\ref{thm:fH}, Propositions~\ref{prop:idempotents}--\ref{prop:commutant});
\item every oriented manifold carries a canonical hyperbolic spin structure (Theorem~\ref{thm:canonical}), which gives an explicit realization of the absence of topological obstructions beyond orientability, established in \cite[\S1.1, pp.~173--175]{Karoubi1968} (see also \cite[\S\S2.3 and 2.8, pp.~10--17]{Gualtieri2004});
\item the spinor bundles are module bundles over $\Cl(\HM)$ and the canonical ideal spinor bundle is $\IH\simeq\Ext T^*M\otimes(\Ext^nTM)^{1/2}$, which reduces to $\Ext T^*M\simeq\Cl(M,\gring)$ in the presence of a metric of any signature (Proposition~\ref{prop:module}, Theorems~\ref{thm:ideal} and~\ref{thm:fock});
\item the isotropic extension of \emph{any} linear connection on $M$ is compatible with the hyperbolic metric (Lemma~\ref{lem:DHcompat}), the non-metricity being encoded in the non-parallelism of the splitting $\HM=H_+\oplus H_-$ (Remark~\ref{rem:relocation}). It induces a hyperbolic spin connection compatible with all the structures above (Theorems~\ref{thm:welldef} and~\ref{thm:compat}), which on $\IH$ is the connection induced on forms twisted by $(\Ext^nTM)^{1/2}$ (Proposition~\ref{prop:explicit}).
\end{enumerate}
These results provide the complete geometric framework (algebra, bundles, module structures, ideal spinor fields, and connections) upon which Dirac operators in the hyperbolic Clifford algebra are constructed, a topic that will be developed in a forthcoming article, together with the extension to non-orientable manifolds through the full group $\mathrm{Spin}_H$ (Remark~\ref{rem:orient}).

\section*{Declaration of generative AI and AI-assisted technologies in the manuscript preparation process}

During the preparation of this work, the authors used Claude (Anthropic) for improving the language and readability of the manuscript, checking the consistency of the bibliographic references against the original sources (in particular the citations of Karoubi, Gualtieri and Lawson--Michelsohn), suggesting revised wording for some proofs, and assisting with the \LaTeX{} editing. After using this tool/service, the authors reviewed and edited the content as needed and took full responsibility for the content of the article.

\section*{Acknowledgments}

JMHS thanks CNPq (Grant No. 307641/2022-8) for financial support.

\end{document}